\newtheorem*{theorem*}{Theorem}
\newtheorem{theorem}{Theorem}[section]\crefname{theorem}{Theorem}{Theorems}
\newtheorem{lemma}[theorem]{Lemma}\crefname{lemma}{Lemma}{Lemmas}
\crefname{claim}{Claim}{Claims}
\newtheorem{proposition}[theorem]{Proposition}\crefname{proposition}{Proposition}{Propositions}
\crefname{observation}{Observation}{Observations}
\newtheorem{corollary}[theorem]{Corollary}\crefname{corollary}{Corollary}{Corollaries}
\crefname{conjecture}{Conjecture}{Conjecture}
\theoremstyle{definition}
\newtheorem{definition}[theorem]{Definition}\crefname{definition}{Definition}{Definitions}
\newtheorem{problem}[theorem]{Problem}\crefname{problem}{Problem}{Problems}
\newtheorem{remark}[theorem]{Remark}\crefname{remark}{Remark}{Remarks}
\newtheorem{example}[theorem]{Example}\crefname{example}{Example}{Examples}
\crefname{condition}{Condition}{Conditions}
\numberwithin{equation}{section}
\let\c@algorithm\relax
\newaliascnt{algorithm}{theorem}
\let\c@table\relax
\newaliascnt{table}{theorem}
\DeclareMathOperator{\diag}{diag}
\newcommand{\GL}{\mathrm{GL}}
\DeclareMathOperator{\maxrank}{maxrank}
\DeclareMathOperator{\minrank}{minrank}
\DeclareMathOperator{\rank}{rank}
\DeclareMathOperator{\SL}{SL}
\DeclareMathOperator{\spec}{spec}
\DeclareMathOperator{\supp}{supp}
\DeclareMathOperator{\Tr}{Tr}
\newcommand{\R}{\mathbb{R}}
\newcommand{\C}{\mathbb{C}}
\newcommand{\Z}{\mathbb{Z}}
\newcommand{\N}{\mathbb{N}}
\def\<#1>{\left\langle\ignorespaces#1\unskip\right\rangle}
\newcommand{\ot}{\otimes}
\DeclarePairedDelimiter\norm{\lVert}{\rVert}
\DeclarePairedDelimiter\floor{\lfloor}{\rfloor}
\DeclareMathOperator{\bordersubrank}{\underline{Q}}
\newcommand{\asympleq}{\lesssim}
\newcommand{\kron}{\boxtimes}
\newcommand{\Proj}{\mathbb{P}}
\newcommand{\tensor}{T} %
\newcommand{\sep}{\mid} %
\newcommand{\G}{\textnormal{GL}}
\newcommand{\Sl}{\textnormal{SL}}
\newcommand{\kronpol}[3]{\Delta(\C^{#1} \ot \C^{#2} \ot \C^{#3})}
\newcommand{\MM}{\mathsf{M}}
\newcommand{\IMM}[2]{\mathsf{M}_{#1}^{#2}}
\newcommand{\unit}[1]{\mathsf{U}_{\smash{#1}}}
\newcommand{\unitk}[2]{\mathsf{U}_{#1}^{#2}}
\newcommand{\polmul}{\mathsf{P}}
\newcommand{\degenleq}{\trianglelefteq}
\newcommand{\degengeq}{\trianglerighteq}
\begin{document}

\title{The moment polytope of matrix multiplication is not maximal}
\author[1,2]{Maxim van den Berg}
\author[3]{Matthias Christandl}
\author[1]{Vladimir Lysikov}
\author[3]{Harold Nieuwboer}
\author[1]{Michael Walter}
\author[2]{Jeroen Zuiddam}

\affil[1]{Ruhr University Bochum, Bochum, Germany}
\affil[2]{University of Amsterdam, Amsterdam, Netherlands}
\affil[3]{University of Copenhagen, Copenhagen, Denmark}

\date{}
\maketitle
\vspace{-2em}
\begin{abstract}
    Moment polytopes of tensors, the study of which is deeply rooted in invariant theory, representation theory and symplectic geometry, have found relevance in numerous places, from quantum information (entanglement polytopes) and algebraic complexity theory (GCT program and the complexity of matrix multiplication) to optimization (scaling algorithms). %
Towards an open problem in algebraic complexity theory, we prove separations between the moment polytopes of matrix multiplication tensors and unit tensors. As a consequence, we find that matrix multiplication moment polytopes are not maximal, i.e.\ are strictly contained in the corresponding Kronecker polytope. 
    As another consequence, we obtain a no-go result for a natural operational characterization of moment polytope inclusion in terms of asymptotic restriction. 
    We generalize the separation and non-maximality to moment polytopes of iterated matrix multiplication tensors. 
    Our result implies that tensor networks 
    where multipartite entanglement structures beyond two-party entanglement are allowed can go beyond projected entangled-pair states (PEPS) in terms of expressivity.

    Our proof characterizes membership of uniform points in moment polytopes of tensors, and establishes a connection to polynomial multiplication tensors via the minrank of matrix subspaces.
    As a result of independent interest, we extend these techniques to obtain a new proof of the optimal border subrank bound for matrix multiplication. %

\end{abstract}

\setcounter{tocdepth}{3}
\tableofcontents
\maketitle

\newpage
\section{Introduction}
\label{subsection:definitions and overview}

Moment polytopes of tensors, the study of which is deeply rooted in invariant theory, representation theory and symplectic geometry \cite{nessStratificationNullCone1984,brion1987momentMapImage}, have found relevance in numerous places, from algebraic complexity theory (in particular, the geometric complexity program \cite{burgisserGeometricComplexity2011, burgisserNonvanishingKroneckerCoefficients2011} and asymptotic spectra \cite{strassenKomplexitatUndGeometrie2005, christandlUniversalPointsAsymptotic2021, wigderson2022asymptotic}) and quantum information (quantum marginals and entanglement polytopes \cite{MR2197548, klyachkoQuantumMarginalProblem2004, christandl2007nonzeroKroneckerCoefficients, walterEntanglementPolytopes2013, sawicki2014momentumMapEntanglement, walter2014thesis}) to optimization (scaling algorithms \cite{burgisser2018tensorScaling,burgisserTheoryNoncommutativeOptimization2019}). These polytopes admit various descriptions, notably in terms of the types of irreducible representations that appear in the Schur--Weyl decomposition of powers of the tensor and in terms of quantum marginals of elements in the orbit closure of the tensor.

Bürgisser and Ikenmeyer \cite[Problem~7.3]{burgisserGeometricComplexity2011}, motivated by questions in algebraic complexity and quantum information, posed as a central open problem to determine the moment polytopes of the matrix multiplication tensors and the unit tensors (diagonal tensors). A construction of Bürgisser, Christandl and Ikenmeyer~\cite[Theorem~1]{burgisserNonvanishingKroneckerCoefficients2011} implies that the moment polytope of the unit tensor contains all points which are uniform on two of the three subsystems. However, no further progress has been made on this problem since then.

We prove separations between moment polytopes of matrix multiplication and unit tensors of varying sizes. As a consequence, we find that the moment polytope of matrix multiplication is not equal to the maximal moment polytope (i.e.~the Kronecker polytope). 
We extend this to separations and non-maximality for iterated matrix multiplication tensors, and derive implications for the expressivity of tensor network representations of quantum states (which has seen much recent interest, e.g. \cite{christandlTensorNetworkRepresentations2020a,christandl2023resourcetheorytensornetworks,landsburg2012geometryOfTensorNetworkStates, bernardi2022dimensionTensorNetwork}).

We summarize our main results here and discuss these in more detail in the rest of the paper:
\begin{itemize}
\item We prove a range of separations between moment polytopes of tensors, namely for matrix multiplication and unit tensors (diagonal tensors). These in particular imply strictness of inclusions of matrix multiplication moment polytopes in Kronecker polytopes.
This constitutes the first progress towards an open problem of Bürgisser and Ikenmeyer~\cite[Problem~7.3]{burgisserGeometricComplexity2011}, going beyond previous constructions of moment polytope points of Bürgisser, Christandl and Ikenmeyer \cite[Theorem~1]{burgisserNonvanishingKroneckerCoefficients2011}.

\item As a central ingredient for the above we further develop minrank of tensors and polynomial multiplication tensors. We use these ingredients to give a new proof of the optimal border subrank upper bound for matrix multiplication tensors that was previously obtained using geometric rank \cite{strassenRelativeBilinearComplexity1987, koppartyGeometricRankTensors2023a}.
\item As another consequence, we rule out an, \emph{a priori} natural, asymptotic characterization of moment polytope inclusion in terms of asymptotic restriction, which originates from the study of matrix multiplication algorithms \cite{strassenRelativeBilinearComplexity1987, christandlUniversalPointsAsymptotic2021}.

\item In the context of quantum information theory, as a consequence of the above, we show there exist joint marginals realizable by pure multipartite quantum states, which are not realizable (even approximately) by matrix product states (MPS) with certain bond dimensions.
This implies that projected entangled-pair states (PEPS) become more expressive when allowing genuine multipartite entanglement structures beyond the standard choice of two-party maximally entangled states.

\end{itemize}

\subsubsection*{Moment polytopes of tensors}

Let $V = \C^a \ot \C^b \ot \C^c$ be the space of $a\times b \times c$ tensors.
The product of general linear groups $\GL = \GL_a \times \GL_b \times \GL_c$ naturally acts on $V$ by local basis transformations.\footnote{We may leave $a,b,c$ implicit when clear from the context and just refer to $\GL$.}
To every $T \in V$ are naturally associated linear maps $T_1\colon \C^a \to \C^b \ot \C^c$,  $T_2\colon \C^b \to \C^a \ot \C^c$ and $T_3\colon \C^c \to \C^a \ot \C^b$ from which we obtain positive semidefinite matrices $T_1\!^* T_1 \in \C^{a \times a}$, $T_2\!^* T_2 \in \C^{b \times b}$ and $T_3\!^* T_3 \in \C^{c \times c}$. %
Let
\begin{align}
\label{definition:moment map marginal}
\mu_i\colon T \mapsto \frac{T_{i}\!^* T_i}{\Tr(T_i\!^*T_i)}.
\end{align}
The moment map $\mu\colon \C^a \ot \C^b \ot \C^c \setminus \{0\} \to \C^{a\times a} \times \C^{b \times b} \times \C^{c \times c}$ is defined by
\begin{align}
\label{definition:moment map}
\mu(T) \coloneqq \big(\mu_1(T), \mu_2(T), \mu_3(T)\big).
\end{align}
Note that $\Tr(T_i^*T_i) = \norm{T}^2$ does not depend on $i$.
For any positive semidefinite matrix $M \in \C^{n \times n}$, let $\spec(M) = (\lambda_1, \lambda_2, \ldots, \lambda_n) \in \R^n$, where $\lambda_1 \geq \lambda_2 \geq \cdots \geq \lambda_n \geq 0$, denote the eigenvalues of~$M$, non-increasingly ordered.
Define
\[
\spec(\mu(T)) \coloneqq \big(\spec(\mu_1(T)) \sep \spec(\mu_2(T)) \sep \spec(\mu_3(T))\big) \in \R^a \times \R^b \times \R^c,
\]
where $\sep$ separates the three components of $\R^a \times \R^b \times \R^c$.
\begin{definition}\label{definition:moment polytope}
For any irreducible algebraic variety $W \subseteq \C^a\ot\C^b\ot\C^c$ that is closed under the action of~$\GL$, the \emph{moment polytope} of $W$ is defined as
\[
\Delta(W) \coloneqq \big\{ \spec(\mu(S)) \mid S \in W\setminus \{0\} \big\} \subseteq \R^a \times \R^b \times \R^c.\,
\]
In the context of quantum information theory, the moment polytopes is also called the \emph{entanglement polytope} of $W$ \cite{walterEntanglementPolytopes2013}.
\end{definition}

Note that because $\GL$ can scale elements of $\C^a\ot\C^b\ot\C^c$, any algebraic variety $W$ that is closed under $\GL$ is an (algebraic) cone.
Because $\mu$ is invariant under scaling, we may equivalently work in projective space. This is the viewpoint of Ness, Mumford and Brion.
We state their result in our setting.
\begin{theorem}[{\cite{nessStratificationNullCone1984,brion1987momentMapImage, walterEntanglementPolytopes2013}\footnote{
To be able to apply the theorem of Mumford--Ness and Brion, we furthermore need to show show that irreducibility of a variety $W$ that is a cone is equivalent to irreducibility of the projective variety $\Proj(W) \subseteq \Proj(V)$ consisting of the lines through $W$, which is not hard to do.
}}]
\label{theorem:tensor moment polytopes nm}
    Let $W \subseteq \C^a\ot\C^b\ot\C^c$ be an irreducible algebraic variety that is closed under the action of $\GL$. Then $\Delta(W)$ is a (bounded convex) polytope with rational vertices.
\end{theorem}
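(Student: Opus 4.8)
The plan is to reduce to the classical nonabelian convexity theorem for projective varieties. First I would pass to projective space. Since $W$ is closed under the scalings contained in $\GL$, it is an affine cone, so $W \setminus \{0\}$ is a dense open---hence irreducible---subset of $W$, and its image under projectivization is an irreducible closed subvariety $\Proj(W) \subseteq \Proj(V)$, stable under $\GL$. (The equivalence of irreducibility of $W$ and of $\Proj(W)$ promised in the footnote is routine: the image of an irreducible set under a morphism is irreducible, and conversely the cone over an irreducible projective variety is irreducible.) Because $\mu$, and hence $\spec \circ \mu$, is invariant under $T \mapsto \lambda T$, the set $\Delta(W)$ depends only on $\Proj(W)$; concretely $\Delta(W) = \{\spec(\mu(S)) : [S] \in \Proj(W)\}$, and $\Proj(W)$ is compact and connected as a closed irreducible subvariety of $\Proj(V)$.

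Next I would identify $\spec\circ\mu$ with a Weyl-chamber-valued moment map. Equip $\Proj(V)$ with the Fubini--Study Kähler form. The maximal compact subgroup $\U_a \times \U_b \times \U_c$ of $\GL$ acts by Hamiltonian symplectomorphisms, and under the identification of $\mathfrak{u}_n^{\,*}$ with the space of $n\times n$ Hermitian matrices via the trace form, the three components of the associated moment map are exactly $[T] \mapsto \mu_i(T)$, the normalized one-body marginals of \eqref{definition:moment map marginal} (up to the harmless constant shift $\tfrac1n I$ absorbed by the normalization convention). Composing with the quotient map $\mathfrak{k}^* \to \mathfrak{k}^*/K$---which for each $\U_n$ factor identifies the coadjoint orbit space with the cone of non-increasingly ordered eigenvalue tuples, i.e. with the closed positive Weyl chamber---yields precisely $\spec \circ \mu$. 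Thus $\Delta(W)$ is the image of $\Proj(W)$ under the Kirwan (Weyl-chamber-valued) moment map of this compact group action.

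Now I would invoke the convexity theorem. By Kirwan's nonabelian convexity theorem, in the form valid for (possibly singular) complex projective varieties \cite{brion1987momentMapImage, nessStratificationNullCone1984}, the image of the compact connected space $\Proj(W)$ under this moment map is a convex polytope with rational vertices. Rationality---the one place where projectivity, and not merely compactness, is used---is most transparent through Mumford's description of the polytope as the closure of $\{\lambda/k : V_\lambda \text{ occurs in the degree-}k \text{ part of the homogeneous coordinate ring } \C[W]\}$ together with finite generation of the associated weight semigroup (Gordan's lemma applied to the finitely generated algebra of highest-weight vectors in $\C[W]$). Boundedness requires nothing further: each $\mu_i(T)$ is positive semidefinite with unit trace, so $\spec(\mu_i(T))$ lies in the simplex of probability vectors in $\R^{a}$ (respectively $\R^b$, $\R^c$), whence $\Delta(W)$ is contained in a product of three simplices. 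This establishes the theorem.

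The main obstacle is that $\Proj(W)$ is in general singular, so the original smooth-manifold version of Kirwan's theorem does not apply directly, and one must appeal to its extension to projective varieties (equivalently, to singular symplectic quotients, as in work of Brion, Guillemin--Sternberg, Mumford--Ness and Sjamaar). The remaining work is bookkeeping: verifying the moment-map dictionary of the second step---that the product unitary action on Fubini--Study space has the reduced density matrices $\mu_i$ as its moment-map components and that $\spec$ is exactly the Weyl-chamber quotient map---and the elementary transfer of irreducibility, connectedness and closedness from $W$ to $\Proj(W)$.
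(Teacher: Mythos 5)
Your argument is correct and follows exactly the route the paper indicates: the paper does not write out a proof but simply cites Ness--Mumford, Brion, and Walter et al.\ with a footnote noting the reduction from the affine cone $W$ to the irreducible projective variety $\Proj(W)$. You have filled in that reduction carefully (scale-invariance of $\mu$, identification of $\spec\circ\mu$ with the Weyl-chamber-valued moment map, irreducibility transfer, boundedness via the simplex of marginal spectra, rationality via the highest-weight description), so this is the same approach with the details spelled out.
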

For any tensor $T \in \C^a\ot\C^b\ot\C^c$, let $\GL\cdot T$ denote its $\GL$-orbit and $\overline{\GL\cdot T}$ the closure in the Euclidean topology (which is the same as the Zariski-closure\footnote{This is because the orbit is constructible \cite[Section~AG.1.3, Corollary~AG.10.2]{borel2012}.}). 
The orbit closure $\overline{\G \cdot T}$ is also irreducible\footnote{The group $\G$ is irreducible as it is connected \cite[Proposition~I.1.2]{borel2012}, which implies that $\overline{\G \cdot T}$ is irreducible as well.}.
The moment polytope of $T$ is then defined as
\[
\Delta(T) \coloneqq \Delta\bigl(\overline{\GL\cdot T}\bigr).
\]

There is a second description of the moment polytope via representation theory \cite{nessStratificationNullCone1984,brion1987momentMapImage,christandlUniversalPointsAsymptotic2021,burgisser2018tensorScaling} and a third description in terms of achievable supports of the tensor under the action of lower triangular matrices \cite{franz2002}. The representation-theoretic characterization exhibits the moment polytope as the set of normalized highest weights of $\G$ whose irreducible representations in $(\C^a\ot\C^b\ot\C^c)^{\ot n}$ have non-zero overlap with $T^{\ot n}$ for some $n > 0$.

\subsubsection*{Kronecker polytope and moment polytopes for matrix multiplication and unit tensors}
The moment polytope $\Delta(\C^a\ot\C^b\ot\C^c)$ for the whole space $\C^a \ot \C^b \ot \C^c$ is called the \emph{Kronecker polytope}. It has an alternative description in terms of the Kronecker coefficients for the symmetric group. For every tensor $T \in \C^a\ot\C^b\ot\C^c$, we have  $\Delta(T) \subseteq \Delta(\C^a\ot\C^b\ot\C^c)$ and it can be shown that for generic~$T$, we have $\Delta(T) = \Delta(\C^a\ot\C^b\ot\C^c)$.%
\footnote{That is, there is a non-empty Zariski-open subset $U \subseteq \C^a\ot\C^b\ot\C^c$ such that for every $T \in U$, $\Delta(T) = \Delta(\C^a\ot\C^b\ot\C^c)$. In particular, it holds for a random tensor $T$ with probability one.}
Determining $\Delta(\C^a \ot \C^b \ot \C^c)$ is a hard problem with a long history; we give an overview below.

Motivated by geometric complexity theory (and in particular the study of the matrix multiplication exponent), Bürgisser and Ikenmeyer \cite{burgisserGeometricComplexity2011} posed the problem of determining the moment polytopes of the unit tensors and matrix multiplication tensors.
For $r \in \N$, the unit tensor of rank $r$ is defined as $\unit{r} \coloneqq \sum_{i=1}^r e_i \ot e_i \ot e_i$.
For $n_1,n_2,n_3\in \N$, the matrix multiplication tensors are defined as
\[
  \MM_{n_1,n_2,n_3} \coloneqq \sum_{i=1}^{n_1} \sum_{j=1}^{n_2} \sum_{k=1}^{n_3} e_{i,j} \ot e_{j,k} \ot e_{k,i} \ \in\  \C^{n_1 n_2} \ot \C^{n_2 n_3} \ot \C^{n_3 n_1},
\]
where $e_{i,j}$ denote standard basis vectors. We take $e_{i,j}$ to equal the standard basis matrix (of the correct size) with a 1 at position $(i,j)$, flattened to a vector along the rows.
We define $\MM_{n} = \MM_{n,n,n} \in \C^{n^2} \otimes \C^{n^2} \otimes \C^{n^2}$.
\begin{problem}[\cite{burgisserGeometricComplexity2011}]
Determine $\Delta(\unit{n})$ and $\Delta(\MM_{n})$.
\end{problem}

There is much previous work on characterizations of the Kronecker polytope in various formats. We provide a brief overview. See also \cite[Chapter~3]{walter2014thesis} for a detailed account.
Various complete mathematical descriptions of the Kronecker polytope of arbitrary formats have been found using different techniques \cite{nessStratificationNullCone1984,berenstein2000coadjointOrbitsMomentPolytope,franz2002,ressayre2010generalizedEigenvalueProblem,ressayre2011generalizedEigenvalueProblemII,vergneInequalitiesMomentCone2017}.
They, among other techniques, have been used to determine explicit descriptions of the inequalities (i.e.\ as concrete lists of numbers) for the following formats: $3 \times 3 \times 3$ \cite{franz2002}, $2 \times 2 \times 4$ \cite{bravyi2004requirementsQuantumStates}, $2 \times 2 \times \cdots \times 2$ \cite{higuchi2003qubitReducedStates},
$2 \times 2 \times 2 \times 8$,$\ $
$2 \times 2 \times 2 \times 2 \times 16$,$\ $
$3 \times 3 \times 9$, $\ $
$2 \times n \times 2n$,$\ $
$2\times 2\times 3 \times 12$ \cite{klyachkoQuantumMarginalProblem2004} (based on \cite{berenstein2000coadjointOrbitsMomentPolytope} and the connection to Kronecker coefficients),
and $4 \times 4 \times 4$ \cite{vergneInequalitiesMomentCone2017} (using techniques related to those in \cite{ressayre2010generalizedEigenvalueProblem}).
For all results, the formats with system-wise lesser or equal dimensions can also be obtained, as can permutations of the formats.

For moment polytopes $\Delta(T)$ of specific tensors $T$ much less is known.
Bürgisser, Christandl and Ikenmeyer \cite{burgisserNonvanishingKroneckerCoefficients2011} used methods from quantum information theory to construct a large class of points in the Kronecker polytopes and unit tensor moment polytopes.
For $m \in \N$ let $u_m = (1/m,\ldots, 1/m) \in \R^m$ be the uniform probability vector of length $m$.
They proved that for any non-increasing probability vector $q = (q_1,q_2,\ldots, q_{n^2})$ the point $(q \sep u_n \sep u_n)$ is in $\Delta(\C^{n^2} \ot \C^n \ot \C^n)$. In fact, their construction gives that for any non-increasing probability vector $q = (q_1, q_2, \ldots, q_n)$ the point $(q \sep u_n \sep u_n)$ is in $\Delta(\unit{n})$.%
\footnote{The statement of Theorem~1 in~\cite{burgisserNonvanishingKroneckerCoefficients2011} is about the Kronecker polytope for format~$(n^2,n,n)$, but their construction for points in~$(n,n,n)$ only requires tensors of rank~$\leq n$.
Given a probability distribution~$q$ on~$n$ outcomes, the point~$p = (q \sep u_n \sep u_n)$ arises from the tensor~$T = \sum_{k=1}^n (\sum_{j=1}^n \sqrt{q_j} \zeta_n^{jk}) e_j \ot e_k \ot e_k$, where~$\zeta_n$ is a primitive~$n$-th root of unity. This tensor~$T$ in the~$\G$-orbit closure of~$\<n>$, as it has rank at most~$n$.}
The moment polytopes of every tensor of format $2 \times 2 \times 2$ and $2 \times 2 \times 2 \times 2$ were determined in
\cite{han2004compatibleConditionsEntanglement,sawicki2013threeQubits} and \cite{walterEntanglementPolytopes2013} respectively.

Using the tensor scaling algorithm from \cite{burgisser2018tensorScaling}, there are numerical methods to determine whether $\Delta(T) = \Delta(\C^a\ot\C^b\ot\C^c)$, whenever $\Delta(\C^a\ot\C^b\ot\C^c)$ is known.
More precisely, tensor scaling algorithms decide the membership problem: whether a given point $p$ is in $\Delta(T)$. Although not efficient in general, yes-instances are determined fast in practice. Letting $p$ range over the vertices of $\Delta(\C^a\ot\C^b\ot\C^c)$, we can determine whether they also lie in $\Delta(T)$. 

\subsubsection*{Moment polytope separations} %

Our first result is a range of separations between moment polytopes of matrix multiplication and unit tensors. Recall that $u_n \in \R^n$ denotes the uniform probability vector. We implicitly pad this vector with zeroes so that $u_n \in \R^m$ for any $m \geq n$ is the uniform probability vector on the first~$n$ coefficients.
\begin{theorem}
\label{theorem:matrix multiplication}
For every $n \in \N$, $\Delta(\unit{n^2}) \not\subseteq \Delta(\MM_{n})$.
More generally, for every $c,n \in \N$, if $n^2-n+1 < c \leq n^2$, then $\Delta(\unit{c}) \not\subseteq \Delta(\MM_{n})$.
Namely, the point $p_c\coloneq( u_2 \sep u_{c-1} \sep u_c )$ satisfies $p_c \in \Delta(\unit{c})$ and $p_c \notin \Delta(\MM_{n})$. %
\end{theorem}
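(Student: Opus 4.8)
The statement has two halves --- the membership $p_c\in\Delta(\unit{c})$, and the non-membership $p_c\notin\Delta(\MM_{n})$ under the hypothesis $n^2-n+1<c\le n^2$ --- which I would prove separately. For the first half, recall that $\closure{\GL\cdot\unit{c}}$ contains every tensor of tensor rank at most $c$ (such a tensor is a restriction of $\unit{c}$, hence lies in its orbit closure), so it suffices to exhibit a single rank-$\le c$ tensor $S$ with $\spec(\mu(S))=p_c$. I would take the ``weighted $L_{c-1}$'' pencil
\[
 S=\tfrac{1}{\sqrt2}\Bigl(e_1\ot{\textstyle\sum_{i=1}^{c-1}}\sqrt{a_i}\,e_i\ot e_i\ +\ e_2\ot{\textstyle\sum_{i=1}^{c-1}}\sqrt{a_{c-i}}\,e_i\ot e_{i+1}\Bigr)\ \in\ \C^2\ot\C^{c-1}\ot\C^{c},\qquad a_k=\tfrac{2(c-k)}{c(c-1)}.
\]
The identities $a_i+a_{c-i}=\tfrac{2}{c-1}$ and $a_j+a_{c+1-j}=\tfrac{2}{c}$ give directly $\mu_1(S)=\tfrac12 I_2$, $\mu_2(S)=\tfrac1{c-1}I_{c-1}$ and $\mu_3(S)=\tfrac1c I_c$, so $\spec(\mu(S))=(u_2\sep u_{c-1}\sep u_c)=p_c$; and since all $a_k>0$, every nonzero combination of the two slices of $S$ is a bidiagonal $(c-1)\times c$ matrix of full rank $c-1$, so $S$ is (up to a diagonal rescaling) the pencil $L_{c-1}$ and hence has tensor rank $c$. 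This settles $p_c\in\Delta(\unit{c})$.

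For the non-membership I would argue by contradiction. By the definition of the moment polytope, $p_c\in\Delta(\MM_{n})$ would mean there is $S\in\closure{\GL\cdot\MM_{n}}$ with $\spec(\mu(S))=p_c$; then each $\mu_i(S)$ equals $\tfrac1{r_i}$ times an orthogonal projection of rank $r_i\in\{2,c-1,c\}$, so after choosing bases adapted to the images of these projections, $S$ becomes a \emph{concise} tensor in $\C^2\ot\C^{c-1}\ot\C^{c}$ with maximally mixed marginals. Since $\mu_1(S)$ has rank $2$, such an $S$ is a matrix pencil $(S_1,S_2)$ of $(c-1)\times c$ matrices, and maximal mixedness reads $\langle S_1,S_2\rangle=0$, $\norm{S_1}=\norm{S_2}$, $S_1S_1^{*}+S_2S_2^{*}\propto I_{c-1}$ and $S_1^{*}S_1+S_2^{*}S_2\propto I_{c}$ (the last two sharpening conciseness in the second and third factors). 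This is the ``characterization of uniform points'' the paper refers to, specialized to $\MM_n$.

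The contradiction must come from the rank structure of $\MM_{n}$. Up to the $\GL$-action, the first flattening of $\MM_{n}$ has image the matrix space $\{X\ot I_n:X\in\Mat_n\}$: every nonzero element has rank a multiple of $n$, every two-dimensional subpencil is the $n$-fold multiple of a pencil of $n\times n$ matrices, and --- since $\{\det X=0\}$ is a hypersurface --- every two-dimensional subspace contains an element of rank at most $n(n-1)=n^2-n$. Because invertible transformations preserve ranks while passing to a limit and restricting to the $(c-1)$- and $c$-dimensional subspaces of the last two factors can only lower them, the pencil $(S_1,S_2)$ --- a degeneration of such a subpencil of $\MM_{n}$ --- must contain an element of rank at most $n^2-n$, i.e.\ $\minrank(S)\le n^2-n$. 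On the other hand, I expect that the conjunction of maximal mixedness, conciseness, and this rank deficiency becomes impossible once $c-1>n^2-n$: the low-rank combination $sS_1+tS_2$ forces a large common kernel on which $S_2$ (respectively $S_1$) is, by the marginal identities, a rescaled isometry, and --- using that for a regular ancestor $\det$ has $n$ roots (hence several such deficient points, whose kernels intersect trivially by the last marginal identity), while a singular ancestor caps the maximal rank at $n^2-n$ --- the available ``room'' in $\C^{c-1}\oplus\C^{c-1}$ and $\C^c$ runs out exactly at $c=n^2-n+1$. Making this precise --- pinning down which Kronecker types of maximally mixed concise $(c-1)\times c$ pencils survive in $\closure{\GL\cdot\MM_{n}}$, with the spaces of degree-$<m$ polynomial-multiplication matrices serving as the extremal matrix subspaces in which every nonzero element has full rank, and verifying that $c>n^2-n+1$ is exactly the threshold at which none remain --- is the main obstacle; the hypothesis $c\le n^2$ enters only to ensure $\unit{c}$ has a factor no larger than those of $\MM_{n}$, so that comparing $\Delta(\unit{c})$ and $\Delta(\MM_{n})$ is meaningful at all.
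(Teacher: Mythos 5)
Your first half is correct and is essentially the content the paper gets from \cref{lemma:S_c closed SL orbit} and \cref{lem:ranksab}: your explicitly constructed $S$ is exactly a rescaled copy of the paper's tensor $\polmul_c'$, which is $\GL$-equivalent to the pencil $\polmul_c=\polmul_{2,c-1}$ and has tensor rank $c$, hence lies in $\overline{\GL\cdot\unit{c}}$ and witnesses $p_c\in\Delta(\unit{c})$. (The paper instead extracts the membership cleanly from the general equivalence in \cref{lemma:tensor semistability} plus $\unit{c}\geq\polmul_c$, but your explicit witness is a legitimate shortcut.)

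The second half has the right ingredients but a genuine gap, which you yourself flag as ``the main obstacle.'' You correctly observe that the first flattening image of $\MM_n$ is $\{X\kron I_n:X\in\Mat_n\}$, that the determinant hypersurface forces a rank-$\le n^2-n$ element in any $2$-dimensional slice span, and that limits and restrictions cannot raise this bound --- this is precisely the content of \cref{prop:minrank monotonicity} and \cref{lem:ingr-1}, so $\minrank(S)\leq n^2-n$. What you do not establish is the other leg of the contradiction: that any concise $2\times(c-1)\times c$ tensor $S$ with maximally mixed marginals must have $\minrank(S)=c-1$. The marginal identities you write down do not by themselves rule out a rank-deficient slice, and the dimension/kernel count you sketch (``regular ancestor'' vs.\ ``singular ancestor,'' room running out at $c=n^2-n+1$) is too vague to close this; in particular it tries to reason about where the pencil comes from inside $\MM_n$, when what is actually needed is an intrinsic statement about $S$ alone. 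The paper closes exactly this gap via the classification of matrix pencils: Pokrzywa's result that $\overline{\GL\cdot\polmul_c}=\C^2\ot\C^{c-1}\ot\C^c$ (\cref{lemma:matrix pencils}), the Kempf--Ness theorem showing $\polmul_c$ has closed $\SL$-orbit (\cref{lemma:S_c closed SL orbit}), and the orbit-boundary lemma (\cref{lemma:orbit border}) together give that $\GL\cdot\polmul_c$ is the \emph{unique} $\SL$-semistable orbit in this format (\cref{corollary:S_c all semistable}). Combined with \cref{lemma:tensor semistability}, this says $p_c\in\Delta(T)\iff T\geq\polmul_c$, and since $\minrank(\polmul_c)=c-1$ (\cref{lem:ingr-2}) is a $\GL$-invariant, the contradiction follows. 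Without that uniqueness statement your argument does not go through.
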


From symmetries of the tensors $\unit{c}$ and $\MM_{n}$, it follows from \autoref{theorem:matrix multiplication} that for every $\pi \in S_3$ we have that $\pi \cdot p_c \in \Delta(\unit{c})$ and $\pi \cdot p_c \not\in \Delta(\MM_{n})$, where the symmetric group $S_3$ naturally acts on $\R^{n^2}\times\R^{n^2}\times\R^{n^2}$ by permuting the three components.%
\footnote{The argument is as follows.
The symmetric group $S_3$ acts on any tensor space $\C^a \otimes \C^b \otimes \C^c$ by permuting the factors, and similarly acts on any moment polytope by permuting the three components of its ambient space $\R^a \times \R^b \times \R^c$.
The unit tensor $\unit{c}$ is $S_3$-invariant.
The matrix multiplication tensor $\MM_{n}$, although not $S_3$-invariant, has the property that $\pi \cdot \MM_{n} \in \GL\cdot\MM_{n}$ for every $\pi \in S_3$.
As a result, the moment polytopes $\Delta(\unit{c})$ and $\Delta(\MM_{n})$ are $S_3$-invariant.
From \cref{theorem:matrix multiplication} it thus follows that for every $\pi \in S_3$ we have $\pi \cdot p_c \in \Delta(\unit{c})$ and $\pi \cdot p_c \not\in \Delta(\MM_{n})$.}

\cref{theorem:matrix multiplication} in particular implies that the matrix multiplication tensor $\MM_{n}$ does not have maximal moment polytope, as follows. %
\begin{corollary}
    The inclusion $\Delta(\MM_{n}) \subseteq \kronpol{n^2}{n^2}{n^2}$ is a strict inclusion.
\end{corollary}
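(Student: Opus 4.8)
The plan is to read off the strictness directly from \cref{theorem:matrix multiplication}. The starting observation is that $\unit{n^2}$ and $\MM_{n}$ both have format $n^2 \times n^2 \times n^2$, so their moment polytopes live in the same ambient space $\R^{n^2}\times\R^{n^2}\times\R^{n^2}$, and the maximal moment polytope for this format is precisely $\kronpol{n^2}{n^2}{n^2}$. Applying the general containment $\Delta(T) \subseteq \Delta(\C^a\ot\C^b\ot\C^c)$ recalled above, once with $T = \MM_{n}$ and once with $T = \unit{n^2}$, gives $\Delta(\MM_{n}) \subseteq \kronpol{n^2}{n^2}{n^2}$ (the inclusion of the corollary) and $\Delta(\unit{n^2}) \subseteq \kronpol{n^2}{n^2}{n^2}$.

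To see that the first inclusion is strict, I would specialize \cref{theorem:matrix multiplication} to $c = n^2$, which is admissible because $n^2 - n + 1 < n^2$ for all $n \geq 2$. This yields the explicit point $p_{n^2} = ( u_2 \sep u_{n^2-1} \sep u_{n^2} )$ with $p_{n^2} \in \Delta(\unit{n^2})$ and $p_{n^2} \notin \Delta(\MM_{n})$. Since $\Delta(\unit{n^2}) \subseteq \kronpol{n^2}{n^2}{n^2}$, the point $p_{n^2}$ certifies $\kronpol{n^2}{n^2}{n^2} \not\subseteq \Delta(\MM_{n})$, and combined with the inclusion $\Delta(\MM_{n}) \subseteq \kronpol{n^2}{n^2}{n^2}$ this gives $\Delta(\MM_{n}) \subsetneq \kronpol{n^2}{n^2}{n^2}$. (For $n = 1$ the claimed strictness is vacuous or false depending on convention, since $\MM_{1} = \unit{1}$ and $\kronpol{1}{1}{1}$ is a single point; one reads the corollary for $n \geq 2$.)

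There is no genuine obstacle in this deduction itself: all the difficulty is concentrated in \cref{theorem:matrix multiplication}, whose proof — characterizing membership of uniform points in $\Delta(\MM_{n})$ and linking it, through the minrank of matrix subspaces, to polynomial multiplication tensors — is carried out separately. The only routine points worth verifying explicitly are that $\unit{n^2} = \sum_{i=1}^{n^2} e_i \ot e_i \ot e_i$ really is a tensor of format $(n^2, n^2, n^2)$, so that $\Delta(\unit{n^2})$ is honestly a subset of $\kronpol{n^2}{n^2}{n^2}$, and that $\Delta(\unit{n^2})$ and $\kronpol{n^2}{n^2}{n^2}$ use the same normalization of $\spec(\mu(\cdot))$, which holds since both are instances of \cref{definition:moment polytope}.
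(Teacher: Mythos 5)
Your argument is correct and takes essentially the same route as the paper: both deduce strictness of $\Delta(\MM_{n}) \subseteq \kronpol{n^2}{n^2}{n^2}$ from the separation $\Delta(\unit{n^2}) \not\subseteq \Delta(\MM_{n})$ of \cref{theorem:matrix multiplication} together with the trivial containment $\Delta(\unit{n^2}) \subseteq \kronpol{n^2}{n^2}{n^2}$. You merely unpack the final step by exhibiting the explicit witness point $p_{n^2}$, and you correctly flag the $n=1$ edge case (where $\MM_1 = \unit{1}$ and the Kronecker polytope is a single point), which the paper leaves implicit.
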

\begin{proof}
$\Delta(\unit{n^2})$ and $\Delta(\MM_{n})$ are both contained in $\kronpol{n^2}{n^2}{n^2}$. From \cref{theorem:matrix multiplication} we have $\Delta(\unit{n^2}) \not\subseteq \Delta(\MM_{n})$, so the claim follows.
\end{proof}
\cref{theorem:matrix multiplication} leaves open whether the inclusion $\Delta(\MM_{n}) \subseteq \Delta(\unit{n^2})$ holds.
More generally, while it is known that $\kronpol{m}{m}{m} = \Delta(\unit{m})$ for every $m \in \{2,3,4\}$, this equality is open for all larger $m$.
If true, then $\Delta(\MM_{n}) \subsetneq \Delta(\unit{n^2})$.
The separating point in \cref{theorem:matrix multiplication} for $n=2,c=4$ we first observed using an algorithm that we developed to compute moment polytopes \cite{vandenBerg2025momentPolytopeAlgorithm}, which in turn inspired the theorem.
Namely, we found computationally
that $p_4 = (\tfrac12,\tfrac12,0,0 \sep \frac13,\frac13,\frac13,0 \sep \frac14, \frac14, \frac14, \frac14 )$, which is an element of $\kronpol{4}{4}{4}$,
is not an element of $\Delta(\MM_2)$. Moreover, $\Delta(\unit{4})$ equals $\kronpol{4}{4}{4}$, as can be seen using the tensor scaling algorithm \cite{burgisser2018tensorScaling} and knowledge of the vertices of $\kronpol{4}{4}{4}$, which were determined in \cite{vergneInequalitiesMomentCone2017}.

Our approach to proving \cref{theorem:matrix multiplication} is to show that $\MM_{n}$ cannot degenerate to polynomial multiplication tensors of certain shapes.
For shape $2 \times (c-1) \times c$, such a degeneration is required to have $(u_2 \sep u_{c-1} \sep u_c)$ as an element of the moment polytope.
We achieve this by upper bounding the \emph{minrank} of $\MM_{n}$ under degenerations. Minrank is defined as the smallest rank among the non-zero matrices in the slice span.

\subsubsection*{Higher-order tensors}

We extend \cref{theorem:matrix multiplication} to iterated matrix multiplication tensors and unit tensors of order~$k$, for any $k\geq 3$. The definition of moment polytopes generalizes naturally to the space of tensors of order $k$, $V = \C^{n_1} \ot \cdots \ot \C^{n_k}$, with the action of $\GL = \GL_{n_1} \times \cdots \times \GL_{n_k}$.
Let $\unitk{r}{k} = \sum_{j=1}^{r} e_{j} \ot \dotsb \ot e_{j} \in (\C^r)^{\ot k}$ denote the unit tensor of rank $r$ and order $k$.
Let $\IMM{n}{k} \in (\C^{n^2})^{\ot k}$ be the iterated matrix multiplication tensor of order $k$ for $n \times n$ matrices, which is defined as
\[
\IMM{n}{k} \coloneqq \sum_{i \in [n]^k} \bigotimes_{\ell=1}^{k} e_{i_{\ell},i_{\ell+1}},
\]
where we set $i_{k+1}=i_1$.

\begin{theorem}\label{cor:it-mat}
    \label{cor:IMM does not have generic polytope}
    For every $n$ and $c$ such that $n^2 - n + 1 < c \leq n^2$, 
    we have that $\Delta(\unitk{c}{k}) \not\subseteq \Delta(\IMM{n}{k})$.
\end{theorem}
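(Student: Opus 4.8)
The plan is to reduce the statement to the order-three case, \cref{theorem:matrix multiplication}, by padding the separating point and by grouping legs. As candidate separating point I would take
\[
p_c^{(k)}\coloneqq\bigl(u_2\sep u_{c-1}\sep u_1\sep\dots\sep u_1\sep u_c\bigr)\in\R^{n^2}\times\dots\times\R^{n^2},
\]
with $u_1=(1,0,\dots,0)$ in legs $3,\dots,k-1$; for $k=3$ there are no middle legs and this is exactly $p_c$, so the claim is \cref{theorem:matrix multiplication} itself.

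\textbf{Membership.} The tensor $\unit{c}\ot e_1^{\ot(k-3)}$ --- the order-three unit tensor on legs $1,2,k$, tensored with the fixed basis vector $e_1$ on each of legs $3,\dots,k-1$ --- has rank at most $c$, hence lies in $\overline{\GL\cdot\unitk{c}{k}}$, so its moment polytope is contained in $\Delta(\unitk{c}{k})$. That moment polytope equals $\{(q_1\sep q_2\sep u_1\sep\dots\sep u_1\sep q_3):(q_1\sep q_2\sep q_3)\in\Delta(\unit{c})\}$, and since $(u_2\sep u_{c-1}\sep u_c)\in\Delta(\unit{c})$ by \cref{theorem:matrix multiplication}, we get $p_c^{(k)}\in\Delta(\unitk{c}{k})$.

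\textbf{Non-membership.} For an order-$k$ tensor $T$ write $\widetilde T$ for the order-three tensor obtained by merging legs $3,\dots,k$ into a single leg of dimension $n^{2(k-2)}$. A direct computation gives $\widetilde{\IMM{n}{k}}=\sum_{i_1,i_2,i_3}e_{i_1,i_2}\ot e_{i_2,i_3}\ot v_{i_3,i_1}$ with $v_{i_3,i_1}=\sum_{i_4,\dots,i_k}e_{i_3,i_4}\ot\dots\ot e_{i_k,i_1}$; the $n^2$ vectors $v_{i_3,i_1}$ have pairwise disjoint supports and so are linearly independent, whence a change of basis on the merged leg turns $\widetilde{\IMM{n}{k}}$ into $\MM_{n}$, embedded into $\C^{n^2}\ot\C^{n^2}\ot\C^{n^{2(k-2)}}$ through an $n^2$-dimensional subspace of the third factor. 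Since every tensor in $\overline{\GL\cdot\widetilde{\IMM{n}{k}}}$ has rank at most $n^2$ in the third leg, one checks --- moving that leg's support into the standard $\C^{n^2}$ by the $\GL$-action and composing the witnessing degeneration with the coordinate projection $\C^{n^{2(k-2)}}\to\C^{n^2}$ --- that $\Delta(\widetilde{\IMM{n}{k}})$ is contained in the image of $\Delta(\MM_{n})$ under padding the third component with zeros. Now suppose $p_c^{(k)}\in\Delta(\IMM{n}{k})$, witnessed by $T\in\overline{\GL\cdot\IMM{n}{k}}$ with $\spec(\mu(T))=p_c^{(k)}$. Its marginals on legs $3,\dots,k-1$ being rank one, a Schmidt decomposition shows $T=w_3\ot\dots\ot w_{k-1}\ot T_{1,2,k}$ for unit vectors $w_j$ and an order-three tensor $T_{1,2,k}$ with $\spec(\mu(T_{1,2,k}))=(u_2\sep u_{c-1}\sep u_c)$. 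Merging legs $3,\dots,k$, and using that $\GL_{n^2}^{\times k}$ embeds into $\GL_{n^2}\times\GL_{n^2}\times\GL_{n^{2(k-2)}}$, yields $\widetilde T\in\overline{\GL\cdot\widetilde{\IMM{n}{k}}}$ whose third marginal is $(w_3w_3^*)\ot\dots\ot(w_{k-1}w_{k-1}^*)\ot\rho_k(T_{1,2,k})$, of spectrum $u_c$ padded with zeros. Hence $(u_2\sep u_{c-1}\sep u_c)\in\Delta(\widetilde{\IMM{n}{k}})$, so $p_c\in\Delta(\MM_{n})$, contradicting \cref{theorem:matrix multiplication}. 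Therefore $p_c^{(k)}\notin\Delta(\IMM{n}{k})$.

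\textbf{Main obstacle.} I expect the technical heart to be the moment-polytope bookkeeping around leg-merging: checking that merging legs carries $\overline{\GL\cdot T}$ into $\overline{\GL\cdot\widetilde T}$, and, conversely, that a point of $\Delta(\widetilde{\IMM{n}{k}})$ whose merged component is supported on at most $n^2$ coordinates genuinely descends to a point of $\Delta(\MM_{n})$. The remaining pieces --- the rank bound exhibiting $p_c^{(k)}\in\Delta(\unitk{c}{k})$, the Schmidt-type factorization of $T$, and the identity $\widetilde{\IMM{n}{k}}\cong\MM_{n}$ --- are routine.
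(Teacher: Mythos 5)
Your proof is correct and reaches the same conclusion as the paper's, but the reduction to the order-three case is organized differently. The paper peels off one leg at a time: it proves a slice identity (their Lemma~\ref{lemma:rank 1 restriction of MM restriction}) showing that a rank-one restriction on leg~$k$ of $\IMM{n}{k}$ factors as a restriction of $\IMM{n}{k-1}$ tensored with a vector, then uses an approximation argument to upgrade this to the polytope identity $\{q\in\Delta(\IMM{n}{k}):q_k=u_1\}=\Delta(\IMM{n}{k-1})\times\{u_1\}$ (Lemma~\ref{lem:IMM polytope reduction}), and iterates this $k-3$ times. You instead merge legs $3,\dots,k$ in a single step, observe that $\widetilde{\IMM{n}{k}}$ is $\MM_n$ embedded into a larger third factor (hence has the same moment polytope after removing padding, by Proposition~\ref{proposition:moment polytope embedding}), and combine this with the rank-one factorization of a witnessing $T$ across legs $3,\dots,k-1$. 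Both approaches hinge on the same two ideas --- rank-one marginals force the tensor to factor, and $\IMM{n}{k}$ with legs contracted/merged collapses to a lower-order IMM --- but your one-shot merge avoids the iteration and the explicit $(I\ot\cdots\ot ww^*)T_i$ approximation the paper needs in the proof of Lemma~\ref{lem:IMM polytope reduction}, at the small cost of carefully tracking the group homomorphism $\GL_{n^2}^{\times k}\to\GL_{n^2}\times\GL_{n^2}\times\GL_{n^{2(k-2)}}$ through the merge. Your choice to place $u_c$ on leg $k$ rather than leg $3$ is immaterial (both give valid separating points, though they are not obviously in the same $S_k$-orbit); the paper's placement on leg~$3$ makes the iterative reduction on the trailing legs slightly more natural, while yours makes the single merge cleaner.
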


In the language of tensor networks,
\cref{cor:IMM does not have generic polytope} can be interpreted as follows.

\begin{corollary}
    For~$k \geq 3$ parties and $n,c \in \N$ such that~$n^2 - n + 1 < c \leq n^2$, there are marginals that can be realized by applying local maps to a GHZ state with $c$ levels, but are not realizable as a matrix product state with bond dimension $n$ and periodic boundary conditions. 
\end{corollary}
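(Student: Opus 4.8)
The plan is to translate the tensor-network statement directly into the moment polytope language of \cref{cor:IMM does not have generic polytope} and then invoke that theorem. First I would recall the standard dictionary between matrix product states with periodic boundary conditions and the iterated matrix multiplication tensor: an MPS on $k$ parties with bond dimension $n$ and periodic boundary conditions is, by definition, obtained from $\IMM{n}{k} \in (\C^{n^2})^{\ot k}$ by applying local linear maps $A_\ell \colon \C^{n^2} \to \C^{d_\ell}$ to each tensor factor (contracting the virtual indices around the cycle is exactly the summation defining $\IMM{n}{k}$). Hence the set of (unnormalized) quantum states realizable as such MPS is precisely the image of $\IMM{n}{k}$ under the action of $\GL_{d_1} \times \cdots \times \GL_{d_k}$ composed with the obvious inclusions, i.e.\ it is contained in $\overline{\GL \cdot \IMM{n}{k}}$ once we allow the $d_\ell$ to be at most $n^2$ (and taking closures accounts for approximate realizability). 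Likewise, a state obtained by applying local maps to the $c$-level GHZ state $\unitk{c}{k} = \sum_{j=1}^c e_j^{\ot k}$ lies in $\overline{\GL \cdot \unitk{c}{k}}$.

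The key step is then the observation that the tuple of one-body marginals of a pure state $\ket{\psi}$ is, up to normalization and ordering of eigenvalues, exactly $\spec(\mu(\psi))$ as defined in the excerpt: the $\ell$-th reduced density matrix of $\ket{\psi}$ is proportional to $(\psi_\ell)^* \psi_\ell$ where $\psi_\ell$ is the $\ell$-th flattening. Therefore the set of marginal spectra realizable (even approximately) by applying local maps to $\unitk{c}{k}$ is $\Delta(\unitk{c}{k})$, and the set realizable by bond-dimension-$n$ periodic MPS is contained in $\Delta(\IMM{n}{k})$ — here I would note that allowing the local output dimensions $d_\ell$ to be smaller than $n^2$ only shrinks this set, so the inclusion in $\Delta(\IMM{n}{k})$ is safe regardless of the $d_\ell$. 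By \cref{cor:IMM does not have generic polytope}, for $n^2 - n + 1 < c \leq n^2$ we have $\Delta(\unitk{c}{k}) \not\subseteq \Delta(\IMM{n}{k})$, so there is a point $p$ — concretely the order-$k$ analogue of $p_c$, uniform marginals $u_2, u_{c-1}, u_c$ on three of the parties — that lies in $\Delta(\unitk{c}{k})$ but not in $\Delta(\IMM{n}{k})$. Pulling this back through the dictionary yields a tuple of marginals realized by local maps on the $c$-level GHZ state but not by any bond-dimension-$n$ periodic MPS, even approximately (since $\Delta(\cdot)$ is closed and already accounts for orbit closures).

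The main obstacle is purely expository rather than mathematical: one must be careful that the tensor-network formalism really does correspond to the orbit-closure picture, in particular that (i) contracting the periodic MPS network is literally the sum defining $\IMM{n}{k}$ and not some twisted variant, (ii) "applying local maps" is allowed to be rank-decreasing and to change the physical dimension, which is why we get containment in $\Delta(\IMM{n}{k})$ rather than equality, and (iii) "approximately realizable" corresponds to membership in the Euclidean closure of the orbit, which is already baked into the definition $\Delta(T) = \Delta(\overline{\GL \cdot T})$. Once these identifications are spelled out, the corollary is an immediate restatement of \cref{cor:IMM does not have generic polytope}; no further computation is needed.
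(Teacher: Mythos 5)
Your proposal is correct and takes essentially the same route as the paper: the paper does not give a formal proof of this corollary, but the explanatory paragraph that follows it sets up exactly the dictionary you describe (periodic MPS with bond dimension $n$ $\leftrightarrow$ local maps applied to $\IMM{n}{k}$, GHZ $\leftrightarrow$ $\unitk{c}{k}$, one-body marginal spectra $\leftrightarrow$ $\spec(\mu(\cdot))$, approximate realizability $\leftrightarrow$ orbit closure), after which the statement is an immediate consequence of \cref{cor:IMM does not have generic polytope}. The only small gap is that your separating point should also record the trivial marginals $u_1 = (1,0,\dotsc,0)$ on the remaining $k-3$ parties, matching the point $(u_2 \sep u_{c-1} \sep u_c \sep u_1 \sep \cdots \sep u_1)$ used in the paper's proof of \cref{cor:IMM does not have generic polytope}.
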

In the language of \cite{christandlTensorNetworkRepresentations2020a,christandl2023resourcetheorytensornetworks}, a matrix product state with periodic boundary conditions is precisely a projected entangled-pair state (PEPS) on the $k$-cycle graph $C_k$.
This entanglement structure is described by pairwise level-$n$ maximally entangled states for every edge of~$C_k$ ($\IMM{n}{k}$).
We may replace this by the entanglement structure described by the hypergraph on $k$ vertices with a single hyper-edge containing all the vertices, which is the level-$c$ GHZ state shared between all parties ($\unitk{c}{k}$).
Then the above means that this replacement alters the expressivity (on the level of joint one-body marginal spectra) of the tensor network whenever $n^2 - n + 1 < c \leq n^2$, in particular showing that allowing multipartite entanglement structures beyond two-party entanglement can increase expressiveness in tensor networks.
Analogous separations can be derived for different graphs and hypergraphs governing tensor networks.

\subsubsection*{Degeneration obstructions and border subrank of matrix multiplication}

For any two tensors $S$ and $T$ we say that $S$ is a restriction of $T$ and write $S \leq T$ if there are linear maps $A,B,C$ such that $S = (A\otimes B \otimes C)\cdot T$. We say $S$ is a degeneration of $T$ and write $S \degenleq T$ if there is a sequence of tensors $T_i$ that converges to $S$ and such that $T_i \leq T$ for every~$i$.

Moment polytope separations, like the separations $\Delta(\unit{c}) \not\subseteq \Delta(\MM_{n})$ for $n^2-n+1 < c \leq n^2$ that we proved in \cref{theorem:matrix multiplication}, are obstructions (in the spirit of geometric complexity theory~\cite{burgisserGeometricComplexity2011}) for degenerations. Indeed, if $S \degenleq T$, then $\Delta(S) \subseteq \Delta(T)$ (\cref{proposition:degeneration monotone}). Thus, if $\Delta(S) \not\subseteq \Delta(T)$, then $S \not\degenleq T$. From \cref{theorem:matrix multiplication} we thus get that $\unit{n^2 - n + 2} \not\degenleq \MM_{n}$.

Statements of the form $\unit{r} \not\degenleq T$ correspond to upper bounds on Strassen's notion of border subrank \cite{strassenRelativeBilinearComplexity1987}, which plays a central role in the study of matrix multiplication algorithms \cite{blaser2013fast}.
The border subrank $\bordersubrank(T)$ of a tensor $T$ is defined as the largest number $r$ such that $\unit{r} \degenleq T$, so that $\unit{r} \not\degenleq T$ corresponds to $\bordersubrank(T) < r$. Strassen proved that $\lceil\tfrac34n^2\rceil \leq \bordersubrank(\MM_{n})$. This was shown to be an equality: %
\begin{theorem}[\cite{koppartyGeometricRankTensors2023a}]
\label{theorem:subrank bound}
$\bordersubrank\big(\MM_{n}\big) \leq \lceil \tfrac34 n^2\rceil$.
\end{theorem}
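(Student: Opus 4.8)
The plan is to prove this as a degeneration obstruction, in the same spirit as \cref{theorem:matrix multiplication} but pushed further with the minrank machinery, so as to avoid geometric rank entirely. By the definition recalled just above, $\bordersubrank(\MM_{n}) \ge r$ is equivalent to $\unit{r} \degenleq \MM_{n}$, so it suffices to show $\unit{r} \not\degenleq \MM_{n}$ whenever $r > \lceil \tfrac34 n^2 \rceil$.

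The one structural fact about $\MM_{n}$ that I would use is the shape of its slice span. Writing $\MM_{n}$ as the trilinear form $(X,Y,Z)\mapsto \Tr(XYZ)$, the span of the slices in the first tensor factor is the space of maps $Y\mapsto XY$ on $\C^{n\times n}$ parametrised by $X\in\C^{n\times n}$; every nonzero element has rank equal to $n$ times the rank of $X$, so its minrank is $n$. For the obstruction one needs the thickened version: since the locus of $n\times n$ matrices of rank at most $d$ has codimension $(n-d)^2$ in $\C^{n\times n}$, any subspace $L\subseteq\C^{n\times n}$ of dimension more than $(n-d)^2$ contains a nonzero matrix of rank $\le d$; hence every $k$-dimensional subspace of the first slice span of $\MM_{n}$ contains a map of rank at most $n\bigl(n+1-\lceil\sqrt{k}\,\rceil\bigr)$. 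This is the same bookkeeping that makes the zero locus $\{(Y,Z):YZ=0\}$ have codimension $\tfrac34 n^2$, the extremal configuration occurring at matrices of rank $\lfloor n/2\rfloor$.

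Next I would run this through the minrank degeneration machinery developed for the moment-polytope results together with the characterization of uniform points. Assuming $\unit{r}\degenleq\MM_{n}$, a suitable restriction produces, for each admissible $k$, a concise tensor $S$ with $S\degenleq\MM_{n}$ whose first factor has dimension $k$ and whose first slice span has large minrank — concretely one may take $S$ to be a polynomial multiplication tensor of the appropriate shape (or, more crudely, the restriction of $\unit{r}$ along a generic $k$-dimensional subspace of its first factor), so that its first slice span is an extremal $k$-dimensional matrix subspace of minrank on the order of $r-k$. On the other hand, conciseness of $S$ forces the local maps realizing $S\degenleq\MM_{n}$ to be injective on a $k$-dimensional subspace of the first slice span of $\MM_{n}$, and since taking images under linear maps and passing to limits both only decrease ranks, the minrank of the first slice span of $S$ is at most $n\bigl(n+1-\lceil\sqrt{k}\,\rceil\bigr)$. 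Comparing the two estimates yields, for every admissible $k$, an inequality of the shape $r-k+1\le n\bigl(n+1-\lceil\sqrt{k}\,\rceil\bigr)$; optimising over $k$ (the extremum sitting near $k\approx n^2/4$, i.e.\ at matrices of rank $\approx n/2$) then gives $r\le\lceil\tfrac34 n^2\rceil$, contradicting $r>\lceil\tfrac34 n^2\rceil$.

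I expect two steps to carry the weight. The first is the semicontinuity bookkeeping: $k$-dimensional subspaces can degenerate to lower-dimensional ones, so conciseness of $S$ must be used carefully to keep the relevant local maps injective on the right $k$-dimensional subspace, and one must check that minrank is lower semicontinuous along the resulting families of subspaces. The second, and the more delicate one, is landing on the exact value $\lceil\tfrac34 n^2\rceil$ rather than merely $\tfrac34 n^2$ up to lower-order terms: this is precisely where polynomial multiplication tensors are indispensable, since their slice spans are exactly the matrix subspaces attaining the maximal minrank for a given shape, which makes the optimisation over $k$ tight; one then has to track the ceilings, handle the boundary between the two directions of the restriction, and verify small $n$ directly to obtain the stated bound.
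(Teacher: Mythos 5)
Your proposal is correct and essentially reproduces the paper's proof: both pass through polynomial multiplication tensors $\polmul_{a,b}$ (with $\unit{a+b-1}\geq\polmul_{a,b}$), compare $\minrank(\polmul_{a,b})=b$ against the bound $\minrank(T)\le n(n-\lfloor\sqrt{a-1}\rfloor)$ for any concise degeneration $T$ of $\MM_n$ with first factor of dimension $a$ (via the slice span $X\mapsto X\kron I_n$ and the complementary-dimension lemma for bounded-rank matrix loci), and then optimize over $a,b$ near $a\approx n^2/4$. The only cosmetic difference is that you gesture at the optimization rather than carrying it out; the paper makes the even/odd choice $a=\tfrac14 n^2+1$ (resp.\ $\tfrac14(n-1)^2+1$), $b=\tfrac12 n^2+1$ (resp.\ $\tfrac12 n(n+1)+1$) explicit, and no separate treatment of small $n$ is needed.
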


From our moment polytope separation (\cref{theorem:matrix multiplication}) it follows that $\bordersubrank(\MM_{n}) \leq n^2 - n + 1$, which is not optimal.
We slightly alter the proof of the moment polytope separation to reprove the optimal upper bound $\bordersubrank(\MM_{n}) \leq \lceil\tfrac34n^2\rceil$,
obtaining a proof that is more direct than the previous proof of \cite{koppartyGeometricRankTensors2023a}, shedding new light on this result. %
We leave as an open problem whether a moment polytope separation can prove the optimal bound.

\subsubsection*{Characterizing moment polytope inclusion}

What is the operational meaning of moment polytope inclusion? If $S \degenleq T$, then $\Delta(S) \subseteq \Delta(T)$ (\cref{proposition:degeneration monotone}). The reverse implication is known to be false (as can be seen for instance by considering non-equivalent generic tensors in $\C^3 \ot \C^3 \ot \C^3$).
What preorders on tensors imply moment polytope inclusion?

We consider asymptotic restriction, a preorder that is implied by degeneration (i.e.\ a larger preorder) that plays a central role in algebraic complexity theory. %
For tensors $S$ and $T$ we say that $S$ is an asymptotic restriction of $T$, denoted by $T \lesssim S$, if for every $n$ we have $S^{\kron n} \leq T^{\kron (n + o(n))}$ where $o(n)$ denotes some function $f(n)$ such that $f(n)/n \to 0$ when $n\to\infty$. Replacing restriction by degeneration does not change the notion of asymptotic restriction \cite[Prop.~5.10]{strassenRelativeBilinearComplexity1987}.
Here~$\kron$ denotes the Kronecker product on 3-tensors.

We show that asymptotic restriction does not imply moment polytope inclusion:
\begin{theorem}
    \label{thm:moment-polytope-not-asymptotic-restriction-monotone}
    There exist tensors~$S, T$ such that~$S \lesssim T$ and~$\Delta(S) \not\subseteq \Delta(T)$.
\end{theorem}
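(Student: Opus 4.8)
The plan is to exploit the separation $\Delta(\unit{c}) \not\subseteq \Delta(\MM_{n})$ from \cref{theorem:matrix multiplication} and to find a pair $S, T$ where $S \lesssim T$ but the moment polytope of $S$ escapes that of $T$. The natural candidates are $S = \unit{c}$ and $T = \MM_{n}$ for a suitable choice with $n^2 - n + 1 < c \leq n^2$; we already know $\Delta(S) \not\subseteq \Delta(T)$. So the only thing to establish is the asymptotic restriction $\MM_{n} \lesssim \unit{c}$, i.e.\ that $\unit{c}^{\kron m} \leq \MM_{n}^{\kron (m+o(m))}$ for all $m$. Equivalently, since asymptotic restriction of a unit tensor to $T$ is governed by the asymptotic subrank, we need $\utilde{Q}(\MM_{n}) \geq c$, or more precisely that the asymptotic subrank $\utilde{Q}(\MM_{n}) = n^2$ (which is standard: $\MM_{n}^{\kron m} = \MM_{n^m}$ and $\MM_N$ has subrank at least, say, $\lceil \tfrac34 N^2 \rceil$ by Strassen, so $\utilde{Q}(\MM_{n}) = \lim (\tfrac34)^{1/m} n^2 = n^2$; alternatively use Strassen's laser method giving $\utilde{Q}(\MM_n) = n^2$ directly). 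Since $c \leq n^2 = \utilde{Q}(\MM_{n})$, we get $\unit{c} \lesssim \MM_{n}$, i.e.\ $\MM_{n} \gtrsim \unit{c}$ — wait, I need the direction to be $T \lesssim S$ meaning $S^{\kron m} \leq T^{\kron(m+o(m))}$, so with $T = \MM_n$ and $S = \unit c$ this reads $\unit c^{\kron m} \leq \MM_n^{\kron(m+o(m))}$, which is exactly $\utilde Q(\MM_n)\ge c$. Good. So $\MM_n \lesssim \unit c$ in the paper's notation, and $\Delta(\unit c)\not\subseteq\Delta(\MM_n)$ gives the theorem with $(S,T) = (\unit c, \MM_n)$.

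Concretely, I would carry out the following steps. (1) Fix $n \geq 2$ and $c = n^2$ (or any $c$ in the allowed range, e.g.\ $c = n^2$ works and is cleanest). (2) Recall $\MM_{n}^{\kron m} = \MM_{n^m}$ (the Kronecker product of matrix multiplication tensors is again a matrix multiplication tensor of multiplied format). (3) Invoke Strassen's lower bound $\bordersubrank(\MM_{N}) \geq \lceil \tfrac34 N^2 \rceil$, or better the exact subrank/asymptotic statements, to conclude $\unit{\lceil \frac34 n^{2m}\rceil} \degenleq \MM_{n^m}$, hence $\utilde{Q}(\MM_n) = \lim_m \lceil \tfrac34 n^{2m}\rceil^{1/m} = n^2$. (4) Since $c \leq n^2$, standard properties of asymptotic subrank (sub-multiplicativity and the definition via unit tensors, cf.\ the discussion before \cref{theorem:subrank bound}) give $\unit{c} \lesssim \MM_{n}$, which in the paper's notational convention is written $\MM_{n} \lesssim \unit{c}$. (5) Combine with $\Delta(\unit{c}) \not\subseteq \Delta(\MM_{n})$ from \cref{theorem:matrix multiplication} to obtain the claimed $S = \unit{c}$, $T = \MM_{n}$.

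The only genuine subtlety — not really an obstacle — is bookkeeping the direction of the $\lesssim$ symbol: the paper writes $T \lesssim S$ to mean "$S$ is an asymptotic restriction of $T$," i.e.\ powers of $S$ restrict from powers of $T$. I should make sure the statement delivered matches: we want $S \lesssim T$ with $\Delta(S)\not\subseteq\Delta(T)$, so I want powers of $T$ to restrict from powers of $S$, i.e.\ $T = \MM_n$ restricts from $S = \unit{c}$, i.e.\ $\MM_n \degenleq \unit{c}^{\kron m}$ up to $o(m)$ in the exponent. But every tensor of format fitting inside a large unit tensor is a restriction of a unit tensor of rank equal to its tensor rank, and $\tensorrank(\MM_n) \leq$ something like $n^{2.373...}$, so $\unit{\lceil n^{2.38 m}\rceil} \geq \MM_{n}^{\kron m}$ — but that rank grows faster than $n^{2m}$, so to get sub-exponential overhead I instead use $\tensorrank(\MM_{n^m}) \leq O(n^{2m})$ only asymptotically via $\omega$; cleanest: $\utilde{R}(\MM_n) = n^2$ too (border rank asymptotically, since $\omega$ applies to border rank), hence $\MM_n^{\kron m} = \MM_{n^m} \degenleq \unit{\lceil n^{2m + o(m)}\rceil}$, giving $\MM_n \lesssim \unit{n^2} = \unit c$. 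Together with the moment polytope separation this is exactly $S = \unit{c}$, $T = \MM_n$ with $S \lesssim T$ and $\Delta(S) \not\subseteq \Delta(T)$. The main point to get right in the writeup is therefore just invoking $\utilde{R}(\MM_n) = n^2$ (equivalently, $\omega < 3$ is not even needed — one only needs $\omega \leq 2$ asymptotically? no: one needs $\utilde R(\MM_n)^{1/?}$... ) — I would simply cite that $\MM_n \lesssim \unit{\lceil n^\omega \rceil}$ and note $n^\omega \le n^3$, then take $c$ in the range with $c \ge$ whatever is needed, OR, to keep $c$ small, note the cleanest fact $\MM_n \lesssim \unit{n^2}$ follows from $\utilde R(\MM_n) = n^2$, a standard consequence of the theory of the asymptotic spectrum. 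That is the one external fact the proof rests on.
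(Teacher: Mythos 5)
Your candidate pair is the right one, and it is exactly the paper's choice in \cref{example:asymp-restrict-but-no-polytope-inclusion-mm2-vs-4}: $S = \unit{n^2}$, $T = \MM_n$, with $\Delta(S) \not\subseteq \Delta(T)$ coming from \cref{theorem:matrix multiplication}. What needs to be shown is $S \lesssim T$, and---despite the paper's inline definition reading ``denoted by $T \lesssim S$'' where, as the usage in the rest of the paper (in particular \cref{example:asymp-restrict-but-no-polytope-inclusion-mm2-vs-4} citing Strassen) makes unambiguous, it must mean ``$S \lesssim T$''---this is the statement $\unit{n^2}^{\kron m} \leq \MM_n^{\kron (m + o(m))}$. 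That is the asymptotic \emph{subrank} direction for $\MM_n$, which you correctly nailed in your first pass: since $\MM_n^{\kron m} = \MM_{n^m}$ and $\bordersubrank(\MM_N) \geq \lceil\tfrac34 N^2\rceil$, the asymptotic subrank of $\MM_n$ is $n^2 \geq c$. So your steps (1)--(5) already constitute a complete and correct proof, coinciding with the paper's argument. You should have stopped there.

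In your final paragraph you second-guess the direction and decide you need $\MM_n^{\kron m} \leq \unit{n^2}^{\kron (m+o(m))}$, i.e., that the asymptotic \emph{rank} of $\MM_n$ is $n^2$. That is not a ``standard consequence of the theory of the asymptotic spectrum''---it is precisely the conjecture $\omega = 2$, one of the central open problems of the field. Both fallbacks you float (invoking $\MM_n \lesssim \unit{\lceil n^\omega \rceil}$ and hoping $n^\omega \leq c \leq n^2$, or asserting asymptotic rank $n^2$ outright) silently assume $\omega = 2$, so the proof as finally packaged has a genuine gap. The fix is to delete that last paragraph: $S \lesssim T$ is the subrank statement, not the rank statement, and you already established it.
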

We prove \cref{thm:moment-polytope-not-asymptotic-restriction-monotone} by providing two counterexamples.
Both were found via to the aforementioned algorithm for computing moment polytopes \cite{vandenBerg2025momentPolytopeAlgorithm} (in fact, the first example uses the separation from \cref{theorem:matrix multiplication}).

From the perspective of quantum information theory, \cref{thm:moment-polytope-not-asymptotic-restriction-monotone} implies that existence of asymptotic SLOCC interconversion is not an entanglement monotone relation, in the sense that $S \lesssim T$ does not imply that all marginal spectra that can be reached starting from the state $S$ (using SLOCC transformations) can also be reached starting from the state $T$. 

In algebraic complexity theory, the above result is especially interesting in light of a family of $\lesssim$-monotone functions mapping tensors to $\R_{\geq 0}$ called the \emph{quantum functionals} \cite{christandlUniversalPointsAsymptotic2021}.
These functions are defined as an entropy maximizations over the moment polytope $\Delta(T)$.
The quantum functionals comprise all known elements of the \emph{asymptotic spectrum} of 3-tensors, which is the set $\mathcal X$ of functions mapping 3-tensors to $\R_{\geq 0}$ that are monotone under $\geq$, additive under direct sum, multiplicative under the Kronecker product, and map $\unit{r}$ to $r$.
Strassen showed that $T \lesssim S$ if and only if $\varphi(T) \leq \varphi(S)$ for all functions $\varphi \in \mathcal X$, and that the asymptotic tensor rank of $T$ equals the maximum of $\varphi(T)$ over $\varphi \in \mathcal X$ \cite{strassen1988asymptotic}.
It is a important open question to determine all of $\mathcal X$ explicitly.
It is also open whether $\mathcal X$ consists of just the quantum functionals. If true, this would in particular imply that the matrix multiplication exponent is equal to 2.

A natural question to ask is: is the map $T \mapsto \Delta(T)$ itself monotone under $\lesssim$, or do the quantum functionals pick out only some special information from $\Delta(T)$?
\Cref{thm:moment-polytope-not-asymptotic-restriction-monotone} shows that the latter is the case.
\section{Preliminaries}
\label{section:tensor moment polytopes}

In this section we recall background material on geometric invariant theory and basic properties of moment polytopes.
We continue where we left off in the introduction.

\subsection{Semistability, polystability, and invariants}
\label{subsection:geometric invariant theory}

We will need some basic concepts from geometric invariant theory (for an introduction, see e.g.\ \cite{derksen2013computational,wallachGeometricInvariantTheory2017}).
Let $\Sl_n \subseteq \GL_n$ denote the special linear group consisting of $n \times n$-matrices with determinant one.
Let $\Sl \coloneqq \SL_a \times \SL_b \times \SL_c \subseteq \G$. %
We call a tensor $T \in \C^a \ot \C^b \ot \C^c$ \emph{$\Sl$-unstable} when $0 \in \overline{\Sl \cdot T}$, and \emph{$\Sl$-semistable} otherwise. The subset of $\Sl$-unstable tensors is called the \emph{null cone}.
If $V$ is any (rational) $\SL$-representation, then $\SL$ acts on the ring of polynomials $\C[V]$ by $g\cdot f = f \circ g^{-1}$. We denote by $\C[V]^{\Sl}$ the $\Sl$-invariant polynomials on $V$.

\begin{theorem}[{\cite[Lemma~2.5.2]{derksen2013computational}}]
    \label{proposition:tensor null-cone}
    Let $V$ be a rational $\Sl$-representation.
    The subset of $\Sl$-unstable elements of $V$%
    is equal to the subset of elements $v \in V$ such that $f(v) = 0$ for every non-constant homogeneous $f \in \C[V]^{\Sl}$. %
\end{theorem}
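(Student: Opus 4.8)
The plan is to establish the two inclusions separately. The inclusion ``$\Sl$-unstable $\Rightarrow$ killed by all non-constant homogeneous invariants'' is the easy one: if $0 \in \overline{\Sl \cdot v}$ and $f \in \C[V]^{\Sl}$ is homogeneous of degree $d \geq 1$, then $f$ is constant on the orbit $\Sl \cdot v$ --- with value $f(v)$ --- and by continuity also on $\overline{\Sl \cdot v}$, which contains $0$. Hence $f(v) = f(0) = 0$, the last equality because $f$ is homogeneous of positive degree. I expect this to be essentially a one-line argument.

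For the converse I would argue by contraposition: assuming $v$ is $\Sl$-semistable, i.e. $0 \notin \overline{\Sl \cdot v}$, I want to produce a non-constant homogeneous $\Sl$-invariant $f$ with $f(v) \neq 0$. The two sets $\{0\}$ and $\overline{\Sl \cdot v}$ are disjoint, closed, and $\Sl$-stable subsets of the affine space $V$. The main step is to separate them by an invariant function: there is $f \in \C[V]^{\Sl}$ with $f(0) = 0$ and $f \equiv 1$ on $\overline{\Sl \cdot v}$. I would obtain this by first choosing, via the Nullstellensatz, an arbitrary polynomial $h \in \C[V]$ that vanishes at $0$ and is identically $1$ on $\overline{\Sl \cdot v}$ (possible since these two closed sets are disjoint, so their ideals sum to $\C[V]$), and then averaging: apply the Reynolds operator $R\colon \C[V] \twoheadrightarrow \C[V]^{\Sl}$, which exists because $\Sl$ is linearly reductive. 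Since $R$ maps $\Sl$-stable subspaces into themselves and fixes invariants, $R(h)$ still vanishes at $0$ (it lies in the ideal of $\{0\}$) and $R(h) - 1 = R(h-1)$ still lies in the ideal of $\overline{\Sl \cdot v}$, so $R(h) \equiv 1$ there; take $f = R(h)$.

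Finally I would pass to homogeneous components. Because the $\Sl$-action on $V$ is linear, it respects the grading of $\C[V]$, so writing $f = \sum_{d \geq 0} f_d$ each $f_d$ is again $\Sl$-invariant. From $f(0) = 0$ we get $f_0 = 0$, and from $f(v) \neq 0$ we get $f_d(v) \neq 0$ for some $d \geq 1$; this $f_d$ is the desired non-constant homogeneous $\Sl$-invariant not vanishing at $v$, completing the contrapositive.

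The step I expect to be the real content is the separation of the disjoint closed $\Sl$-invariant sets $\{0\}$ and $\overline{\Sl \cdot v}$ by an invariant polynomial. This is exactly where linear reductivity of $\Sl$ enters (equivalently, the existence of the Reynolds operator, or the finite generation and ``good quotient'' properties of $\C[V]^{\Sl}$), and it is the essential geometric-invariant-theory input; everything else --- invariants being constant on orbit closures, homogeneous polynomials of positive degree vanishing at the origin, and the gradedness of $\C[V]^{\Sl}$ --- is routine. One could alternatively deduce semistability-via-invariants from the Hilbert--Mumford criterion by producing a destabilizing one-parameter subgroup, but that is heavier machinery than is needed here, so I would prefer the Reynolds-operator route.
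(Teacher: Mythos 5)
Your proof is correct and is essentially the standard textbook argument; the paper itself does not prove this statement but cites Derksen--Kemper (Lemma 2.5.2), whose proof runs along the same lines. The easy direction (invariants are constant on orbit closures, homogeneous invariants of positive degree vanish at $0$) and the hard direction (separate the disjoint closed $\Sl$-stable sets $\{0\}$ and $\overline{\Sl\cdot v}$ by an invariant using the Nullstellensatz plus the Reynolds operator, then take a nonzero homogeneous component of positive degree) are exactly right. The two standard facts you lean on --- that the Reynolds operator maps every $\Sl$-stable ideal into itself, and that it preserves the grading since the $\Sl$-action on $V$ is linear --- are both correct and are the crux of the linear-reductivity input. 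Your remark that one could instead go through the Hilbert--Mumford criterion is accurate but, as you say, heavier than needed.
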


We now turn our attention to the Kempf--Ness theorem,
which relates $\Sl$-stability to the moment map.
Recall the definition of the moment map $\mu$ as given in \cref{definition:moment map}.
Denote with $\<\cdot,\cdot>$ and $\norm{\cdot}$ the standard Euclidean inner product and norm on $\C^a\ot\C^b\ot\C^c$.
A tensor such that $\SL \cdot T$ is closed is called \emph{$\Sl$-polystable} (this also implies $T$ is $\Sl$-semistable if $T$ is non-zero).
Denote with $I_a$ the $a \times a$ identity matrix.
\begin{theorem}[Kempf--Ness theorem {\cite[Theorem~3.26]{wallachGeometricInvariantTheory2017}}\footnotemark{}]
    \label{theorem:tensor kempf-ness}
    Let $T \in \C^a\ot\C^b\ot\C^c \setminus \{0\}$.
    The tensor $T$ is $\Sl$-polystable if and only if $\mu(S) = (I_a/a,I_b/b,I_c/c)$ for some $S \in \Sl \cdot T$.
\end{theorem}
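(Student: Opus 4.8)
The plan is to treat this as the instance $\Sl=\SL_a\times\SL_b\times\SL_c$ of the classical Kempf--Ness theorem and run the standard argument, the only point requiring attention being that the abstract ``moment map equals zero'' condition must be matched, under the normalization of $\mu$ in \cref{definition:moment map}, with the condition $\mu(S)=(I_a/a,I_b/b,I_c/c)$. Fix the maximal compact subgroup $K\coloneqq\SU_a\times\SU_b\times\SU_c\subseteq\Sl$; it acts unitarily on $\C^a\ot\C^b\ot\C^c$ and hence preserves $\norm{\cdot}$. Use the Cartan decomposition $\Sl=K\exp(\mathfrak{p})$, where $\mathfrak{p}$ is the set of triples $H=(H_1,H_2,H_3)$ of traceless Hermitian matrices. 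The workhorse is the Kempf--Ness function $g\mapsto\log\norm{g\cdot T}^2$ on $\Sl$: it is left-$K$-invariant, so it descends to the nonpositively curved symmetric space $K\backslash\Sl$, whose geodesics through the base coset are the curves $t\mapsto K\exp(tH)$ with $H\in\mathfrak{p}$.

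I would first record the two computations that make everything run. \emph{Convexity}: for $H\in\mathfrak{p}$ the operator $\hat H\coloneqq H_1\ot I\ot I+I\ot H_2\ot I+I\ot I\ot H_3$ acts Hermitianly on $\C^a\ot\C^b\ot\C^c$, so decomposing $T=\sum_\alpha T_\alpha$ into its pairwise orthogonal $\hat H$-eigencomponents gives $\norm{\exp(tH)\cdot T}^2=\sum_\alpha e^{2\alpha t}\norm{T_\alpha}^2$; hence $t\mapsto\log\norm{\exp(tH)\cdot T}^2$ is convex, the Kempf--Ness function is geodesically convex, and its critical points coincide with its global minima, that is, with the minimal-norm points of the orbit $\Sl\cdot T$. \emph{Moment map}: differentiating at $t=0$ gives $\left.\frac{d}{dt}\right|_{t=0}\log\norm{\exp(tH)\cdot S}^2=2\sum_{i=1}^{3}\Tr\bigl(H_i\,\mu_i(S)\bigr)$, and since each $\mu_i(S)-\tfrac{1}{n_i}I_{n_i}$ is again traceless Hermitian, this derivative vanishes for all $H\in\mathfrak{p}$ precisely when $\mu_i(S)=\tfrac{1}{n_i}I_{n_i}$ for $i=1,2,3$. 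Combining the two: a point $S\in\Sl\cdot T$ is a minimal-norm point of its orbit if and only if $\mu(S)=(I_a/a,I_b/b,I_c/c)$.

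It then suffices to show that $\Sl\cdot T$ contains a minimal-norm point if and only if $\Sl\cdot T$ is closed, since \cref{theorem:tensor kempf-ness} follows by combining this with the equivalence just established. $(\Leftarrow)$ Since $T\neq0$ and $\Sl$ acts invertibly, $0\notin\Sl\cdot T$; if moreover $\Sl\cdot T$ is closed it meets each closed ball in a compact set, so $v\mapsto\norm v$ attains its (necessarily positive) infimum on $\Sl\cdot T$ at some $S$, which is then a minimal-norm point. $(\Rightarrow)$ Suppose $S\in\Sl\cdot T$ is a minimal-norm point of $\Sl\cdot S$, and suppose for contradiction that $\Sl\cdot S$ is not closed. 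By the Hilbert--Mumford criterion (see e.g.\ \cite{derksen2013computational}) there is a one-parameter subgroup of $\Sl$ degenerating $S$ to a point $S''\in\closure{\Sl\cdot S}\setminus\Sl\cdot S$; after normalizing it to the form $t\mapsto\exp(-tH)$ with $H\in\mathfrak{p}$ and expanding $\norm{\exp(-tH)\cdot S}^2=\sum_\alpha e^{-2\alpha t}\norm{S_\alpha}^2$ along $\hat H$, convergence as $t\to+\infty$ forces $\alpha\geq0$ whenever $S_\alpha\neq0$, while minimality of $\norm S$ at $t=0$ forces $\sum_\alpha\alpha\norm{S_\alpha}^2=0$. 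Together these give $S_\alpha=0$ for every $\alpha\neq0$, so $S$ is fixed by $\exp(-tH)$ and $S''=S\in\Sl\cdot S$ — a contradiction. Hence $\Sl\cdot S$ is closed, i.e.\ $T$ is $\Sl$-polystable.

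The main obstacle will be the last step: extracting from non-closedness of $\Sl\cdot S$ a destabilizing direction that genuinely lies in $\mathfrak{p}$. A one-parameter subgroup of $\SL_{n_i}$ is in general \emph{not} conjugate by the compact group to one of the form $t\mapsto\exp((\log t)H_i)$ with $H_i$ Hermitian, so to make the eigencomponent expansion applicable one must either invoke the refined Hilbert--Mumford/Kempf theory to select the destabilizing direction inside $\mathfrak{p}$, or sidestep one-parameter subgroups entirely and show directly that if the Kempf--Ness function attains its infimum on $\Sl$ — which it does at $S$ — then the orbit $\Sl\cdot S$ is closed; this is the substance of \cite[Theorem~3.26]{wallachGeometricInvariantTheory2017}, which we may simply cite. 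Everything else — the Cartan decomposition, the geodesic convexity of $g\mapsto\log\norm{g\cdot T}^2$, the derivative computation pinning its critical points to $\mu(S)=(I_a/a,I_b/b,I_c/c)$, and the compactness argument in the $(\Leftarrow)$ direction — is routine; the single paper-specific point is that \cref{definition:moment map} is normalized exactly so that the classical moment-map-zero condition reads $\mu(S)=(I_a/a,I_b/b,I_c/c)$.
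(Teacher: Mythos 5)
Your proposal is correct and essentially matches the paper, which proves nothing on its own but simply cites Wallach's Theorem~3.26 together with a footnote observing that the ``critical'' condition there translates, under the normalization of $\mu$ in \cref{definition:moment map}, to $\mu(S)=(I_a/a,I_b/b,I_c/c)$ --- this is precisely your derivative computation $\left.\tfrac{d}{dt}\right|_{t=0}\log\norm{\exp(tH)\cdot S}^2 = 2\sum_i\Tr(H_i\,\mu_i(S))$, which vanishes for all traceless Hermitian $H_i$ iff each $\mu_i(S)$ is a scalar multiple of $I_{n_i}$. You go further and sketch the full Kempf--Ness argument (Cartan decomposition, geodesic convexity, critical points are minima, closed-orbit $\Leftrightarrow$ minimum attained), which is extra but correct context; and you correctly flag the one genuine analytic difficulty in the $(\Rightarrow)$ direction --- producing a destabilizing direction \emph{inside} $\mathfrak{p}$ rather than an arbitrary one-parameter subgroup --- and propose to cite Wallach for exactly that step, which is indeed what must be done (this is the ``real'' Hilbert--Mumford/Kempf instability result, and your naive 1-PS argument as written would not yield an $H\in\mathfrak{p}$). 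In short: same citation, same normalization check, with the rest of your write-up being a faithful account of what lies behind the cited theorem.
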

\footnotetext{
To apply the statement as given in \cite[Theorem~3.26]{wallachGeometricInvariantTheory2017}, we need to show that $\mu(T) = (I_a/a,I_b/b,I_c/c)$ is equivalent to $T$ being ``critical''. In our setting, this translates to the requirement that
$\<(A \ot I \ot I) T, T> + \<(I \ot B \ot I) T, T> + \<(I \ot I \ot C) T, T> = 0$ for all Hermitian matrices $A,B,C$ each of trace zero. A simple computation shows that these two requirements are equivalent.
For more information, see \cite[Example~1.4]{burgisserTheoryNoncommutativeOptimization2019} \cite{burgisser2018tensorScaling,vandenBerg2025momentPolytopeAlgorithm}.
}

Using the Kempf--Ness theorem (\cref{theorem:tensor kempf-ness}) combined with \cref{proposition:tensor null-cone}, we can prove that whenever a tensor $T$ has a tensor with uniform marginals in its orbit closure, it is $\SL$-semistable, as was also observed in \cite[Theorem~3.2]{burgisser2018alternatingMinimization} and \cite[Lemma~4.34]{christandlUniversalPointsAsymptotic2021}.\footnote{In \cite{burgisser2018alternatingMinimization} a self-contained proof (without assuming the Kempf--Ness theorem) is also provided.}

\begin{theorem}[Uniform marginals {\cite[Theorem~3.2]{burgisser2018alternatingMinimization}}]
    \label{corollary:uniform marginals}
    Let $T \in \C^a\ot\C^b\ot\C^c \setminus \{0\}$.
    The tensor $T$ is $\Sl$-semistable if and only if $\mu(S) = (I_a/a,I_b/b,I_c/c)$ for some non-zero $S \in \overline{\G \cdot T}$.
\end{theorem}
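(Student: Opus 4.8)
The plan is to derive both directions from the Kempf--Ness theorem (\cref{theorem:tensor kempf-ness}) together with the null-cone characterization (\cref{proposition:tensor null-cone}), using semi-invariants to pass back and forth between $\Sl$-orbit closures and $\G$-orbit closures. Throughout we use the elementary inclusion $\overline{\Sl \cdot T} \subseteq \overline{\G \cdot T}$.

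For the implication ``$\Leftarrow$'', suppose $\mu(S) = (I_a/a, I_b/b, I_c/c)$ for some non-zero $S \in \overline{\G \cdot T}$. By \cref{theorem:tensor kempf-ness}, $S$ is $\Sl$-polystable, hence (being non-zero) $\Sl$-semistable, so by \cref{proposition:tensor null-cone} there is a non-constant homogeneous $f \in \C[V]^{\Sl}$ with $f(S) \neq 0$. Since $\Sl \trianglelefteq \G$ with the quotient $\G/\Sl$ a torus (via the three determinants), each finite-dimensional graded piece $(\C[V]^{\Sl})_d$ is a $\G$-module that decomposes into one-dimensional weight spaces for this torus; replacing $f$ by a weight component that is nonzero at $S$, we may assume $f$ is a non-constant homogeneous $\G$-semi-invariant, that is, $f(g \cdot v) = \chi(g) f(v)$ for some character $\chi$ of $\G$. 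Now if $f(T) = 0$, then $f$ vanishes identically on $\G \cdot T$ and hence on $\overline{\G \cdot T}$, contradicting $f(S) \neq 0$; therefore $f(T) \neq 0$, and \cref{proposition:tensor null-cone} gives that $T$ is $\Sl$-semistable.

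For the implication ``$\Rightarrow$'', suppose $T$ is $\Sl$-semistable, so $0 \notin \overline{\Sl \cdot T}$. The affine variety $\overline{\Sl \cdot T}$ is a union of $\Sl$-orbits, and an orbit $\Sl \cdot S$ of minimal dimension among those it contains is closed, since $\overline{\Sl \cdot S} \setminus (\Sl \cdot S)$ is a union of strictly lower-dimensional $\Sl$-orbits inside $\overline{\Sl \cdot T}$, of which by minimality there are none. Moreover $S \neq 0$ because $0 \notin \overline{\Sl \cdot T}$. Then $S$ is $\Sl$-polystable, so \cref{theorem:tensor kempf-ness} provides $S' \in \Sl \cdot S$ with $\mu(S') = (I_a/a, I_b/b, I_c/c)$, and $S' \in \Sl \cdot S \subseteq \overline{\Sl \cdot T} \subseteq \overline{\G \cdot T}$ with $S' \neq 0$, as required.

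The routine ingredients are the inclusion $\overline{\Sl \cdot T} \subseteq \overline{\G \cdot T}$ and the existence of a closed orbit inside an affine orbit closure. The one step needing genuine care is the reduction to a $\G$-semi-invariant in the ``$\Leftarrow$'' direction: one must check that $\C[V]^{\Sl}$ is a $\G$-module and that its weight-space decomposition under the abelianization $\G/\Sl$ yields homogeneous semi-invariants, so that vanishing at $T$ propagates along the whole $\G$-orbit and to its closure. I expect this bookkeeping to be the main (mild) obstacle; everything else is a direct application of the two cited theorems. As an alternative to the ``$\Rightarrow$'' argument, one could instead take $S'$ to be a point of minimal norm in $\overline{\Sl \cdot T}$ and verify criticality directly via the computation in the footnote to \cref{theorem:tensor kempf-ness}, but the closed-orbit route keeps the proof entirely within the results already stated.
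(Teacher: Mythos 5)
Your proof is correct. The paper itself does not give a proof of this statement; it only remarks that the result follows by combining the Kempf--Ness theorem (\cref{theorem:tensor kempf-ness}) with the null-cone characterization (\cref{proposition:tensor null-cone}) and defers to \cite{burgisser2018alternatingMinimization} and \cite{christandlUniversalPointsAsymptotic2021}. Your argument carries out exactly that route. The one step you rightly flag as needing care --- transporting non-vanishing of an $\Sl$-invariant from $S \in \overline{\G\cdot T}$ back to $T$ itself --- is handled correctly: $\Sl$ is normal in $\G$ with abelian quotient $\G/\Sl \cong (\C^*)^3$, so $\C[V]^{\Sl}$ is a $\G$-module splitting into one-dimensional weight spaces in each degree, and you may replace $f$ by a homogeneous $\G$-semi-invariant that is still nonzero at $S$, whose zero locus is then automatically $\G$-stable and closed. (An equivalent phrasing: the $\Sl$-null cone is $\G$-stable and Zariski-closed, hence contains $\overline{\G\cdot T}$ as soon as it contains $T$.) The forward direction, via the standard GIT fact that a minimal-dimensional $\Sl$-orbit inside $\overline{\Sl\cdot T}$ is closed, followed by Kempf--Ness to produce a point with uniform marginals, is also sound.
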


In other words, \cref{corollary:uniform marginals} says that $T$ is $\Sl$-semistable if and only if $(u_a,u_b,u_c) \in \Delta(T)$, where $u_n = (1/n,\ldots, 1/n) \in \R^n$ is the uniform probability vector of length $n$.

For more information on these theorems, we refer to \cite{burgisserTheoryNoncommutativeOptimization2019,franks2022minimallengthorbitclosure,vandenBerg2025momentPolytopeAlgorithm}.

\subsection{Moment polytopes under restriction and degeneration}

We review some basic properties of moment polytopes that we will use throughout the text.

We say a tensor $T$ \emph{restricts} to a tensor $T' \in \C^{a'} \ot \C^{b'} \ot \C^{c'}$ whenever there exist triples of matrices $(A,B,C) \in \C^{a' \times a} \times \C^{b' \times b} \times \C^{c' \times c}$ (not necessarily invertible) such that $(A \ot B \ot C)T = T'$. We then write $T \geq T'$.
When $T \geq T'$ and $T' \geq T$, we say $T$ and $T'$ are equivalent and write $T \sim T'$.
Whenever $T' \in \C^a\ot\C^b\ot\C^c$, we have that $T \sim T'$ if and only if $T' \in \G \cdot T$, in which case we say $T$ and $T'$ are isomorphic.
We say $T$ \emph{degenerates} to $T'$ if $\overline{\G \cdot T}$ contains a tensor equivalent to $T'$, and write $T \degengeq T'$.
Restriction implies degeneration, and both are transitive relations.
We say a tensor is \emph{concise} whenever it can not be embedded into a smaller space.
That is, $T$ is concise when it is not equivalent to a tensor $T' \in \C^{a'}\ot\C^{b'}\ot\C^{c'}$ with $a' < a$ or $b' < b$ or $c' < c$.

Moment polytopes behave well under equivalence of tensors.
First we show that embedding~$T$ into a larger space does not change the moment polytope apart from padding it with zeros.
Denote with $0_{x,y,z} \in \C^x \ot \C^y \ot \C^z$ the zero tensor.
Given $p = (p_1 \sep p_2 \sep p_3) \in \Delta(T)$, we denote the padding of $p$ with zeros on each system by
\begin{equation}
    (p_1 \sep p_2 \sep p_3) \oplus (0_x \sep 0_y \sep 0_z) \coloneqq (p_1 \oplus 0_x \sep p_2 \oplus 0_y \sep p_3 \oplus 0_z) \in \R^{a+x}\times\R^{b+y}\times\R^{c+z}.
\end{equation}

\begin{lemma}
\label{lemma:padding with zeros}
$\Delta(T \oplus 0_{x,y,z}) = \Delta(T) \oplus (0_x \sep 0_y \sep 0_z)$.
\end{lemma}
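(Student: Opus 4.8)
The plan is to show the two inclusions $\Delta(T \oplus 0_{x,y,z}) \subseteq \Delta(T) \oplus (0_x \sep 0_y \sep 0_z)$ and $\Delta(T) \oplus (0_x \sep 0_y \sep 0_z) \subseteq \Delta(T \oplus 0_{x,y,z})$ directly from the definition of the moment polytope as the image of the spectrum of the moment map over the orbit closure, together with a description of the orbit closure of $T \oplus 0_{x,y,z}$ inside the larger space. The key structural observation is that if $S \in \overline{\GL \cdot (T \oplus 0_{x,y,z})}$, then $S$ is supported on the ``large'' block $\C^a \ot \C^b \ot \C^c$ (viewed as a subspace of $\C^{a+x}\ot\C^{b+y}\ot\C^{c+z}$ via the first coordinates), and in fact lies in $\overline{\GL_{a}\times\GL_b\times\GL_c \cdot T}$ after identifying that block with $\C^a\ot\C^b\ot\C^c$. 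This is because the subspace $\C^a\ot\C^b\ot\C^c \subseteq \C^{a+x}\ot\C^{b+y}\ot\C^{c+z}$ is closed, contains $T\oplus 0$, and is stable under the subgroup $\GL_a\times\GL_b\times\GL_c$ (embedded block-diagonally); a short argument using that the orbit closure is the smallest closed $\GL$-stable set containing the tensor — or, more carefully, a direct limit argument showing that any sequence $(A_i\ot B_i\ot C_i)(T\oplus 0)$ converging in the big space has its limit supported in the block and arising as a limit of $(A_i'\ot B_i'\ot C_i')T$ where the primed matrices are the relevant sub-blocks — gives the containment $\overline{\GL\cdot(T\oplus 0)} \cong \overline{\GL_{a,b,c}\cdot T} \oplus 0$.

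Granting that description of the orbit closure, both inclusions are essentially immediate. For any $S \in \overline{\GL_{a,b,c}\cdot T}$, viewed as $S \oplus 0$ in the big space, the flattening maps $(S\oplus 0)_i$ are just $S_i$ padded with zero columns, so $(S\oplus 0)_i^*(S\oplus 0)_i$ is $S_i^*S_i$ block-embedded with a zero block of the appropriate size, hence $\spec(\mu_i(S\oplus 0)) = \spec(\mu_i(S)) \oplus 0_{(\cdot)}$ after the normalization (which is unchanged since $\norm{S\oplus 0} = \norm{S}$). Running $S$ over the orbit closure of $T$ gives $\Delta(T\oplus 0_{x,y,z}) = \Delta(T)\oplus(0_x\sep 0_y\sep 0_z)$ directly. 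Conversely every element of $\Delta(T)$ is realized by some $S \in \overline{\GL_{a,b,c}\cdot T}$, and the same computation shows $\spec(\mu(S\oplus 0))$ is its zero-padding and that $S\oplus 0$ lies in $\overline{\GL\cdot(T\oplus 0)}$, so the padded point lies in $\Delta(T\oplus 0_{x,y,z})$.

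I expect the main obstacle to be the careful verification that $\overline{\GL\cdot(T\oplus 0_{x,y,z})}$ contains no tensor with support outside the $\C^a\ot\C^b\ot\C^c$ block and, more precisely, that this orbit closure equals $\overline{\GL_{a,b,c}\cdot T}$ padded with zeros rather than something strictly larger. The ``$\supseteq$'' direction of that set equality is clear (the block-diagonal subgroup acts), but the ``$\subseteq$'' direction needs an argument: one should note that the subspace $\C^a\ot\C^b\ot\C^c$ of the big space is cut out by the vanishing of all coordinates involving an index in the padded range, and that this vanishing is preserved under the full $\GL$-action applied to $T\oplus 0$ — because for such a tensor, any $(A\ot B\ot C)(T\oplus 0)$ only ``sees'' the first $a$ (resp.\ $b$, $c$) columns of $A$ (resp.\ $B$, $C$), which map into... hmm, actually $A$ maps $\C^a$ into $\C^{a+x}$ so the image need not lie in the block. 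The correct fix: restrict attention to the subgroup $P$ of $\GL$ preserving the block (block-lower-triangular matrices), observe $\GL\cdot(T\oplus 0) = \GL\cdot(P\cdot(T\oplus 0))$ is irrelevant; instead use that $\overline{\GL\cdot(T\oplus 0)}$ equals the closure of $\GL$ acting, and that the moment-map spectrum is $\GL$-invariant on orbits but we need orbit \emph{closures}. The clean route is the representation-theoretic or the Kempf--Ness characterization, but the most elementary is: the zero-padded block is closed and $\GL_{a,b,c}$-stable and the theorem of Ness--Mumford--Brion (\cref{theorem:tensor moment polytopes nm}) together with semicontinuity handles the rest — so the real work is choosing the cleanest of these and making the limit argument airtight.
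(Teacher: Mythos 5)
Your key structural claim — that $\overline{\GL\cdot(T\oplus 0_{x,y,z})}$ equals $\overline{\GL_{a}\times\GL_b\times\GL_c\cdot T}$ padded with zeros, i.e.\ that every tensor in the big orbit closure is supported in the first $a\times b\times c$ block — is false, and this breaks the proposal. Take $T = e_1\ot e_1\ot e_1 \in \C^1\ot\C^1\ot\C^1$ and $x=y=z=1$. Then $T\oplus 0 \in \C^2\ot\C^2\ot\C^2$ and the full $\GL_2\times\GL_2\times\GL_2$-orbit already contains, e.g., $e_2\ot e_2\ot e_2$, which is not supported in the block. Indeed $\overline{\GL\cdot(T\oplus 0)}$ here is the entire set of rank-$\leq 1$ tensors in $\C^2\ot\C^2\ot\C^2$. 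So the proposed direct limit argument (``any sequence $(A_i\ot B_i\ot C_i)(T\oplus 0)$ converging in the big space has its limit supported in the block'') fails already without limits. You notice the difficulty yourself near the end (``actually $A$ maps $\C^a$ into $\C^{a+x}$ so the image need not lie in the block''), but none of the floated fixes lands: block-lower-triangular matrices are dismissed as ``irrelevant,'' and the final appeal to Ness--Mumford--Brion plus ``semicontinuity'' is never turned into an argument.

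The paper's $\subseteq$ direction circumvents this entirely. It does not claim the orbit closure lives in the block. Instead, given $S' \in \overline{\GL\cdot(T\oplus 0)}$ with $\spec\mu(S') = q$, it first acts by an element of $\GL$ (unitaries suffice) so that each $\mu_i(S')$ is diagonal with non-increasing entries. Then it bounds the \emph{rank}: $\rank\mu_i(S') = \rank S'_i \leq \rank S_i = \rank T_i \leq a$ (resp.\ $b$, $c$), which forces the last $x$, $y$, $z$ entries of the non-increasing $q_i$ to vanish, so $q$ is a zero-padding of some $p$. Only then does it project onto the block; the projected tensor is a limit of restrictions of $T$ to the same $a\times b\times c$ format, hence lies in $\overline{\GL_{a}\times\GL_b\times\GL_c\cdot T}$ and witnesses $p \in \Delta(T)$. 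That rank argument after diagonalizing the marginals is the missing idea in your proposal. Your $\supseteq$ direction is correct and matches the paper.
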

\begin{proof}

    $\subseteq$.
    Write $S = T \oplus 0_{x,y,z}$.
    Let $q = (q_1,q_2,q_3) \in \Delta(S)$.
    Then there is an element $S' \in \overline{\G \cdot S}$ with $\spec \mu(S') = q$.
    By applying some element in $\G$, we may assume that $\mu_i(S') = \diag(q_i)$ for every~$i$. 
    Then $q = p \oplus (0_x \sep 0_y\sep 0_z)$ for some $p \in \R^a \times \R^b \times \R^c$:
    to see this, note that each $q_i$ is non-increasing, and that $\rank \mu_i(S') = \rank S_i' \leq \rank S_i$.  
    Now let $P_1 \colon \C^{a \times (a+x)}$ be the projection onto the first $a$ coordinates. Similarly define $P_2$ and $P_3$.
    Then $(P_1 \ot P_2 \ot P_3) S' \in \overline{\G \cdot T}$, and we find that 
    $\mu_i ((P_1 \ot P_2 \ot P_3) S') = P_i\mu_i(S') P_i^* = \diag(p_i)$ for every $i$, so $p \in \Delta(T)$.

    $\supseteq$. %
    We have $(\G_a \times \G_b \times \G_c \cdot T) \oplus 0_{x,y,z} \subseteq \G_{a+x}\times\G_{b+y}\times\G_{b+z} \cdot (T\oplus 0_{x,y,z})$. The inclusion follows.
\end{proof}

\begin{proposition}%
    \label{proposition:moment polytope embedding}
    Let $T$ and $T'$ be equivalent tensors.
    Then $\Delta(T)$ and $\Delta(T')$ are equal up to padding with zeros.
\end{proposition}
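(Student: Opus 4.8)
The plan is to reduce to the situation of two equivalent tensors living in a common ambient space, where equivalence coincides with lying in the same $\G$-orbit (by the characterization of equivalence recalled just above), and then to absorb the change of ambient space using \cref{lemma:padding with zeros}.

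First I would embed both tensors into a common space. Write $T \in \C^a\ot\C^b\ot\C^c$ and $T' \in \C^{a'}\ot\C^{b'}\ot\C^{c'}$, put $\bar a = \max(a,a')$, $\bar b = \max(b,b')$, $\bar c = \max(c,c')$, and set $\hat T = T \oplus 0_{\bar a - a,\, \bar b - b,\, \bar c - c}$ and $\hat T' = T' \oplus 0_{\bar a - a',\, \bar b - b',\, \bar c - c'}$, both of which lie in $\C^{\bar a}\ot\C^{\bar b}\ot\C^{\bar c}$. A routine check shows $S \sim S \oplus 0_{x,y,z}$ for every tensor $S$ (use the coordinate inclusion matrices for $S \geq S \oplus 0_{x,y,z}$ and the coordinate projections for the reverse direction); combining this with transitivity of restriction and the hypothesis $T \sim T'$ gives the chain $\hat T \geq T \geq T' \geq \hat T'$ together with its mirror image, hence $\hat T \sim \hat T'$.

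Next, since $\hat T$ and $\hat T'$ are equivalent and lie in the same space $\C^{\bar a}\ot\C^{\bar b}\ot\C^{\bar c}$, the characterization of equivalence gives $\hat T' \in \G \cdot \hat T$. Consequently $\overline{\G \cdot \hat T} = \overline{\G \cdot \hat T'}$, and since the moment polytope of a tensor depends only on this orbit closure, $\Delta(\hat T) = \Delta(\hat T')$.

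Finally I would unwind the padding via \cref{lemma:padding with zeros}, which gives $\Delta(\hat T) = \Delta(T) \oplus (0_{\bar a - a} \sep 0_{\bar b - b} \sep 0_{\bar c - c})$ and $\Delta(\hat T') = \Delta(T') \oplus (0_{\bar a - a'} \sep 0_{\bar b - b'} \sep 0_{\bar c - c'})$. Equating these two expressions says exactly that $\Delta(T)$ and $\Delta(T')$ agree after padding each of the three components with zeros, which is the assertion. I do not expect any genuine obstacle here: all the substance already sits in \cref{lemma:padding with zeros} and in the recorded fact that equivalent tensors of the same format share a $\G$-orbit, so the only care needed is the bookkeeping of which ambient dimensions get padded and the verification of the elementary equivalence $S \sim S \oplus 0_{x,y,z}$.
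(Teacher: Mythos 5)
Your proposal is correct and follows essentially the same route as the paper: pad both tensors into a common ambient space, observe they remain equivalent there and hence share a $\G$-orbit (so their moment polytopes coincide), and unwind the padding via \cref{lemma:padding with zeros}. The only difference is that you spell out the bookkeeping (choice of $\bar a, \bar b, \bar c$ and the elementary equivalence $S \sim S \oplus 0_{x,y,z}$) that the paper leaves implicit.
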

\begin{proof}
Pad $T$ and $T'$ with zeros such that the resulting tensors $\hat T$ and $\hat T'$ both lie in the same tensor space.
Then $\hat T$ and $\hat T'$ are still equivalent tensors, and because they live in the same tensor space we have $\hat T \in \G \cdot \hat T'$. This implies that $\Delta(\hat T) = \Delta(\hat T')$.
The result is then given by \cref{lemma:padding with zeros}.
\end{proof}

\begin{remark}
\label{remark:tensor moment polytope embedding}
By \cref{proposition:moment polytope embedding}, we can write statements such as $\Delta(T) = \Delta(T')$, $\Delta(T) \supseteq \Delta(T')$ and $p \in \Delta(T)$ even when $T$ and $T'$ are tensors of different dimensions and $p$ is a vector not in $\R^a \times \R^b \times \R^c$.
We understand the statements to concern the polytopes and points after appropriate padding or removal of zeros on each of the three systems, such that both live in the same space.

Importantly, the notion of $\Sl$-semistability does depend on the embedding of the tensor $T$ (in fact, $\Sl$-semistable tensors must always be concise).
So in contrast to the above, statements about $\Sl$-semistability will always be with respect to a fixed format.
\end{remark}

Degeneration implies inclusion of moment polytopes:

\begin{proposition}%
\label{proposition:degeneration monotone}
If $T \degengeq T'$ then $\Delta(T) \supseteq \Delta(T')$.
\end{proposition}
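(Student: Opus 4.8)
The plan is to reduce the claim to the statement that for a $\GL$-closed irreducible variety $W$, the moment polytope $\Delta(W)$ only depends on the largest orbit through $W$ in the sense that $\Delta(W) = \Delta(\overline{\GL\cdot v})$ whenever $\overline{\GL\cdot v}$ is dense in $W$; combined with the monotonicity $\Delta(W')\subseteq\Delta(W)$ for $W'\subseteq W$ a subvariety, which is immediate from \cref{definition:moment polytope} since the defining set of spectra of $\mu$ only shrinks. Concretely, suppose $T \degengeq T'$. By definition this means $\overline{\GL\cdot T}$ contains a tensor $S$ equivalent to $T'$. First I would handle the case where $T$ and $T'$ live in the same tensor space; the general case then follows by padding with zeros and invoking \cref{lemma:padding with zeros} and \cref{proposition:moment polytope embedding} (this is exactly the content of \cref{remark:tensor moment polytope embedding}, so the statement $\Delta(T)\supseteq\Delta(T')$ is understood after appropriate padding).

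In the same-space case: since $S \in \overline{\GL\cdot T}$ and this orbit closure is $\GL$-closed, the entire orbit $\GL\cdot S$ and hence its closure $\overline{\GL\cdot S}$ is contained in $\overline{\GL\cdot T}$. Because $S \in \GL\cdot T'$ (equivalence of tensors in the same space means $S$ lies in the $\GL$-orbit of $T'$, as noted in the text), we have $\overline{\GL\cdot S} = \overline{\GL\cdot T'}$. Therefore
\[
\overline{\GL\cdot T'} \;=\; \overline{\GL\cdot S} \;\subseteq\; \overline{\GL\cdot T}.
\]
Now apply $\Delta(\cdot)$: by \cref{definition:moment polytope}, $\Delta(W) = \{\spec(\mu(R)) : R \in W\setminus\{0\}\}$, so for $W' \subseteq W$ one has $\Delta(W')\subseteq\Delta(W)$ directly. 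Taking $W' = \overline{\GL\cdot T'}$ and $W = \overline{\GL\cdot T}$ gives $\Delta(T') = \Delta(\overline{\GL\cdot T'}) \subseteq \Delta(\overline{\GL\cdot T}) = \Delta(T)$, as desired.

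The only genuinely delicate point — though it is routine — is the bookkeeping around padding in the case $a'<a$, etc., i.e.\ checking that the inclusion of orbit closures survives the embedding into a common tensor space. For this one embeds $T'$ (and $S$) into the space of $T$ by padding with zeros; the padded $S$ still lies in $\overline{\GL\cdot T}$ (the larger group acts, and limits are preserved), and the padded $S$ is in the $\GL$-orbit closure of the padded $T'$ by \cref{lemma:padding with zeros}. Then the same orbit-closure inclusion argument applies verbatim, and one concludes using \cref{remark:tensor moment polytope embedding} that $\Delta(T)\supseteq\Delta(T')$ in the padded sense. I do not expect any real obstacle here; the proof is essentially a chain of set inclusions of orbit closures followed by the trivial monotonicity of $\Delta$ under taking subvarieties.
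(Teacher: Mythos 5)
Your proof is correct and follows essentially the same route as the paper: reduce to a common tensor space by padding, observe that degeneration gives the orbit-closure inclusion $\overline{\GL\cdot T'} \subseteq \overline{\GL\cdot T}$, and conclude by monotonicity of $\Delta$ under subvariety inclusion, which is immediate from \cref{definition:moment polytope}. The paper states this more tersely (``after embedding in a large enough space, it follows that $\overline{\G\cdot T}\supseteq\overline{\G\cdot T'}$''), but the argument is the same.
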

\begin{proof}
Degeneration is a transitive relation, so after embedding in a large enough space, it follows that $\overline{\G \cdot T} \supseteq \overline{\G \cdot T'}$.
The result then follows directly from the definition of the moment polytope (\cref{definition:moment polytope}).
\end{proof}

Moreover, almost all restrictions of a tensor $T$ of shape $(x,y,z)$ to a tensor $T'$ of shape $(a,b,c)$ will result in a moment polytope $\Delta(T')$ given by intersecting $\Delta(T)$ with $\R^{a} \times \R^{b} \times \R^{c}$.

\begin{proposition}[{\cite[Corollary~3.7]{burgisser2018tensorScaling}}]
    \label{proposition:tensor moment polytope generic restriction}
    Let~$T \in \C^x \ot \C^y \ot \C^z$ and $(a,b,c) \leq (x,y,z)$ entry-wise.
    Denote with $P_a \ot P_b \ot P_c$ the restriction from $\C^x\ot\C^y\ot\C^z$ to $\C^{a} \ot \C^{b} \ot \C^{c}$ that projects onto the first $a,b$ and $c$ coordinates in the respective system.
    Then for generic $(A,B,C) \in \G$,
    \[
        \Delta\big( (P_a \ot P_b \ot P_c)(A \ot B \ot C)T \big) = \Delta(T) \cap \big(\R^{a} \times \R^{b} \times \R^{c} \big).
    \]
    In particular, there exists a restriction $T' \in \C^a \ot \C^b \ot \C^c$ of $T$ such that $\Delta( T' ) = \Delta(T) \cap (\R^{a} \times \R^{b} \times \R^{c})$.
\end{proposition}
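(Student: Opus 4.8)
The plan is to prove the two inclusions separately; only ``$\supseteq$'' requires genericity. Writing $T' = (P_a \ot P_b \ot P_c)(A\ot B\ot C)T$, the inclusion ``$\subseteq$'' in fact holds for \emph{every} $(A,B,C)\in\GL$: the tensor $T'$ is a restriction of $T$, hence $T \degengeq T'$, so $\Delta(T') \subseteq \Delta(T)$ by \cref{proposition:degeneration monotone} (after padding with zeros, cf.\ \cref{remark:tensor moment polytope embedding}), while $T' \in \C^a\ot\C^b\ot\C^c$ forces $\Delta(T') \subseteq \R^a\times\R^b\times\R^c$ by the definition of the moment polytope. Thus $\Delta(T') \subseteq \Delta(T)\cap(\R^a\times\R^b\times\R^c)=:Q$.

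For the reverse inclusion I would first reduce to rational vertices. By \cref{theorem:tensor moment polytopes nm}, $\Delta(T)$ is a rational polytope, and intersecting it with the rational coordinate subspace $\R^a\times\R^b\times\R^c$ yields a rational polytope $Q$ with finitely many, rational, vertices. Since $\Delta(T')$ is convex for every $(A,B,C)$ (it is itself a polytope), it suffices to produce, for each vertex $v$ of $Q$, a nonempty Zariski-open subset $U_v \subseteq \GL$ such that $v \in \Delta(T')$ whenever $(A,B,C) \in U_v$: then $U := \bigcap_v U_v$ is nonempty and Zariski-open in the irreducible group $\GL$, and for $(A,B,C) \in U$ the polytope $\Delta(T')$ contains all vertices of $Q$, hence all of $Q$, which together with the first part gives $\Delta(T')=Q$. (The degenerate case $T=0$ is trivial.)

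It remains to handle a fixed rational $v = (v_1 \sep v_2 \sep v_3) \in Q$, whose three components are supported on the first $a,b,c$ coordinates. Here I would invoke the representation-theoretic description of the moment polytope recalled in \cref{subsection:definitions and overview}: for a tensor $S$ of any format, $v \in \Delta(S)$ if and only if for some $N \in \N$ with $Nv$ an integral dominant weight there is a highest weight vector $f$ of weight $Nv$ (for the standard Borel of the corresponding $\GL$) on the coordinate ring of the $N$-th tensor power that does not vanish identically on $\overline{\GL\cdot S^{\ot N}}$. Two observations make this usable. (a) Since $v \in \Delta(T)$, fix such $N$ and $f$ for $S = T$; as $f$ is not identically zero on $\overline{\GL\cdot T^{\ot N}}$ and the orbit is dense in it, the map $(A,B,C) \mapsto f\bigl((A\ot B\ot C)^{\ot N}T^{\ot N}\bigr)$ is a nonzero polynomial in the entries of $A,B,C$, whose non-vanishing locus is a nonempty Zariski-open $U_v \subseteq \GL$. (b) Because $Nv$ is supported on the first $a,b,c$ coordinates, the standard theory of highest weight vectors for the standard Borel (products of flag minors / semistandard tableaux) lets one choose $f$ depending only on the first $a$ (resp.\ $b$, $c$) coordinates of each tensor factor, so that $f = f\circ(P_a\ot P_b\ot P_c)^{\ot N}$ and $f$ descends to a highest weight vector of weight $Nv$ for the smaller group $\GL_a\times\GL_b\times\GL_c$ on $(\C^a\ot\C^b\ot\C^c)^{\ot N}$. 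Combining (a) and (b): for $(A,B,C) \in U_v$, using $(A\ot B\ot C)^{\ot N}T^{\ot N} = \bigl((A\ot B\ot C)T\bigr)^{\ot N}$ and the block-dependence of $f$, one obtains $f\bigl((T')^{\ot N}\bigr) = f\bigl((A\ot B\ot C)^{\ot N}T^{\ot N}\bigr)\neq 0$, hence $v \in \Delta(T')$ by the ``if'' direction of the characterization applied to $T'$. This furnishes $U_v$; the final sentence of the statement follows by choosing any $(A,B,C)$ in the resulting generic set.

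The main obstacle is the representation-theoretic input of the third paragraph: formulating the highest-weight-vector characterization of moment polytope membership with the correct direction and tensor power, and — more delicately — verifying point (b), namely that a witnessing highest weight vector for a weight supported on the first $a,b,c$ coordinates can be taken to factor through the coordinate projection $P_a\ot P_b\ot P_c$ and to descend to the smaller format. An alternative that avoids highest weight vectors is to reduce ``$v \in \Delta(S)$'' for rational $v$, via a shifting/torus-twist construction together with \cref{corollary:uniform marginals}, to $\Sl$-semistability of an auxiliary tensor built from $S^{\ot N}$, then replace semistability by non-vanishing of an invariant using \cref{proposition:tensor null-cone} and run the same genericity argument; the block-compatibility issue of point (b) resurfaces there as the claim that this construction commutes with the coordinate restriction $P_a\ot P_b\ot P_c$.
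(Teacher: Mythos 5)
The paper does not prove this proposition; it is quoted from \cite[Corollary~3.7]{burgisser2018tensorScaling}, so there is no ``paper's own proof'' to compare against. I will therefore assess your argument on its own merits.

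Your architecture is sound and, I believe, essentially the right one: the $\subseteq$ inclusion for arbitrary $(A,B,C)$ follows from \cref{proposition:degeneration monotone}, and for $\supseteq$ it suffices to catch the finitely many rational vertices of $Q=\Delta(T)\cap(\R^a\times\R^b\times\R^c)$ with finitely many nonempty Zariski-open conditions, then intersect, using irreducibility of $\G$ and convexity of $\Delta(T')$. That reduction is correct and clean.

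The concrete problem is in your paragraph three, and it is more than a matter of care. As stated, ``a highest weight vector $f$ of weight $Nv$ on the coordinate ring'' does not exist: every torus weight occurring in $\C[V^{\ot N}]$ is entrywise nonpositive, because each coordinate function has weight equal to minus a sum of standard basis vectors. So for $v\neq 0$ there is no nonzero polynomial on $V^{\ot N}$ of weight $Nv$. There are two equivalent repairs. Either work with \emph{lowest} weight vectors (for the upper-triangular Borel) of weight $-Nv$; or, more transparently, take a highest weight vector $w_0\in V^{\ot N}$ (a \emph{vector}, not a function) of weight $\lambda=Nv$ and use the polynomial $f(S)=\<w_0, S^{\ot N}>$, which one checks is a lowest weight vector of weight $-Nv$ in $\C[V]_N$. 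With the second formulation, your claim (b) becomes nearly trivial and needs no flag minors or tableaux: the weight-$\lambda$ eigenspace of $V^{\ot N}$ under the maximal torus is spanned by basis tensors $e_{i_1,j_1,k_1}\ot\cdots\ot e_{i_N,j_N,k_N}$ in which the multiplicity of each index matches $\lambda$, so if $\lambda$ is supported on $[a]\times[b]\times[c]$ then \emph{every} weight-$\lambda$ vector, and in particular $w_0$, lies in $(\C^a\ot\C^b\ot\C^c)^{\ot N}\subseteq V^{\ot N}$. Hence $f(S)=\<w_0,(\Pi S)^{\ot N}>$ where $\Pi$ is the orthogonal projection onto $\C^a\ot\C^b\ot\C^c$, i.e.\ $f$ factors through $P_a\ot P_b\ot P_c$; and $w_0$ is automatically a highest weight vector of weight $\lambda$ for the subgroup $\GL_a\times\GL_b\times\GL_c$ since its raising operators are among those of the ambient group. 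Once this is in place, your nonvanishing-locus argument goes through verbatim. (A minor technical point worth keeping in mind: the representation-theoretic description is \emph{a priori} a closure, so one needs the standard fact that every rational point of $\Delta(T)$ is realized exactly as some $\lambda/N$ with nonzero $\lambda$-isotypic component of $T^{\ot N}$; this is what makes the reduction to rational vertices legitimate.)
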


\section{Moment polytope separation for matrix multiplication and unit tensors}\label{subsec:mom-sep}

We will now work towards the proof of \cref{theorem:matrix multiplication}.

\subsection{Uniform points in the moment polytope and semistability}
We will show that inclusion of any ``uniform'' point $(u_a \sep u_b \sep u_c)$ in the moment polytope of a tensor $T$ is characterized by the existence of a restriction of $T$ to an $\SL$-semistable tensor of shape $a \times b \times c$.
We will then use this to prove \cref{theorem:matrix multiplication} by showing that $\MM_{n}$ cannot restrict to any $\Sl$-semistable tensor in $\C^2 \ot \C^{c-1} \ot \C^c$.

\begin{lemma}[Uniform points and semistability]
    \label{lemma:tensor semistability}
    Let $1 \leq a \leq a'$, $1 \leq b \leq b'$, $1\leq c \leq c'$ be integers. Let $\tensor \in \C^{a'} \ot \C^{b'} \ot \C^{c'}$.
    The following are equivalent: 
    \begin{enumerate}[label=\upshape(\arabic*)]
        \item \label{item:semistable}
        $(u_a \sep u_b \sep u_c) \in \Delta(\tensor)$.
        \item \label{item:semistable restrict}
        There exists an $\Sl$-semistable tensor $S \in \C^a \ot \C^b \ot \C^c$ such that $\tensor \geq S$.
        \item \label{item:semistable degeneration}
        There exists an $\Sl$-semistable tensor $S \in \C^a \ot \C^b \ot \C^c$ such that $\tensor \degengeq S$.
    \end{enumerate}
\end{lemma}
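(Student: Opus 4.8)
The plan is to establish the cycle of implications $\ref{item:semistable} \Rightarrow \ref{item:semistable restrict} \Rightarrow \ref{item:semistable degeneration} \Rightarrow \ref{item:semistable}$, leaning on the generic-restriction behaviour of moment polytopes (\cref{proposition:tensor moment polytope generic restriction}) and on the uniform-marginals characterization of semistability (\cref{corollary:uniform marginals}), which says precisely that a nonzero tensor in $\C^a \ot \C^b \ot \C^c$ is $\Sl$-semistable if and only if $(u_a \sep u_b \sep u_c) \in \Delta(\cdot)$.

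For $\ref{item:semistable} \Rightarrow \ref{item:semistable restrict}$: assume $(u_a \sep u_b \sep u_c) \in \Delta(\tensor)$. Applying \cref{proposition:tensor moment polytope generic restriction} with ambient format $(a',b',c')$ and target format $(a,b,c)$ (which is entrywise $\leq (a',b',c')$ by hypothesis), we obtain a restriction $S \in \C^a \ot \C^b \ot \C^c$ of $\tensor$ with $\Delta(S) = \Delta(\tensor) \cap (\R^a \times \R^b \times \R^c)$. Since $(u_a \sep u_b \sep u_c)$ is supported on the first $a$, $b$, $c$ coordinates of the respective systems, it lies in the subspace $\R^a \times \R^b \times \R^c$; together with the assumption it therefore lies in $\Delta(S)$. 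As $\Delta(S) \neq \emptyset$ we have $S \neq 0$, and as $(u_a \sep u_b \sep u_c)$ is exactly the fully uniform point for the format $a \times b \times c$, \cref{corollary:uniform marginals} gives that $S$ is $\Sl$-semistable. Since $\tensor \geq S$, this is \ref{item:semistable restrict}. The implication $\ref{item:semistable restrict} \Rightarrow \ref{item:semistable degeneration}$ is immediate, since restriction implies degeneration. For $\ref{item:semistable degeneration} \Rightarrow \ref{item:semistable}$: if $\tensor \degengeq S$ with $S \in \C^a \ot \C^b \ot \C^c$ $\Sl$-semistable, then \cref{proposition:degeneration monotone} yields $\Delta(\tensor) \supseteq \Delta(S)$ (up to padding with zeros, in the sense of \cref{remark:tensor moment polytope embedding}), while \cref{corollary:uniform marginals} yields $(u_a \sep u_b \sep u_c) \in \Delta(S)$; hence $(u_a \sep u_b \sep u_c) \in \Delta(\tensor)$.

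The only step needing real care is $\ref{item:semistable} \Rightarrow \ref{item:semistable restrict}$. A priori, membership of a point in $\Delta(\tensor)$ only produces, through the defining orbit closure, a \emph{degeneration} of $\tensor$ — one can take $\tensor' \in \overline{\G \cdot \tensor}$ with $\mu(\tensor') = (I_{a'}/a, I_{b'}/b, I_{c'}/c)$ after diagonalizing by a $\G$-action, note that $\tensor'$ is then supported on the first $a,b,c$ coordinates and hence lies in $\C^a \ot \C^b \ot \C^c$ up to zero padding, and conclude $\Sl$-polystability via Kempf--Ness (\cref{theorem:tensor kempf-ness}) — but this route only gives \ref{item:semistable degeneration}, not a genuine restriction. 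Upgrading to an actual restriction is exactly what \cref{proposition:tensor moment polytope generic restriction} supplies, and the observation that makes it directly applicable is that the uniform point already lies in the coordinate subspace $\R^a \times \R^b \times \R^c$, so it is not destroyed upon intersecting $\Delta(\tensor)$ with that subspace.
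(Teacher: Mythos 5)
Your proof is correct, and it takes a genuinely shorter route for the key implication \ref{item:semistable} $\Rightarrow$ \ref{item:semistable restrict}. The paper's own proof first passes to a limit: it picks $S_0' \in \overline{\G \cdot \tensor}$ with $\spec \mu(S_0') = (u_a \sep u_b \sep u_c)$, applies \cref{proposition:tensor moment polytope generic restriction} to $S_0'$ to produce a semistable $S_0 \in \C^a \ot \C^b \ot \C^c$, and then --- because $S_0$ is only a degeneration of $\tensor$, not a restriction --- constructs a sequence of honest restrictions $(P_1 A_i \ot P_2 B_i \ot P_3 C_i)\tensor$ converging to $S_0$ and invokes Euclidean-openness of the semistable locus (via \cref{proposition:tensor null-cone}) to find a semistable term in that sequence. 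You instead apply \cref{proposition:tensor moment polytope generic restriction} directly to $\tensor$ itself, which already hands you a restriction $S \leq \tensor$ in $\C^a \ot \C^b \ot \C^c$ with $\Delta(S) = \Delta(\tensor) \cap (\R^a \times \R^b \times \R^c)$; since $(u_a \sep u_b \sep u_c)$ is supported on the first $a,b,c$ coordinates, it survives the intersection, is in $\Delta(S)$, forces $S \neq 0$, and \cref{corollary:uniform marginals} then gives semistability of $S$. This completely sidesteps the limit--sequence--openness step. Both arguments use exactly the same ingredients (\cref{proposition:tensor moment polytope generic restriction}, \cref{corollary:uniform marginals}, \cref{proposition:degeneration monotone}); yours applies the generic-restriction proposition at the right object and so is shorter and avoids appealing to the openness of semistability. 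The remaining implications \ref{item:semistable restrict} $\Rightarrow$ \ref{item:semistable degeneration} $\Rightarrow$ \ref{item:semistable} are handled identically to the paper.
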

\begin{proof}
We first prove \ref{item:semistable} $\Rightarrow$ \ref{item:semistable restrict}.
There is a sequence $(A_i,B_i,C_i) \in \G$ where $i \in \N$, such that $(A_i,B_i,C_i) \cdot T$ converges to a tensor $S_0' \in \C^{a'} \ot \C^{b'} \ot \C^{c'}$ and
$\spec(\mu(S_0')) = ( u_a \sep u_b \sep u_c) \in \R^{a'} \times \R^{b'} \times \R^{c'}$.
Then by \cref{proposition:tensor moment polytope generic restriction}, there exists a restriction $S_0 \in \C^a \ot \C^b \ot \C^c$ of $S_0'$ which contains $( u_a \sep u_b \sep u_c) \in \R^a \times \R^b \times \R^c$ in its moment polytope.
By \cref{corollary:uniform marginals}, $S_0$ is $\SL$-semistable.
Let $(P_1,P_2,P_3)$ be the restriction map satisfying $(P_1,P_2,P_3) \cdot S_0' = S_0$.
Then $S_i = (P_1A_i,P_2B_i,P_3C_i) \cdot T \in \C^a \ot \C^b \ot \C^b$  is a sequence of tensors converging to $S_0$.
The set of $\Sl$-semistable tensors in $\C^a \ot \C^b \ot \C^c$ is Euclidean-open.
To see this, note that the tensors that are not $\Sl$-semistable are precisely the simultaneous zero-set of a finite set of polynomials (\cref{proposition:tensor null-cone}), hence Euclidean-closed, and its complement is Euclidean-open.
We conclude there must exists an $i$ such that $S = (P_1A_i,P_2B_i,P_3C_i) \cdot T$ is $\Sl$-semistable.

The implication \ref{item:semistable restrict} $\Rightarrow$ \ref{item:semistable degeneration} follows from the general fact that any restriction is a degeneration.

The implication \ref{item:semistable degeneration} $\Rightarrow$ \ref{item:semistable} follows from \cref{proposition:degeneration monotone}, which gives that $\Delta(S) \subseteq \Delta(T)$ because $S \degenleq T$, and \cref{corollary:uniform marginals}, which gives that $(u_a\sep u_b \sep u_c) \in \Delta(S)$.
\end{proof}

\subsection{Matrix pencils and polynomial multiplication tensors}
A special property of the space $\C^2 \ot \C^{c-1} \ot \C^c$ is that in here the $\Sl$-semistable tensors form a single $\G$-orbit $\G \cdot \polmul_c$, where $\polmul_c$ is given by %
\begin{align}
    \label{definition:matrix pencils tensor}
    \polmul_{c}
        &\coloneqq \sum_{i \in [2]} \sum_{j \in [c-1]} e_i \ot e_j \ot e_{i+j-1} \\
        &= e_1 \ot
        \left[\begin{array}{@{}c|c@{}}
            I_{c-1} & 0
        \end{array}\right]
        +
        e_2 \ot
        \left[\begin{array}{@{}c|c@{}}
            0 & I_{c-1}
        \end{array}\right],
\end{align}
where $\left[\begin{array}{@{}c|c@{}}
            I_{c-1} & 0
        \end{array}\right]$ denotes the concatenation of the $(c-1)\times (c-1)$ identity matrix with a zero column, and similarly for $\left[\begin{array}{@{}c|c@{}}
            0 & I_{c-1}
        \end{array}\right]$.
This then allows us to take $S = \polmul_c$ in \Cref{lemma:tensor semistability}.

We establish this property in three steps.
First, we remark that tensors in $\C^2 \ot \C^b \ot \C^c$ are known as matrix pencils \cite[Section~19]{burgisser1996algebraic}.
We use a result from the theory of matrix pencils \cite{pokrzywaMatrixPencils1986}, which states that the $\G$-orbit of $\polmul_c$ is dense. That is, $\overline{\G \cdot \polmul_c} = \C^2 \ot \C^{c-1} \ot \C^c$ (\cref{lemma:matrix pencils}).
We then prove that the $\Sl$-orbit of $\polmul_c$ is closed (\Cref{lemma:S_c closed SL orbit}).
This allows us to show that all tensors in the boundary $\overline{\G \cdot \polmul_c} \setminus (\G \cdot \polmul_c)$ are $\Sl$-unstable (\Cref{lemma:orbit border}).
Combining these results, all $\Sl$-semistable tensors in $\C^2 \ot \C^{c-1} \ot \C^c$ must indeed lie in $\G \cdot \polmul_c$.

\begin{lemma}[{\cite[Section ``Minimal Pencils'', p.\ 119]{pokrzywaMatrixPencils1986}}]
    \label{lemma:matrix pencils}
    Let $b \neq c$.
    Then there exists a $\G$-orbit in $\C^2 \ot \C^b \ot \C^c$ that is dense. %
    When $b = c - 1$, this is the~$\G$-orbit of~$\polmul_c$.
\end{lemma}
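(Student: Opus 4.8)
The plan is to derive the lemma from Kronecker's canonical form for matrix pencils. A tensor in $\C^2 \ot \C^b \ot \C^c$ is the same data as a pencil $A + \lambda B$ of $b \times c$ matrices, on which $\GL_b \times \GL_c$ acts by strict equivalence $(A,B) \mapsto (PAQ, PBQ)$ and $\GL_2$ acts by reparametrizing the line through $A$ and $B$. Kronecker's theorem says that, up to this action, every pencil is a direct sum of blocks of three kinds: column minimal-index blocks $L_\eps$ of size $\eps \times (\eps+1)$, row minimal-index blocks $L_\eta^\top$ of size $(\eta+1)\times\eta$, and regular Jordan blocks at finite or infinite eigenvalues (square). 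The invariant I would track is $r(\lambda) \coloneqq \rank(A + \lambda B)$ for $\lambda \in \P^1$, with $\lambda = \infty$ read as $\rank B$.

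First I would isolate the generic stratum. Assume $b < c$ (the case $b > c$ is the transpose, and $b = c$ is excluded). The locus $\mathcal U \coloneqq \{(A,B) : r(\lambda) = b \text{ for all } \lambda \in \P^1\}$ is Zariski-open: its complement is the image, under the proper projection $(\C^2 \ot \C^b \ot \C^c) \times \P^1 \to \C^2 \ot \C^b \ot \C^c$, of the closed determinantal locus where $\rank(A + \lambda B) \le b-1$. It is also nonempty, since e.g. the pencil with slices $[I_{c-1} \mid 0]$ and $[0 \mid I_{c-1}]$ (a direct sum of such blocks, in the general case) has full row rank at every $\lambda$. For $(A,B) \in \mathcal U$ there are no finite or infinite eigenvalues (they would force some $r(\lambda) < b$), and no row minimal indices or zero rows (either would force the normal rank, hence $r(\lambda)$ for generic $\lambda$, below $b$). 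Hence the Kronecker form of such a pencil consists only of column blocks, $L_{\eps_1} \oplus \cdots \oplus L_{\eps_m}$ with $\eps_i \ge 0$; matching row and column counts gives $\sum_i \eps_i = b$ and $m = c - b$.

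When $b = c - 1$ this forces $m = 1$, hence a single block $L_{c-1}$ with $\eps_1 = c-1$, which written as a tensor is exactly $\polmul_c$. So in this case $\mathcal U$ is a single $\GL$-orbit, namely $\GL \cdot \polmul_c$; being a nonempty Zariski-open subset of the irreducible variety $\C^2 \ot \C^{c-1} \ot \C^c$, it is dense. This proves both assertions of the lemma for $b = c-1$. For general $b \neq c$, $\mathcal U$ is instead a finite union of $\GL$-orbits, one for each partition of $\min(b,c)$ into $|c-b|$ parts; to extract the dense one I would either invoke Pokrzywa's determination of the closure order on pencil orbits (which singles out the most balanced partition as the unique maximal type, hence as an open orbit), or compute orbit dimensions directly and check that $\dim \Stab_{\GL}$ of the block sum is minimized by the balanced partition.

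The only real obstacle lies in this last point — showing that among the finitely many eigenvalue-free orbits the balanced one is dense (equivalently, pinning down its stabilizer dimension); this is where the combinatorics of Kronecker blocks is genuinely used, and it is cleanest to quote it from the pencil literature. For the format actually needed in what follows, $b = c-1$, the stratum $\mathcal U$ is a single orbit, so the argument above is self-contained modulo Kronecker's canonical form.
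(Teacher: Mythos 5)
The paper does not prove this lemma; it simply cites Pokrzywa's 1986 paper on matrix pencils. Your proof via the Kronecker canonical form is correct and is the standard argument in the pencil literature that the cited reference is based on. Your reduction is clean: the $\Sl$-semistable / generic stratum $\mathcal U$ where $\rank(A+\lambda B)=b$ for all $\lambda\in\P^1$ is Zariski-open via properness of the projection, rank-constancy rules out eigenvalues and row minimal indices, and the block count $\sum_i\eps_i=b$, $m=c-b$ forces a single $L_{c-1}$ block when $b=c-1$, which is precisely $\polmul_c$. You are also right that the general $b\neq c$ claim requires a bit more (either Pokrzywa's dominance order on Kronecker types or a stabilizer-dimension count to single out the balanced partition), but since the paper only invokes the $b=c-1$ case in what follows, your self-contained argument fully covers what is needed. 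One small slip worth noting: when you cite "$[I_{c-1}\mid 0]$ and $[0\mid I_{c-1}]$ (a direct sum of such blocks, in the general case)" as the witness of nonemptiness of $\mathcal U$ for general $b<c$, the parenthetical is doing real work — you should exhibit an explicit direct sum of $L_{\eps_i}$ blocks with all $\eps_i\geq 1$ when $c-b\leq b$, or allow some $L_0$ blocks when $c-b>b$; this is routine but should be stated.
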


\begin{lemma}%
    \label{lemma:S_c closed SL orbit}
    The $\Sl$-orbit of $\polmul_c$ is closed.
\end{lemma}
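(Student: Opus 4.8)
I want to show that the $\SL$-orbit of $\polmul_c$ is closed. The natural tool is the Kempf–Ness theorem (\cref{theorem:tensor kempf-ness}): a tensor has closed $\SL$-orbit precisely when its orbit contains a point with maximally mixed marginals $(I_a/a, I_b/b, I_c/c)$. So the plan is to exhibit an explicit element of $\GL \cdot \polmul_c$ (or just directly check $\polmul_c$ after a diagonal rescaling) whose three marginals are proportional to the identity, and conclude $\SL$-polystability, which is the same as the $\SL$-orbit being closed.

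**Key steps.** First I would compute the three flattenings of $\polmul_c \in \C^2 \ot \C^{c-1} \ot \C^c$. From the presentation $\polmul_c = \sum_{i\in[2]}\sum_{j\in[c-1]} e_i \ot e_j \ot e_{i+j-1}$ one reads off: the first flattening $(\polmul_c)_1 \colon \C^2 \to \C^{c-1}\ot\C^c$ sends $e_1$ to $[\,I_{c-1}\mid 0\,]$ and $e_2$ to $[\,0\mid I_{c-1}\,]$, each of Frobenius norm $\sqrt{c-1}$ and with orthogonal rows, so $(\polmul_c)_1^*(\polmul_c)_1 = (c-1) I_2$, already proportional to the identity. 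The second flattening $(\polmul_c)_2\colon \C^{c-1}\to\C^2\ot\C^c$ sends $e_j \mapsto e_1\ot e_j + e_2 \ot e_{j+1}$; these are orthonormal up to scaling by $\sqrt 2$, giving $(\polmul_c)_2^*(\polmul_c)_2 = 2 I_{c-1}$. The third flattening $(\polmul_c)_3\colon \C^c \to \C^2\ot\C^{c-1}$ sends $e_k \mapsto \sum_{i+j=k+1} e_i\ot e_j$, i.e. $e_1 \mapsto e_1\ot e_1$, $e_c \mapsto e_2 \ot e_{c-1}$, and $e_k \mapsto e_1\ot e_k + e_2\ot e_{k-1}$ for $1<k<c$; so $(\polmul_c)_3^*(\polmul_c)_3 = \diag(1,2,2,\ldots,2,1)$, which is \emph{not} proportional to $I_c$.

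**Fixing the third marginal.** The third marginal is diagonal but unbalanced, so $\polmul_c$ itself is not in Kempf–Ness form; I need to act by a diagonal element of $\GL_2 \times \GL_{c-1} \times \GL_c$ to equalize it. Concretely, conjugating/rescaling by $\diag(1,t,\ldots,t) \in \GL_{c-1}$ on the middle factor and a compensating diagonal on the third factor rescales the "1" entries relative to the "2" entries; choosing the scalar to send $\diag(1,2,\ldots,2,1)$ to a multiple of $I_c$ (possible since we only need to boost the two endpoint coordinates by a factor $\sqrt 2$ while leaving the interior untouched — and we have enough diagonal freedom on the $c$-dimensional and $(c-1)$-dimensional factors to do this, using that the determinant constraint can be absorbed because Kempf–Ness allows any element of $\GL$, not just $\SL$) yields a tensor $S \in \GL\cdot\polmul_c$ with $\mu(S) = (I_2/2, I_{c-1}/(c-1), I_c/c)$. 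By \cref{theorem:tensor kempf-ness}, $\polmul_c$ is $\SL$-polystable, i.e. $\SL\cdot\polmul_c$ is closed.

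**Main obstacle.** The arithmetic of the balancing step is the delicate part: I must verify that a single diagonal gauge transformation can \emph{simultaneously} keep the first two marginals proportional to the identity while fixing the third, and check the determinant bookkeeping (since $\SL$-polystability is phrased via $\SL$, but Kempf–Ness as stated allows an arbitrary $\SL$-orbit representative — one must make sure the equalizing element can be taken in $\SL$ up to an overall scalar, which is harmless since scalars act trivially on $\mu$). An alternative that sidesteps the explicit computation is a symmetry argument: $\polmul_c$ carries a large torus of symmetries (the weights $e_i\ot e_j\ot e_{i+j-1}$ lie on an affine line), and one can argue abstractly that its unique closed orbit in $\overline{\GL\cdot\polmul_c}$ must be the orbit of $\polmul_c$ itself by dimension count against \cref{lemma:matrix pencils} — but I expect the direct Kempf–Ness computation above to be cleanest, with the balancing of the endpoint coordinates being the one place to be careful.
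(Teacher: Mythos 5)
Your strategy is the same as the paper's: compute the marginals of $\polmul_c$, observe that the third one equals $\diag(1,2,\ldots,2,1)$ rather than a multiple of $I_c$, rebalance by a diagonal element of $\GL$ acting on the second and third factors, and then apply Kempf--Ness (\cref{theorem:tensor kempf-ness}). Your marginal computations are correct, and your remark about the $\SL$-versus-$\GL$ bookkeeping is fine since $\mu$ is invariant under scaling each tensor factor separately.

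However, you flag the rebalancing as the \emph{main obstacle} and then leave it unresolved, and the ansatz you sketch in fact fails. For $c \geq 4$, no diagonal scaling of the form $\diag(1,t,\ldots,t)$ on the middle factor, combined with any diagonal $\diag(d_1,\ldots,d_c)$ on the third, can make all three marginals uniform: already for $c=4$ (so $a=(1,t,t)$), uniformity of the third marginal forces $d_4^2 = \tfrac{1+t^2}{t^2}\,d_2^2$, while equality of $(T_2^*T_2)_{22}$ and $(T_2^*T_2)_{33}$ forces $d_2^2 = d_4^2$, giving $t^2 = 1+t^2$. The correct rebalancing is nonuniform across all indices; the paper exhibits $\polmul_c' = \sum_{j} \sqrt{c-j}\, e_1 \ot e_j \ot e_j + \sum_{j} \sqrt{j}\, e_2 \ot e_j \ot e_{j+1}$, whose coefficients interpolate from $(\sqrt{c-1},1)$ at $j=1$ to $(1,\sqrt{c-1})$ at $j=c-1$. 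One then checks directly that $T_1^*T_1 = \tfrac{c(c-1)}{2} I_2$, $T_2^*T_2 = c\,I_{c-1}$, and $T_3^*T_3 = (c-1) I_c$, so $\mu(\polmul_c') = (I_2/2, I_{c-1}/(c-1), I_c/c)$, and $\polmul_c'$ lies in $\GL\cdot\polmul_c$ via the diagonal gauge obtained by solving $a_j d_j = \sqrt{c-j}$ and $a_j d_{j+1} = \sqrt{j}$ recursively. Without exhibiting such a rescaling (or otherwise proving that one exists), the argument is incomplete.
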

\begin{proof}
    Acting with suitable diagonal matrices on the second and third components, one can show that there exists $C \in \R$ such that $C \polmul_{c}$ is~$\Sl$-equivalent to
    \begin{align}
        \polmul_{c}' \coloneqq
        e_1 \ot
        \scalebox{0.8}{$
        \left[\begin{array}{cccc|c}
                 \sqrt{c-1} &        &          &          & 0      \\
                          & \ddots &          &          & 0      \\
                          &        & \sqrt{2} &          & \vdots \\
                          &        &          & \sqrt{1} & 0
            \end{array}\right]$}
        +
        e_2 \ot
        \scalebox{0.8}{$
        \left[\begin{array}{c|cccc}
             0      & \sqrt{1}   &          &        &          \\
             0      &            & \sqrt{2} &        &          \\
             \vdots &            &          & \ddots &          \\
             0      &            &          &        & \sqrt{c-1}
        \end{array}\right]$}.
    \end{align}
    Another straightforward computation shows that $\mu(\polmul_{c}') = (I_2,I_{c-1}/(c-1),I_c/c)$.
    Therefore, because $\Sl \cdot C\polmul_c$ contains this tensor, the Kempf--Ness theorem (\cref{theorem:tensor kempf-ness}) then implies that $\Sl \cdot C\polmul_{c}$ is closed, and hence also $\Sl \cdot \polmul_c$ is closed.
\end{proof}

The proof of the following lemma is essentially the same argument as in \cite[Proposition~3.10]{burgisserFundamentalInvariants2017},
combined with \cref{proposition:tensor null-cone}.
We provide it for convenience of the reader.

\begin{lemma}[{\cite{burgisserFundamentalInvariants2017}}]
    \label{lemma:orbit border}
    Let $S \in \C^a \ot \C^b \ot \C^c$ have a closed $\Sl$-orbit.
    Then every element in the boundary $\overline{\G \cdot S} \setminus (\G \cdot S)$ is $\Sl$-unstable.
\end{lemma}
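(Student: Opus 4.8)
The plan is to combine the closedness of the $\SL$-orbit of $S$ with the characterization of the null cone via invariant polynomials (\cref{proposition:tensor null-cone}). First I would observe that since $\overline{\GL\cdot S}$ is $\GL$-stable (hence in particular $\SL$-stable) and $\GL$ acts on it by scaling as well, it suffices to understand what invariant polynomials do on this orbit closure. The key point is that the $\SL$-orbit $\SL\cdot S$ being closed means $S$ is $\SL$-polystable, so in particular $S$ is $\SL$-semistable, and therefore by \cref{proposition:tensor null-cone} there is a non-constant homogeneous $\Sl$-invariant polynomial $f \in \C[\C^a\ot\C^b\ot\C^c]^{\Sl}$ with $f(S) \neq 0$. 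Rescaling $f$ (or rather using that $f$ is homogeneous of some degree $d$), we get $f(g\cdot S) = f(S) \neq 0$ for all $g \in \SL$, and more generally, for $g \in \GL$ with $\det$ acting on the three factors, $f(g \cdot S)$ is a nonzero scalar multiple of $f(S)$ — so $f$ is nonvanishing on all of $\GL\cdot S$.

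Next I would argue that $f$ \emph{vanishes} on the boundary $\overline{\GL\cdot S}\setminus(\GL\cdot S)$. Here is where the closedness of the $\SL$-orbit does the real work. Take any $x$ in the boundary. Consider the one-parameter family obtained by scaling: since $\overline{\GL\cdot S}$ is a cone, $\C^\times \cdot x \subseteq \overline{\GL\cdot S}$, and because $\mu$ is scaling-invariant, $\spec\mu$ is constant along this ray; the relevant observation is that $x$ cannot be $\SL$-polystable (its $\SL$-orbit is not closed, as it lies in the closure of but is not equal to the union of the closed orbit $\SL\cdot S$ and its scalings — one must check that $\overline{\GL\cdot S} = \bigcup_{t\in\C^\times} t\cdot\overline{\SL\cdot S}$, which follows since $\GL = \C^\times\cdot\SL$ up to the obvious cover and $\SL\cdot S$ is closed). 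The cleanest route: if $x$ were $\SL$-semistable, then by \cref{corollary:uniform marginals} some nonzero $S' \in \overline{\GL\cdot x} \subseteq \overline{\GL\cdot S}$ has uniform marginals, hence by Kempf--Ness $S'$ is $\SL$-polystable with closed orbit $\SL\cdot S'$; but the only closed $\SL$-orbit (up to scaling) inside $\overline{\GL\cdot S}$ with uniform marginals is $\SL\cdot S$, forcing $S' \in \GL\cdot S$ and then $x \in \overline{\GL\cdot S'} = \overline{\GL\cdot S}$ with $x$ in the same fiber... this needs care. The more robust argument: $x$ lies in the boundary, so $\SL\cdot x$ is not closed (if it were closed, then since $\overline{\GL\cdot S} = \C^\times\cdot\overline{\SL\cdot S} = \C^\times\cdot\SL\cdot S$, we'd get $x \in t\cdot\SL\cdot S \subseteq \GL\cdot S$, contradiction); hence $0 \in \overline{\SL\cdot x}$ is \emph{not} immediate, but $x$ is not $\SL$-polystable. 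To get $\SL$-\emph{instability}, note that $\overline{\SL\cdot x}$ contains a closed $\SL$-orbit, which lies in $\overline{\GL\cdot S}$, so again is $\C^\times$-scaled copy of $\SL\cdot S$ or is $\{0\}$; if it is a scaled copy of $\SL\cdot S$ then $f$ is nonzero there, but $f$ is constant on $\SL$-orbits, so $f(x) \neq 0$, and then $x$ lies in the (Zariski-open) non-vanishing locus of $f$, which is contained in $\GL\cdot S$ — wait, that's the conclusion, not a contradiction yet.

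\textbf{Cleaner version of the core step.} Let me restructure: the non-vanishing locus $U_f := \{y : f(y) \neq 0\}$ is Zariski-open and $\GL$-stable (up to the scalar character, which is nonzero on $\GL$), and $\GL\cdot S \subseteq U_f$. I claim $U_f \cap \overline{\GL\cdot S} = \GL\cdot S$. Indeed, $U_f\cap\overline{\GL\cdot S}$ is a $\GL$-stable, Zariski-open (hence dense, by irreducibility, if nonempty) subset of the irreducible variety $\overline{\GL\cdot S}$ containing the dense orbit $\GL\cdot S$; since $\GL\cdot S$ is the \emph{unique} open orbit (it's the only orbit of maximal dimension and is dense), and $U_f\cap\overline{\GL\cdot S}$ is a union of orbits, I need every orbit in it to be open — but a single orbit can be open while the others in $U_f$ are not. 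So this still needs the closedness hypothesis: the point is that any orbit $\GL\cdot x$ with $x \in \overline{\GL\cdot S}$ and $f(x)\neq 0$ satisfies that $\overline{\SL\cdot x}$ contains a closed orbit $O$; $O \subseteq \overline{\GL\cdot S}$ and $f|_O \equiv f(x) \neq 0$ (since $f$ is $\SL$-invariant and continuous). Now $O$ is a closed $\SL$-orbit of a nonzero $\SL$-polystable tensor; by Kempf--Ness it contains a tensor with marginals $(I_a/a, I_b/b, I_c/c)$ \emph{after rescaling} — more precisely $\mu$ is constant on $O$ (as $\SL$ preserves $\mu$) and... Actually the honest statement I would cite is \cite[Proposition~3.10]{burgisserFundamentalInvariants2017}: the set of $\SL$-semistable points of $\overline{\GL\cdot S}$ equals $\GL\cdot S$ when $\SL\cdot S$ is closed, because in a $\GL$-orbit closure with a closed $\SL$-orbit inside, that closed $\SL$-orbit is the unique one (all $\SL$-semistable points degenerate to it), so $\SL$-semistable points of the orbit closure are exactly $\C^\times\cdot\SL\cdot S = \GL\cdot S$. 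Then any boundary point is $\SL$-unstable. \textbf{The main obstacle} is precisely establishing this uniqueness — that a closed $\SL$-orbit inside an irreducible $\GL$-orbit closure is unique up to scaling — which I would handle by noting that $\overline{\GL\cdot S} = \C^\times\cdot\overline{\SL\cdot S}$ (since $\GL\cdot S = \C^\times\cdot\SL\cdot S$ and taking closures commutes with the action of the connected group $\C^\times$), that $\overline{\SL\cdot S} = \SL\cdot S$ by hypothesis, and that every $\SL$-orbit closure contains a \emph{unique} closed orbit (a standard GIT fact): so the closed $\SL$-orbits in $\C^\times\cdot\SL\cdot S$ are exactly the $t\cdot\SL\cdot S$, and an $\SL$-semistable $x$ in the closure must have its closed orbit be one of these, giving $x \in t\cdot\SL\cdot S \subseteq \GL\cdot S$. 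Hence the boundary consists entirely of $\SL$-unstable tensors, as claimed.
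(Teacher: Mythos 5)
Your argument hinges on the claim that $\overline{\GL\cdot S} = \C^\times\cdot\overline{\SL\cdot S}$ (``taking closures commutes with the action of $\C^\times$''). This is false. What is true is $t\cdot\overline{X} = \overline{t\cdot X}$ for each \emph{fixed} $t$, so $\C^\times\cdot\overline{X} = \bigcup_t \overline{t\cdot X}$; but a union of closed sets need not be closed, and $\overline{\C^\times\cdot X}$ is in general strictly larger than $\C^\times\cdot\overline{X}$, because limits can be reached by letting $t\to 0$ while simultaneously letting points of $X$ run off to infinity. Concretely, if your identity held, then combined with $\overline{\SL\cdot S} = \SL\cdot S$ it would give $\overline{\GL\cdot S} = \C^\times\cdot\SL\cdot S = \GL\cdot S$, i.e.\ the $\GL$-orbit would itself be closed and the boundary would be empty, making the lemma vacuous. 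But this is false: for $S = \polmul_c$ the $\SL$-orbit is closed (\cref{lemma:S_c closed SL orbit}), yet $\GL\cdot\polmul_c$ is a dense \emph{proper} open subset of $\C^2\ot\C^{c-1}\ot\C^c$ (\cref{lemma:matrix pencils}), so the boundary is nonempty. Consequently the conclusion you draw from it --- that every closed $\SL$-orbit inside $\overline{\GL\cdot S}$ is a scalar multiple of $\SL\cdot S$ --- is not established, and the proof does not go through.

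The paper's proof avoids this entirely and is more elementary: it fixes an arbitrary homogeneous $\SL$-invariant $f$ of degree $m>0$, takes $g_i\cdot S\to R$, writes $g_i = (t_{i,1}I,t_{i,2}I,t_{i,3}I)h_i$ with $h_i\in\SL$ and $t_i = t_{i,1}t_{i,2}t_{i,3}$, so that $f(g_i\cdot S) = t_i^m f(S)$, and then shows $t_i \to 0$ along a subsequence. The closedness of $\SL\cdot S$ is used at exactly one point: if some subsequential limit $t\neq 0$ existed, then $R/t = \lim h_i\cdot S \in \overline{\SL\cdot S} = \SL\cdot S$, forcing $R\in\GL\cdot S$ and contradicting $R$ being a boundary point. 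Hence $f(R) = \lim t_i^m f(S) = 0$ for every such $f$, and $R$ is $\SL$-unstable by \cref{proposition:tensor null-cone}. This sequential argument is exactly what controls the ``balancing'' phenomenon that your $\C^\times$-sweep identity fails to capture: the limit exists only because $t_i\to 0$ compensates for $h_i\cdot S$ running off to infinity, and the whole point of the lemma is that such limits land in the null cone.
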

\begin{proof}
Let $f \in \C[V]$ be a homogeneous $\Sl$-invariant polynomial of degree $m > 0$.
We show $f$ evaluates to zero on the boundary, after which \cref{proposition:tensor null-cone} gives the desired result.
Let $R \in \overline{\G \cdot S} \setminus (\G \cdot S)$.
Take $g_i \in \G$ such that $\lim_{i \to \infty} g_i \cdot S = R$.
There are $h_i \in \Sl$ and $(t_{i,1},t_{i,2},t_{i,3}) \in \C^3$ such that $g_i = (t_{i,1}I,t_{i,2}I,t_{i,3}I) h_i$, where we multiply the 3-tuples entry-wise. Let $t_i = t_{i,1}t_{i,2}t_{i,3}$.
Then $f(g_i \cdot S) = f\big(t_i(h_i \cdot S)\big) = t_i^{m} f(h_i \cdot S) = t_i^m f(S)$.
Because $g_i \cdot S$ converges to $R$ and by continuity of $f$, it follows that $|t_i|$ must converge as well.
Hence $t_i$ is a bounded sequence, and we may pass to a subsequence such that the limit $t = \lim_{i \to \infty} t_i$ exists.
Then~$t$ equals 0, as otherwise $R / t = \lim_{i \to \infty} g_i \cdot S/t_i
= \lim_{i\to\infty} h_i \cdot S \in \overline{\Sl \cdot S} = \Sl \cdot S$, and as a result $R \in \G \cdot S$, which is a contradiction.
It follows that $t = 0$ and hence $f(R) = \lim_{i\to\infty} f(g_i \cdot S) = \lim_{i\to\infty} t_i^m f(S) = 0$.
\end{proof}

We combine the above lemmas.
\begin{corollary}
    \label{corollary:S_c all semistable}
    The $\G$-orbit of $\polmul_c$ consists of all the $\Sl$-semistable tensors in $\C^2 \ot \C^{c-1}\ot \C^c$.
\end{corollary}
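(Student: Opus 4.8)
The plan is to combine the three preceding lemmas exactly as advertised in the surrounding text. We want to show that the set of $\Sl$-semistable tensors in $\C^2 \ot \C^{c-1} \ot \C^c$ equals $\G \cdot \polmul_c$.

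First I would establish the inclusion $\G \cdot \polmul_c \subseteq \{\Sl\text{-semistable tensors}\}$. Since $\Sl$-semistability is a $\G$-invariant property (if $\mu(S') = (I_a/a, I_b/b, I_c/c)$ for some $S' \in \Sl \cdot T$, the same holds for any $\G$-translate of $T$ — equivalently, $0 \notin \overline{\Sl \cdot T}$ is preserved under the $\G$-action since $\G$ normalizes $\Sl$), it suffices to check that $\polmul_c$ itself is $\Sl$-semistable. But \cref{lemma:S_c closed SL orbit} shows $\Sl \cdot \polmul_c$ is closed and nonzero, hence $0 \notin \overline{\Sl \cdot \polmul_c} = \Sl \cdot \polmul_c$, so $\polmul_c$ is $\Sl$-semistable. (Alternatively, one directly invokes that $\mu(\polmul_c') = (I_2, I_{c-1}/(c-1), I_c/c)$ from the proof of \cref{lemma:S_c closed SL orbit} together with \cref{corollary:uniform marginals}.)

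For the reverse inclusion, let $T \in \C^2 \ot \C^{c-1} \ot \C^c$ be $\Sl$-semistable. By \cref{lemma:matrix pencils}, $\G \cdot \polmul_c$ is dense in $\C^2 \ot \C^{c-1} \ot \C^c$, i.e. $\overline{\G \cdot \polmul_c} = \C^2 \ot \C^{c-1} \ot \C^c$, so in particular $T \in \overline{\G \cdot \polmul_c}$. Now \cref{lemma:S_c closed SL orbit} tells us $\polmul_c$ has a closed $\Sl$-orbit, so \cref{lemma:orbit border} applies: every element of the boundary $\overline{\G \cdot \polmul_c} \setminus (\G \cdot \polmul_c)$ is $\Sl$-unstable. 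Since $T$ is $\Sl$-semistable, it cannot lie in this boundary, and therefore $T \in \G \cdot \polmul_c$. This proves the two inclusions and hence the corollary.

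There is no real obstacle here — the proof is a two-line assembly of results already proven. The only point requiring a word of care is the $\G$-invariance of $\Sl$-semistability used in the first inclusion, which follows because $\G = \Sl \cdot Z$ with $Z$ the (diagonal) center acting by nonzero scalars, so $\overline{\Sl \cdot (g \cdot T)} = g \cdot \overline{\Sl \cdot T}$ contains $0$ iff $\overline{\Sl \cdot T}$ does; equivalently, one simply notes that $(u_2 \sep u_{c-1} \sep u_c) \in \Delta(\polmul_c) = \Delta(g \cdot \polmul_c)$ for all $g \in \G$ by \cref{proposition:moment polytope embedding} (here with no padding needed), and applies \cref{corollary:uniform marginals}.
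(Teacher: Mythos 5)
Your proof is correct and follows essentially the same route as the paper's: establish $\Sl$-semistability of $\polmul_c$ from the closedness of its $\Sl$-orbit (\cref{lemma:S_c closed SL orbit}), conclude by $\G$-invariance of semistability that the whole $\G$-orbit is semistable, and then use the density from \cref{lemma:matrix pencils} together with \cref{lemma:orbit border} to rule out semistable tensors outside the orbit. The paper phrases the forward inclusion slightly differently (observing that elements of $\G \cdot \polmul_c$ lie in $\Sl$-orbits of scalar multiples of $\polmul_c$), but this is the same normality/central-decomposition argument you spell out, so there is no substantive difference.
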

\begin{proof}
    By \cref{lemma:S_c closed SL orbit}, the $\Sl$-orbit of $\polmul_c$ is closed, and hence~$\polmul_c$ is also~$\Sl$-semistable.
    Then \cref{lemma:orbit border} tells us that the boundary of~$\G \cdot \polmul_c$ contains only $\Sl$-unstable tensors.
    By \cref{lemma:matrix pencils}, the $\G$-orbit of $\polmul_c$ is dense.
    Hence all tensors outside of the $\G$-orbit lie on its boundary, and are therefore $\Sl$-unstable.
    We conclude $\G \cdot \polmul_c$ contains all $\Sl$-semistable tensors.
    Moreover, as~$\polmul_c$ itself is~$\Sl$-semistable, every tensor in~$\G \cdot \polmul_c$ is semistable (as they are in the~$\Sl$-orbits of some scalar multiples of~$\polmul_c$).
\end{proof}

The tensor $\polmul_c$ has a computational interpretation, namely as the structure tensor of multiplication of univariate polynomials with degrees~$1$ and~$c-1$, respectively.
Hence it is a special case of the structure tensors describing multiplications of two polynomials of given degrees.

\begin{definition}[Polynomial multiplication tensor]
\label{definition:polynomial multiplication tensors}
The structure tensor $\polmul_{a,b} \in \C^a\ot\C^b\ot\C^{a+b-1}$ describing the multiplication of two univariate polynomials, the first of degree $a-1$ and the second of degree $b-1$, is given by
\begin{align}
    \polmul_{a,b} &\coloneqq \sum_{i \in [a]} \sum_{j \in [b]} e_i \ot e_j \ot e_{i+j-1} \\
        &=
        e_1 \ot
        \left[\begin{array}{@{}c|c@{}}
            I_{b} & 0_{a-1}
        \end{array}\right]
        +
        e_2 \ot
        \left[\begin{array}{@{}c|c|c@{}}
            0_1 & I_{b} & 0_{a-2}
        \end{array}\right]
        +
        \cdots
        +
        e_a \ot
        \left[\begin{array}{@{}c|c@{}}
            0_{a-1} & I_{b}
        \end{array}\right],
\end{align}
where $0_{t} \in \C^{b\times t}$ denotes the $b\times t$ zero matrix.
\end{definition}

Clearly, we have $\polmul_{c} = \polmul_{2,c-1}$.
We shall use the tensors~$\polmul_{a,b}$ in~\cref{subsec:border-subrank}.
In particular, we will use that they have low tensor rank:\footnote{In fact, the tensor rank of $\polmul_{a,b}$ is minimal for concise tensors in $\C^a \ot \C^b \ot \C^{a+b-1}$.}
\begin{lemma}[{\cite[Proposition~14.47]{burgisser1996algebraic}}]\label{lem:ranksab}
$\unit{a + b - 1} \geq \polmul_{a,b}$.
\end{lemma}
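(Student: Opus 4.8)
The plan is to exhibit explicit restriction maps realizing polynomial multiplication by evaluation and interpolation at $a+b-1$ distinct points. Write $r = a+b-1$ and fix distinct scalars $\alpha_1,\dots,\alpha_r \in \C$. First I would define $A \in \C^{a\times r}$ by $A_{i,k} = \alpha_k^{\,i-1}$ and $B \in \C^{b\times r}$ by $B_{j,k} = \alpha_k^{\,j-1}$, so that $A$ and $B$ send $e_k$ to the power vectors $(1,\alpha_k,\dots,\alpha_k^{a-1})$ and $(1,\alpha_k,\dots,\alpha_k^{b-1})$; these are the maps that evaluate a polynomial (given by its coefficient vector) at the point $\alpha_k$. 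Next, let $V \in \C^{r\times r}$ be the Vandermonde matrix $V_{k,m} = \alpha_k^{\,m-1}$, which is invertible precisely because the $\alpha_k$ are distinct, and set $C = V^{-1} \in \C^{r\times r}$, the interpolation map taking a vector of values at the $\alpha_k$ to the coefficient vector of the unique interpolating polynomial of degree $< r$.

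The computation is then short: since $\unit{r} = \sum_{k=1}^r e_k\ot e_k\ot e_k$,
\[
(A\ot B\ot C)\,\unit{r} \;=\; \sum_{k=1}^r (Ae_k)\ot(Be_k)\ot(Ce_k) \;=\; \sum_{k=1}^r \Bigl(\sum_{i\in[a]}\sum_{j\in[b]} \alpha_k^{\,i+j-2}\, e_i\ot e_j\Bigr)\ot (Ce_k).
\]
Swapping the order of summation and using the identity $\sum_{k=1}^r \alpha_k^{\,m-1}(Ce_k) = e_m$ for every $m\in[r]$ — which is exactly $CV = I$ read off columnwise — yields $\sum_{i\in[a]}\sum_{j\in[b]} e_i\ot e_j\ot e_{i+j-1}$, since $i+j-1$ always lies in $[r]$. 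This is precisely $\polmul_{a,b}$ as written in \cref{definition:polynomial multiplication tensors}, so $\unit{a+b-1}\geq\polmul_{a,b}$.

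Conceptually this is just the evaluation–interpolation (Chinese Remainder) algorithm for multiplying univariate polynomials: to compute the product of $f$ of degree $\le a-1$ and $g$ of degree $\le b-1$, evaluate both at $r = a+b-1$ points (maps $A$ and $B$), multiply the values pointwise (the diagonal tensor $\unit{r}$), and interpolate (map $C$); because $\deg(fg)\le a+b-2 < r$, interpolation recovers $fg$ exactly. The only points needing care are the bookkeeping of the index ranges in the double sum and the observation that the relevant Vandermonde matrix is square of size exactly $a+b-1$ and hence invertible — so there is no genuine obstacle here; this is the content of \cite[Proposition~14.47]{burgisser1996algebraic}, which we may simply cite, but the argument above makes the excerpt self-contained.
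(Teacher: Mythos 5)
Your proof is correct and is precisely the classical evaluation--interpolation argument that the cited source \cite[Proposition~14.47]{burgisser1996algebraic} gives; the paper itself does not reprove the lemma but simply cites that reference, so you have supplied a self-contained version of the same proof. The computation checks out: $\sum_k \alpha_k^{m-1}\,Ce_k = C(Ve_m) = e_m$ for each $m \in [a+b-1]$, and since $i+j-1$ always lies in $[a+b-1]$ for $i\in[a],\ j\in[b]$, the restriction $(A\ot B\ot C)\unit{a+b-1}$ reproduces $\polmul_{a,b}$ exactly.
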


\subsection{The minrank of a tensor}

For the proof of \cref{theorem:matrix multiplication} we will use the notion of \emph{minrank} of a tensor (see also \cite{briet_et_al:LIPIcs.ITCS.2024.20, blaser2019varietymembershiptestingalgebraic}).
To define the minrank for a tensor $T \in \C^a\ot\C^b\ot\C^c$, we ``slice'' $T$ into an $a$-tuple of $b \times c$ matrices $\big( \big[T_{1,j,k}\big]_{j,k}$, $\big[T_{2,j,k}\big]_{j,k}$, $\ldots$ $\big[T_{a,j,k}\big]_{j,k} \big)$.
Then the minrank is the smallest nonzero matrix rank of any linear combination of these slices.
For 3-tensors there are three ways to do this, one for each index, and hence $T$ has three different ``minranks''. We will only need the one with above slicing, which we simply denote by $\minrank(T)$.
In other words:
\begin{definition}%
  \label{def:minrank}
  Let~$T \in \C^a \ot \C^b \ot \C^c$ be a non-zero tensor.
  Then we define its \emph{minrank} by
  \[
  \minrank(T) \coloneqq
      \min \bigl( \bigl\{\rank\! \big( (\beta \ot I_b \ot I_c) T \big) \ \big|\
      \beta \in \C^{1 \times a}\bigr\} \setminus \{0\} \bigr).
  \]
\end{definition}

\begin{lemma}
  \label{prop:minrank monotonicity}
  Suppose $T_1, T_2, T_3, \ldots \in \C^{a} \ot \C^{b} \ot \C^{c}$ converge to a concise $T \in \C^{a} \ot \C^{b} \ot \C^{c}$.
  Then~$\minrank(T) \leq \liminf_{i\to\infty} \minrank(T_i)$.
\end{lemma}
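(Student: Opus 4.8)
The plan is to combine a compactness argument over the possible linear combinations of slices with the lower semicontinuity of matrix rank, using conciseness of the limit tensor $T$ as the crucial input.

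First I would set $r := \liminf_{i\to\infty}\minrank(T_i)$. Since $T$ is concise it is in particular nonzero, so $T_i \neq 0$ for all sufficiently large $i$, and each $\minrank(T_i)$ is then a well-defined positive integer bounded by $\min(b,c)$; hence $r$ is finite, and after passing to a subsequence (and relabelling) I may assume $\minrank(T_i) = r$ for all $i$. For each $i$, the set $\{\rank((\beta\ot I_b\ot I_c)T_i) : \beta\in\C^{1\times a}\}\setminus\{0\}$ is a finite set of positive integers, so its minimum is attained by some $\beta_i$; since $\beta_i\neq 0$ I normalize $\norm{\beta_i}=1$.

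Next, by compactness of the unit sphere in $\C^a$, I pass to a further subsequence so that $\beta_i\to\beta$ with $\norm{\beta}=1$, in particular $\beta\neq 0$. Then $(\beta_i\ot I_b\ot I_c)T_i\to(\beta\ot I_b\ot I_c)T$ by continuity of the bilinear map $(\gamma,S)\mapsto(\gamma\ot I_b\ot I_c)S$. Conciseness of $T$ means its first flattening $\C^a\to\C^b\ot\C^c$ is injective, so $(\beta\ot I_b\ot I_c)T\neq 0$, i.e.\ $\rank((\beta\ot I_b\ot I_c)T)\geq 1$. On the other hand, the locus $\{M : \rank M\leq r\}$ is cut out by the vanishing of all $(r+1)\times(r+1)$ minors, hence Euclidean-closed; since every $(\beta_i\ot I_b\ot I_c)T_i$ has rank $\leq r$, so does the limit $(\beta\ot I_b\ot I_c)T$. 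Therefore $1\leq\rank((\beta\ot I_b\ot I_c)T)\leq r$, and this positive rank witnesses $\minrank(T)\leq r=\liminf_{i\to\infty}\minrank(T_i)$.

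I do not expect a serious obstacle. The one place that genuinely requires care is the role of conciseness: it is exactly what guarantees the limiting combination $(\beta\ot I_b\ot I_c)T$ is nonzero and hence eligible to bound $\minrank(T)$ — without it the limit $\beta$ could annihilate $T$, the rank obtained in the limit would be $0$, and both the argument and the statement would fail. The passage to subsequences of the normalized minimizers $\beta_i$ and the lower semicontinuity of rank are standard and only need to be recorded correctly.
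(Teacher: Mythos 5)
Your proof is correct and follows essentially the same route as the paper's: pass to a subsequence where $\minrank(T_i)$ is constant, normalize the minimizing slice vectors $\beta_i$, use compactness of the unit sphere to extract a convergent subsequence $\beta_i\to\beta$, invoke lower semicontinuity of matrix rank, and use conciseness of $T$ to ensure the limit slice $(\beta\ot I_b\ot I_c)T$ is nonzero. The only cosmetic difference is that you make the lower semicontinuity of rank explicit via the closedness of the locus cut out by $(r{+}1)\times(r{+}1)$ minors, whereas the paper merely asserts that "matrix rank cannot go up in the limit."
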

\begin{proof}
  Let $r = \liminf_{i\to\infty} \minrank(T_i)$.
  Since minrank is integer valued, we may pass to a subsequence such that~$\minrank(T_i) = r$ for every~$i$.
  Let~$\beta_i \in \C^{1 \times a}$ be such that~$\minrank(T_i) = \rank((\beta_i \ot I_{b} \ot I_c) T_i)$ for every~$i$.
  Replace the~$\beta_i$'s by~$\beta_i / \norm{\beta_i}$, so that their norm is~$1$.
  By compactness of the unit sphere, we may pass to a subsequence of $(T_i,\beta_i)_i$ such that~$\beta_i \to \beta$ for some $\beta \in \C^{1 \times a}$ with~$\norm{\beta}=1$.
  Define $f \colon \C^{1\times a} \times \C^{a\times b\times c} \to \C^{b \times c} \colon (\gamma, S) \mapsto (\gamma \ot I_b \ot I_c) S$.
  Then we find by continuity of $f$ that
  \begin{align*}
    (\beta \ot I_a \ot I_c)T 
    = f(\beta,T) 
    = f\big(\lim_{i\to\infty} (\beta_i,T_i)\big)
    = \lim_{i\to\infty} f\big( (\beta_i,T_i)\big)
    = \lim_{i\to\infty} 
    (\beta_i \ot I_a \ot I_c)T_i. 
  \end{align*}
  Hence, $\rank((\beta \ot I_b \ot I_c) T) \leq \lim_{i\to\infty} \rank((\beta_i \ot I_b \ot I_c) T_i)$, since matrix rank cannot go up in the limit. 
  Since  $\rank((\beta_i \ot I_b \ot I_c) T_i) = r$ for all $i$, we find that $\rank((\beta \ot I_b \ot I_c) T) \leq r$.
  Moreover~$(\beta \ot I_b \ot I_c) T \neq 0$ by conciseness of~$T$ and since $\beta$ is nonzero, so that its rank is not~$0$, and hence~$\minrank(T) \leq \rank((\beta \ot I_b \ot I_c) T)$.
  Therefore~$\minrank(T) \leq r$.
\end{proof}

For the proof of~\cref{theorem:matrix multiplication} we will also use the following:
\begin{lemma}\label{lem:ingr-2}
$\minrank(\polmul_{a,b}) = b$.
\end{lemma}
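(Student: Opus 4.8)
The plan is to compute directly with the slice structure of $\polmul_{a,b}$. Recall that $\polmul_{a,b} = \sum_{i\in[a]}\sum_{j\in[b]} e_i\ot e_j \ot e_{i+j-1} \in \C^a\ot\C^b\ot\C^{a+b-1}$, so slicing along the first factor gives the $a$-tuple of $b\times(a+b-1)$ matrices $(M_1,\dots,M_a)$ where $M_i = [\,0_{i-1} \mid I_b \mid 0_{a-i}\,]$, i.e.\ the identity block shifted $i-1$ columns to the right. A linear combination $\beta = (\beta_1,\dots,\beta_a) \in \C^{1\times a}$ produces the matrix $M_\beta \coloneqq \sum_i \beta_i M_i$, whose $(j,k)$ entry is $\beta_{k-j+1}$ (interpreting $\beta_\ell$ as $0$ when $\ell\notin[a]$). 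This is a banded Toeplitz matrix determined by the polynomial $\beta(x) = \sum_i \beta_i x^{i-1}$.

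The two things to establish are: (i) every nonzero $M_\beta$ has rank at least $b$, and (ii) some nonzero $M_\beta$ has rank exactly $b$. For (ii), take $\beta = e_1$, i.e.\ $\beta(x) = 1$: then $M_\beta = M_1 = [\,I_b \mid 0_{a-1}\,]$, which visibly has rank $b$. (Note $b \le a+b-1$, so this makes sense; and since $\polmul_{a,b}$ is concise, $M_\beta\ne 0$.) For (i), the cleanest route is to observe that the $b$ columns indexed $j=1,\dots,b$ of $M_\beta$ form, up to reordering of rows, a lower-triangular Toeplitz matrix with the nonzero coefficient $\beta_{i_0}$ (where $i_0$ is the smallest index with $\beta_{i_0}\ne 0$) on the diagonal — or more simply, note that the first $b$ columns of $M_\beta$ are exactly the first $b$ columns of the banded matrix, and one can extract from $M_\beta$ a $b\times b$ submatrix that is triangular with all diagonal entries equal to $\beta_{i_0}\ne 0$, hence invertible. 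Concretely, picking the column block starting at the position where the top-left nonzero coefficient $\beta_{i_0}$ appears yields a $b\times b$ block of the form $\beta_{i_0} I_b + (\text{strictly triangular})$, which is nonsingular. Therefore $\rank(M_\beta)\ge b$ for every nonzero $\beta$, and combined with (ii) we get $\minrank(\polmul_{a,b}) = b$.

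I do not expect a serious obstacle here; this is a direct computation. The only mild care needed is bookkeeping with the index shifts (which slice $M_i$ occupies which columns, and which $b\times b$ submatrix of $M_\beta$ to select as a function of $\supp(\beta)$), and noting the edge cases where $a = 1$ (then $\polmul_{a,b}$ is just a unit-tensor-like object and $M_\beta = \beta_1 I_b$, rank $b$ for $\beta\ne 0$) so that the claimed triangular submatrix always exists. Conciseness of $\polmul_{a,b}$ guarantees $M_\beta\ne 0$ whenever $\beta\ne 0$, so the minimum in Definition~\ref{def:minrank} is over a nonempty set and the argument is complete.
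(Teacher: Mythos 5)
Your proof is correct and takes essentially the same approach as the paper: extract a $b\times b$ triangular submatrix of $\sum_i\beta_i M_i$ with all diagonal entries equal to the leading nonzero coefficient $\beta_{i_0}$, which gives rank at least $b$, while the slice has only $b$ rows so rank is at most $b$. Your separate step (ii) is harmless but redundant, since step (i) already forces every nonzero combination to have rank exactly $b$.
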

\begin{proof}
  Every nonzero matrix $(\beta \ot I_b \ot I_{a+b}) \polmul_{a,b}$, where $\beta \in \C^a$, has a $b \times b$ submatrix that is upper triangular with non-zero diagonal entries.
  These matrices have rank~$b$ and hence $\minrank(\polmul_{a,b}) = b$.
\end{proof}

\subsection{Minrank of degenerations of matrix multiplication}
The next important ingredient for~\cref{theorem:matrix multiplication} is the following lemma:
\begin{lemma}\label{lem:ingr-1}
Let $T \in \C^a \ot \C^b \ot \C^c$ be a non-zero and concise tensor. If $\MM_{n} \degengeq T$, then we have $\minrank(T) \leq n (n- \floor{\sqrt{a-1}})$.
\end{lemma}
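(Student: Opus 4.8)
The plan is to reduce the bound on $\minrank(T)$ to a statement about low-rank matrices in an $a$-dimensional linear subspace of $\C^{n\times n}$, proceeding in three steps. First I would unpack the degeneration. By definition there is $\tilde T\in\overline{\G\cdot\MM_{n}}$ equivalent to $T$, hence in particular $\tilde T$ restricts to $T$: say $(P_1\ot P_2\ot P_3)\tilde T = T$ with $P_1\in\C^{a\times n^2}$, $P_2\in\C^{b\times n^2}$, $P_3\in\C^{c\times n^2}$. Writing $\tilde T = \lim_i(G_1^{(i)}\ot G_2^{(i)}\ot G_3^{(i)})\MM_{n}$ with $G_j^{(i)}\in\G_{n^2}$ and composing the restrictions, I obtain $S_i := (A_i\ot B_i\ot C_i)\MM_{n}\to T$ with $A_i = P_1G_1^{(i)}$, $B_i = P_2G_2^{(i)}$, $C_i = P_3G_3^{(i)}$; thus $T$ is a limit of restrictions of $\MM_{n}$ to format $a\times b\times c$. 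Since $T$ is concise and the rank of each of its three flattenings is lower semicontinuous, for all large $i$ each flattening of $S_i$ has full rank, i.e.\ $S_i$ is concise; in particular $\rank A_i = a$. By \cref{prop:minrank monotonicity}, $\minrank(T)\le\liminf_i\minrank(S_i)$, so it suffices to bound $\minrank(S_i)$ for these $i$.

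Next I would compute the minrank of a concise restriction of $\MM_{n}$. As $S_i$ is concise, its first-factor slices are linearly independent, so $(\beta\ot I_b\ot I_c)S_i\neq 0$ for every nonzero $\beta\in\C^{1\times a}$, giving $\minrank(S_i) = \min_{\beta\neq 0}\rank((\beta\ot I_b\ot I_c)S_i)$. Under the identification of the first factor $\C^{n^2}$ of $\MM_{n}$ with $\C^{n\times n}$, contracting $\MM_{n}$ with $\beta A_i$ in the first slot is (up to transposition) the map $Y\mapsto X(\beta)Y$ on $\C^{n\times n}$, where $\mathrm{vec}(X(\beta)) = \beta A_i$; this $n^2\times n^2$ matrix $M_{X(\beta)}$ has rank $n\rank X(\beta)$, and $(\beta\ot I_b\ot I_c)S_i$ is obtained from it by applying $B_i$ and $C_i$ on the second and third factors, so $\rank((\beta\ot I_b\ot I_c)S_i)\le n\rank X(\beta)$. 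As $\beta$ varies, $X(\beta)$ varies over the linear subspace $\mathcal X_i := \{X\in\C^{n\times n}:\mathrm{vec}(X)\in\mathrm{rowspace}(A_i)\}$, of dimension $\rank A_i = a$. Hence $\minrank(S_i)\le n\cdot\min\{\rank X : X\in\mathcal X_i\setminus\{0\}\}$.

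Finally I would invoke the determinantal variety. Put $s := \floor{\sqrt{a-1}}$, so $s^2\le a-1$, and note $s\le n-1$ as $a\le n^2$ (a concise degeneration of $\MM_{n}$ has every dimension at most $n^2$). The set $V := \{X\in\C^{n\times n}:\rank X\le n-s\}$ is an irreducible projective cone of codimension $s^2$. In $\Proj^{n^2-1}$ the subvarieties $\Proj(\mathcal X_i)$ and $\Proj(V)$ have dimensions $a-1$ and $n^2-s^2-1$, summing to $a+n^2-s^2-2\ge n^2-1$, so by the projective dimension theorem they intersect, and $\mathcal X_i$ contains a nonzero matrix of rank $\le n-s$. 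Combined with the previous steps, $\minrank(T)\le\liminf_i\minrank(S_i)\le n(n-s) = n\bigl(n-\floor{\sqrt{a-1}}\bigr)$.

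I expect the delicate point to be the first step: one must check that the degeneration of $\MM_{n}$ to $T$ is witnessed by an honest sequence of restrictions to format $a\times b\times c$ and, crucially, that conciseness of the limit $T$ propagates to the approximants $S_i$. Without the latter, the ``nonzero slice'' requirement in the definition of $\minrank$ need not hold for $S_i$, and the clean reduction of the second step to the minrank of the matrix subspace $\mathcal X_i$ would fail. The remaining ingredients — the rank identity $\rank M_X = n\rank X$ and the codimension of $V$ — are routine.
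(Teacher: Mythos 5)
Your proof is correct and follows essentially the same route as the paper's: bound the rank of the first-slot contraction of $\MM_n$ by $n$ times the rank of an associated $n\times n$ matrix drawn from an $a$-dimensional subspace, invoke the dimension-count bound on the minimal rank in such a subspace (the paper's \cref{corollary:smallest rank in matrix subspace}, which you re-derive via the projective dimension theorem), and pass from restrictions to degenerations using \cref{prop:minrank monotonicity} together with openness of conciseness. The only cosmetic difference is that you unpack the degeneration at the outset and verify conciseness of the approximants explicitly, whereas the paper first settles the restriction case and then appends the limiting argument; both handle the same delicate point you flag.
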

We will use the following lemma, which follows from the fact that projective varieties of complementary dimension must intersect and a computation of the dimension of the variety of $n \times n$ matrices with rank at most $r$, see, for instance,~{\cite[Proposition~6]{cubitt2008dimensionOfSubspacesBoundedSchmidtRank}} or \cite{MR2645077}.

\begin{lemma}%
\label{corollary:smallest rank in matrix subspace}
    Every $d$-dimensional subspace of $n \times n$ matrices contains a nonzero matrix of rank at most $n -\lfloor\sqrt{d-1}\rfloor$.
\end{lemma}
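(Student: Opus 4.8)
The plan is a classical dimension count in projective space. Write $N \coloneqq n^2 - 1$ and identify $\mathbb{P}(\C^{n \times n})$ with $\mathbb{P}^{N}$. We may assume $d \leq n^2$, since otherwise no $d$-dimensional subspace of $\C^{n \times n}$ exists and there is nothing to prove; in particular $d - 1 \leq n^2 - 1 < n^2$, so $\lfloor\sqrt{d-1}\rfloor \leq n - 1$ and the integer $r \coloneqq n - \lfloor\sqrt{d-1}\rfloor$ satisfies $1 \leq r \leq n$. The goal is to exhibit a nonzero matrix of rank at most $r$ in the given subspace.

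First I would recall the dimension of the determinantal variety. Let $Y \coloneqq \{ M \in \C^{n \times n} \mid \rank(M) \leq r \}$. This is an irreducible closed subvariety of $\C^{n \times n}$ of dimension $r(2n - r)$, equivalently of codimension $(n - r)^2$ (see \cite[Proposition~6]{cubitt2008dimensionOfSubspacesBoundedSchmidtRank} or \cite{MR2645077}). Since $Y$ is a cone and $r \geq 1$, it contains nonzero matrices, so its projectivization $\mathbb{P}(Y) \subseteq \mathbb{P}^{N}$ is a nonempty closed subvariety of dimension $r(2n-r) - 1 = N - (n-r)^2$.

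Next, given a $d$-dimensional subspace $U \subseteq \C^{n \times n}$, consider the linear subspace $\mathbb{P}(U) \subseteq \mathbb{P}^{N}$ of dimension $d - 1$. By the standard fact that a linear subspace of $\mathbb{P}^{N}$ of dimension $e$ meets every nonempty closed subvariety of $\mathbb{P}^{N}$ of dimension at least $N - e$ (projective varieties of complementary dimension intersect; see the references above), the intersection $\mathbb{P}(U) \cap \mathbb{P}(Y)$ is nonempty as soon as $\dim \mathbb{P}(Y) \geq N - (d - 1)$, that is, $N - (n-r)^2 \geq N - d + 1$, i.e.\ $(n - r)^2 \leq d - 1$. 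This last inequality holds by the choice of $r$: indeed $n - r = \lfloor\sqrt{d-1}\rfloor \leq \sqrt{d-1}$, so $(n - r)^2 \leq d - 1$ (and when $d = 1$ both sides are $0$). Hence $\mathbb{P}(U) \cap \mathbb{P}(Y) \neq \emptyset$, which means there is a nonzero matrix $M \in U$ with $\rank(M) \leq r = n - \lfloor\sqrt{d-1}\rfloor$, as claimed.

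The argument has essentially no hard step: everything reduces to the projective intersection principle. The only points that genuinely require care are invoking the correct codimension formula $(n-r)^2$ for the determinantal variety $Y$, verifying nonemptiness of $\mathbb{P}(Y)$ (which is why one checks $r \geq 1$, using $d \leq n^2$), and making sure the dimension inequality lines up with the floor so that the chosen $r$ is the smallest integer for which it still works.
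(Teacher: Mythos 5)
Your proof is correct and takes exactly the route the paper sketches: the paper does not write out a full argument but simply notes the lemma follows from the complementary-dimension intersection principle in projective space together with the codimension formula $(n-r)^2$ for the determinantal variety, citing \cite[Proposition~6]{cubitt2008dimensionOfSubspacesBoundedSchmidtRank} and \cite{MR2645077}. Your writeup fleshes out that same sketch, including the sensible verification that $r \geq 1$ (so $\mathbb{P}(Y)$ is nonempty) and that $(n-r)^2 \leq d-1$ for $r = n - \lfloor\sqrt{d-1}\rfloor$.
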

\begin{proof}[Proof of~\cref{lem:ingr-1}]
Suppose $\MM_{n} \geq T$ with $T$ concise. Then there exist linear maps~$A,B,C$ such that~$(A \ot B \ot C) \MM_{n} = T$.
Recall that
$
    \MM_{n}
    = \sum_{i,j} e_{i,j} \ot \sum_k e_{j,k} \ot e_{k,i}.
$
Let $E_{j,i} = e_j \otimes e_i$, which we may think of as an $n \times n$ matrix. Let $E_{j,i} \kron I_n$ denote the matrix Kronecker product with the $n \times n$ identity matrix. Then after a permutation of basis elements we may write
\begin{align}
    \MM_{n}
    = \sum_{i,j} e_{i,j} \ot (E_{j,i} \kron I_n).
\end{align}
By applying the linear map $A\colon \C^{n^2} \to \C^a$ we take linear combinations of these slices:
\begin{align}
    (A \ot I_{n^2} \ot I_{n^2}) \MM_{n}
    = \sum_{\ell=1}^a e_\ell \ot (M_{\ell} \kron I_n),
\end{align}
for some matrices $M_\ell \in \C^{n \times n}$ (namely, $M_\ell = \sum_{i,j} A_{\ell,(i,j)} E_{j,i}$).
Then $M_1, \ldots, M_a$ are linearly independent, as otherwise $(A \ot I_{n^2} \ot I_{n^2}) \MM_{n}$ and hence $T$ would not be concise.
By \cref{corollary:smallest rank in matrix subspace}, there exists $\beta \in \C^a \setminus \{0\}$ such that the rank of
$\sum_\ell \beta_\ell M_\ell \neq 0$ is at most $n - \lfloor\sqrt{a-1}\rfloor$.
It follows that the rank of $\sum_\ell \beta_\ell (M_{\ell} \kron I_n) = \big(\sum_\ell \beta_\ell M_\ell\big) \kron I_n$ is at most $n(n-\lfloor\sqrt{a-1}\rfloor)$.

Next, we observe that applying~$I \ot B \ot C$ corresponds to left- and right-multiplication of $\sum_\ell \beta_\ell (M_{\ell} \kron I_n)$ with~$B$ and~$C$, which can only make rank go down.
By conciseness of $T$, $(\beta A \ot B \ot C) \MM_{n} = (\beta \ot I \ot I)T$ is non-zero.
It follows that
\begin{align*}
    \minrank \big((A \ot B \ot C) \MM_{n}\big)
    & \leq
    \rank\big((\beta A \ot B \ot C) \MM_{n}\big)
    \\
    & \leq
    \rank\big((\beta A \ot I_{n^2} \ot I_{n^2}) \MM_{n}\big)
    \\
    & \leq
    n\big(n-\lfloor\sqrt{a-1}\rfloor\big).
\end{align*}
This yields~$\minrank(T) \leq n\big(n-\lfloor\sqrt{a-1}\rfloor\big)$.

We now extend the proof to concise~$T$ such that~$\MM_{n} \degengeq T$.
This implies that there exists a sequence of~$T_i$'s in the same format as~$T$ such that~$\MM_{n} \geq T_i$ and~$\lim_{i\to\infty} T_i = T$.
As conciseness is an open condition, we may pass to a subsequence such that the~$T_i$ are concise
The above proof shows that~$\minrank(T_i) \leq n (n - \lfloor\sqrt{a-1}\rfloor)$ for every~$i \geq 1$.
\Cref{prop:minrank monotonicity} then shows that~$\minrank(T) \leq n (n - \lfloor\sqrt{a-1}\rfloor)$.
\end{proof}

\subsection{Proof of \texorpdfstring{\cref{theorem:matrix multiplication}}{Theorem 3.1}}
Now that we have all the ingredients in place, we prove the moment polytope separation:
\begin{proof}[Proof of \cref{theorem:matrix multiplication}]
Let $T$ be any tensor.
By \Cref{lemma:tensor semistability}, we have $p_c = (u_2 \sep u_{c-1} \sep u_c) \in \Delta(T)$ if and only if there exists an $\SL$-semistable tensor $S \in \C^2 \ot \C^{c-1} \ot \C^c$ such that $T \geq S$.
By \Cref{corollary:S_c all semistable}, $S$ is in the $\GL$-orbit of $\polmul_c = \polmul_{2,c-1}$.
So $p_c \in \Delta(T)$ if and only if $T \geq \polmul_c$.

We have $p_c \in \Delta(\unit{c})$, because $\unit{c} \geq \polmul_c$ (\cref{lem:ranksab}).

To prove $p_c \notin \Delta(\MM_{n})$, we will show that~$\MM_{n} \not\degengeq \polmul_c$. 
Suppose $\MM_{n} \degengeq \polmul_c$. Then by \cref{lem:ingr-1}, $\minrank(\polmul_c) \leq n(n - 1)$. On the other hand, by \cref{lem:ingr-2}, $\minrank(\polmul_c) = c-1$. This contradicts the assumption that $n^2-n+1 < c$.
\end{proof}

\section{Extensions, applications and further results}
\subsection{Separation for iterated matrix multiplication and unit tensors}

We will prove \cref{cor:it-mat} using the projection relation between $\Delta(\IMM{n}{k})$ and $\Delta(\IMM{n}{k-1})$ given in \cref{lem:IMM polytope reduction}, which will allow us to reduce to the case $k=3$ (\cref{theorem:matrix multiplication}). %
First we prove a lemma that will help us translate specific restrictions from $\MM_{n}^k$ to $\MM_{n}^{k-1}$.

\begin{lemma}
\label{lemma:rank 1 restriction of MM restriction}
    Suppose $T = (A_1 \ot \cdots \ot A_k) \MM_{n}^k$ is a restriction such that  $\rank(A_k) = 1$.
    Then $T = S \ot w$ with $\MM_{n}^{k-1} \geq S$, where $S$ is a tensor of order $k-1$ and $w$ is a vector. %
\end{lemma}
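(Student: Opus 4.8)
The plan is to unpack the definition of $\MM_n^k$ and track what a rank-one map $A_k$ does to the last tensor factor. Recall that
\[
\IMM{n}{k} = \sum_{i \in [n]^k} e_{i_1,i_2} \ot e_{i_2,i_3} \ot \cdots \ot e_{i_{k-1},i_k} \ot e_{i_k,i_1},
\]
so the last factor carries the index pair $(i_k,i_1)$, and $i_k$ is exactly the ``internal'' index shared with the $(k-1)$-st factor, while $i_1$ is the internal index shared with the first factor. First I would write $A_k$ as $A_k = v\,\phi^\intercal$ for some vector $v$ (in the target space of $A_k$) and some linear functional $\phi$ on $\C^{n^2}$, using $\rank(A_k)=1$. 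Since applying $A_k$ to the last tensor slot replaces $e_{i_k,i_1}$ by $\phi(e_{i_k,i_1})\,v$, the tensor factorizes as $T = S' \ot v$ where
\[
S' = (A_1 \ot \cdots \ot A_{k-1}\ot I)\Bigl(\sum_{i} \phi(e_{i_k,i_1})\, e_{i_1,i_2}\ot\cdots\ot e_{i_{k-1},i_k}\Bigr),
\]
i.e.\ we take $w = v$ and $S = (A_1 \ot \cdots \ot A_{k-1}) S_0$, where $S_0 = \sum_i \phi(e_{i_k,i_1})\, e_{i_1,i_2}\ot\cdots\ot e_{i_{k-1},i_k}$ is an order-$(k-1)$ tensor. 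So the only real content is: $\MM_n^{k-1} \geq S_0$.

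The key step is therefore to show that the order-$(k-1)$ tensor $S_0 = \sum_{i\in[n]^k} \phi(e_{i_k,i_1})\, e_{i_1,i_2}\ot e_{i_2,i_3}\ot\cdots\ot e_{i_{k-1},i_k}$ is a restriction of $\IMM{n}{k-1} = \sum_{j\in[n]^{k-1}} e_{j_1,j_2}\ot\cdots\ot e_{j_{k-2},j_{k-1}}\ot e_{j_{k-1},j_1}$. The natural way is to rename indices: in $S_0$ the free summation indices are $i_1,\dots,i_k$, but $i_2,\dots,i_{k-1}$ appear as genuine bond indices between consecutive factors, while $i_1$ appears only in the first factor and $i_k$ only in the last factor, coupled solely through the scalar $\phi(e_{i_k,i_1})$. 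Writing $\phi(e_{i_k,i_1}) = M_{i_k,i_1}$ for the matrix $M\in\C^{n\times n}$ representing $\phi$, we get
\[
S_0 = \sum_{i_1,\dots,i_k} M_{i_k,i_1}\, e_{i_1,i_2}\ot e_{i_2,i_3}\ot\cdots\ot e_{i_{k-1},i_k}.
\]
In $\IMM{n}{k-1}$, identifying $j_1=i_1=\cdots$, the cyclic structure forces the last index of the last factor to equal the first index of the first factor. In $S_0$ the first factor's first index is $i_1$ and the last factor's last index is $i_k$, and these are tied by $M$. So I would obtain $S_0$ from $\IMM{n}{k-1}$ by applying the identity on factors $1$ through $k-2$, and on the last factor (which in $\IMM{n}{k-1}$ is $\sum_{j_{k-1},j_1} e_{j_{k-1},j_1}$, with $j_1$ being the bond back to factor $1$) applying the linear map $e_{j_{k-1},j_1}\mapsto \sum_{i_k}$ \dots — more precisely, I relabel $j_{k-1}\leftrightarrow i_{k-1}$ and $j_1 \leftrightarrow i_k$ on the last slot, and then apply on the last factor the map $\C^{n^2}\to\C^{n^2}$, $e_{a,b}\mapsto e_{a,?}$ twisted by $M$ on the $b$-component; concretely $(I_n \ot M^\intercal)$ acting on $e_{i_{k-1}, j_1} \mapsto \sum_{i_k} M_{i_k, j_1}\, e_{i_{k-1},i_k}$, but one must then also remember that this same index $i_k$ must feed back into the scalar rather than into factor $1$. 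The cleanest bookkeeping is: $\IMM{n}{k-1}$ already has the cyclic constraint $j_1=j_1$; applying $(A_1' \ot \cdots)$ where the map on factor $1$ absorbs $M$ on its first-index slot and factor $k-1$ is left alone achieves exactly $S_0$ after the relabelling $j_\ell = i_\ell$ for $\ell\le k-1$ and interpreting the cyclic index of $\IMM{n}{k-1}$ as $i_k$.

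I would carry this out as: (1) factor $T = S\ot w$ using $\rank A_k=1$, reducing to showing $\MM_n^{k-1}\geq S_0$; (2) write $\phi$ as a matrix $M$ and display $S_0$ explicitly as above; (3) exhibit explicit maps $B_1,\dots,B_{k-1}$ with $(B_1\ot\cdots\ot B_{k-1})\IMM{n}{k-1}=S_0$ — taking all $B_\ell$ to be identities except $B_1$, which is the linear map on $\C^{n^2}=\C^n\ot\C^n$ given by $M^\intercal \ot I_n$ acting to turn the bond index of $\IMM{n}{k-1}$ back to factor $1$ into the twisted coupling; (4) conclude $\MM_n^{k-1}\geq S_0$, hence $\MM_n^{k-1}\geq S$, and set $w = v$. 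The main obstacle I anticipate is purely notational: getting the index relabelling between the cyclic index of $\IMM{n}{k-1}$ and the pair $(i_1,i_k)$ of $S_0$ exactly right, and making sure the twisting by $M$ lands on the correct tensor slot and correct one of the two indices within that $\C^{n^2}=\C^n\ot\C^n$ factor. There is no conceptual difficulty — restriction is just applying linear maps factorwise, and every step above is a single explicit linear map — so once the indices are aligned the verification is a one-line computation.
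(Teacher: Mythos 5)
Your proof is correct and takes essentially the same approach as the paper: write $A_k = v\,\phi^\intercal$ (the paper writes $w u^*$), observe that the last slot factors out, and absorb the resulting scalar coupling $\phi(e_{i_k,i_1}) = M_{i_k,i_1}$ into one of the remaining slices via a cyclic re-indexing so that the order-$(k-1)$ tensor is visibly a restriction of $\IMM{n}{k-1}$. The only (cosmetic) difference is where you put the absorbing map: the paper applies a map $U\colon e_{i_{k-1},i_1}\mapsto\sum_{i_k}u_{i_k,i_1}e_{i_{k-1},i_k}$ to factor $k-1$ (so $S=(A_1\ot\cdots\ot A_{k-1}U)\IMM{n}{k-1}$), whereas after some wavering you settle on applying $B_1=M^\intercal\ot I_n$ to factor $1$; both are easily verified to give the same $S_0$, so either choice closes the argument.
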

\begin{proof}
    Because $A_k$ has rank 1, we may write it as $A_k = w u^*$ for some $u \in \C^{n^2}$ and vector $w$.
    Then we find that
    \begin{align*}
        T &= (A_1 \ot \dotsb \ot A_{k-1} \ot w u^*) \IMM{n}{k}
        \\&= \sum_{i_1,\ldots,i_k} A_1 e_{i_1,i_2} \ot \cdots \ot A_{k-1} e_{i_{k-1},i_{k}} \ot wu^* e_{i_{k},i_{1}}
        \\&= \sum_{i_1,\ldots,i_k} A_1 e_{i_1,i_2} \ot \cdots \ot A_{k-1} e_{i_{k-1},i_{k}} \ot w u_{i_{k},i_{1}}
        \\&= \sum_{i_1,\ldots,i_{k-1}} A_1 e_{i_1,i_2} \ot \cdots \ot A_{k-1} \sum_{i_{k}} u_{i_{k},i_{1}} e_{i_{k-1},i_{k}}  \ot w
        \\&= \sum_{i_1,\ldots,i_{k-1}} A_1 e_{i_1,i_2} \ot \cdots \ot A_{k-1} U e_{i_{k-1},i_{1}} \ot w,
    \end{align*}
    where $U \colon \C^{n^2} \to \C^{n^2}$ is the linear map that maps
    $e_{i_{k-1},i_{1}}$ to $\sum_{i_{k}} u_{i_k,1} e_{i_{k-1},i_k}$.
    We find that $T = S \ot w$ with $S = (A_1 \ot \cdots \ot A_{k-1}U) \MM_{n}^{k-1}$, and we are done.
\end{proof}
We use \cref{lemma:rank 1 restriction of MM restriction} to relate the moment polytopes of $\MM_{n}^{k-1}$ and $\MM_{n}^k$.
\begin{lemma}
    \label{lem:IMM polytope reduction}
    Let $k \geq 3$.
    Then
    \[
    \{q \in \Delta(\IMM{n}{k}) \mid q_k = (1, 0, \dotsc, 0) \} = \Delta(\IMM{n}{k-1}) \times \{(1, 0, \dotsc, 0)\}.
    \]
\end{lemma}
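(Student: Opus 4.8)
The plan is to prove the two inclusions separately, using \cref{lemma:rank 1 restriction of MM restriction} for the harder ``$\subseteq$'' direction and a direct tensor-product construction for ``$\supseteq$''.

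\medskip

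\noindent\textbf{The inclusion ``$\supseteq$''.} Suppose $p = (p_1 \sep \cdots \sep p_{k-1}) \in \Delta(\IMM{n}{k-1})$. Then there is a sequence $(A_1^{(i)}, \dots, A_{k-1}^{(i)}) \in \GL$ such that $(A_1^{(i)} \ot \cdots \ot A_{k-1}^{(i)}) \IMM{n}{k-1}$ converges to some $S$ with $\spec(\mu(S)) = p$. Consider the order-$k$ tensors $(A_1^{(i)} \ot \cdots \ot A_{k-1}^{(i)} \ot (e_1 e_1^*))\IMM{n}{k}$; by \cref{lemma:rank 1 restriction of MM restriction} (applied with $A_k = e_1 e_1^*$) each of these equals a tensor of the form $S_i \ot e_1$ where $\MM_{n}^{k-1} \geq S_i$, and in fact $S_i = (A_1^{(i)} \ot \cdots \ot A_{k-1}^{(i)} U) \MM_{n}^{k-1}$ for a fixed permutation-type map $U$ — one checks that $U$ is invertible here so that $S_i$ lies in the $\GL$-orbit of $\IMM{n}{k-1}$, and by adjusting $A_{k-1}^{(i)}$ we may as well take $U = I$ so $S_i$ converges to $S$. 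Since $\mu$ of a tensor product $S \ot w$ splits as the marginals of $S$ together with the rank-one marginal from $w$, we get that $S \ot e_1 \in \overline{\GL \cdot \IMM{n}{k}}$ has $\spec(\mu(S\ot e_1)) = (p_1 \sep \cdots \sep p_{k-1} \sep (1,0,\dots,0))$. Alternatively, and more cleanly, one can avoid the limiting argument by invoking \cref{proposition:moment polytope embedding}-style reasoning: $\IMM{n}{k}$ restricts to $\IMM{n}{k-1} \ot e_1$ (take $A_k = e_1 e_1^*$, $A_\ell = I$ otherwise, and observe $U = I$), hence $\Delta(\IMM{n}{k-1} \ot \langle 1\rangle) \subseteq \Delta(\IMM{n}{k})$, and $\Delta(\IMM{n}{k-1}\ot \langle 1 \rangle) = \Delta(\IMM{n}{k-1}) \times \{(1,0,\dots,0)\}$ by the behavior of $\mu$ under $\ot$ with a vector.

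\medskip

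\noindent\textbf{The inclusion ``$\subseteq$''.} Let $q \in \Delta(\IMM{n}{k})$ with $q_k = (1,0,\dots,0)$. Pick $S \in \overline{\GL \cdot \IMM{n}{k}}$ with $\spec(\mu(S)) = q$; acting by $\GL$ we may assume each $\mu_\ell(S) = \diag(q_\ell)$, so $\mu_k(S) = \diag(1,0,\dots,0) = e_1 e_1^*$ (after normalization). The condition $\mu_k(S) = S_k^* S_k / \|S\|^2$ being rank one forces the $k$-th flattening $S_k$ to have rank one, which means $S = S' \ot e_1$ for some order-$(k-1)$ tensor $S'$ with $\spec(\mu(S')) = (q_1 \sep \cdots \sep q_{k-1})$. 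It remains to show $S' \in \overline{\GL \cdot \IMM{n}{k-1}}$, equivalently that $q' := (q_1\sep\cdots\sep q_{k-1}) \in \Delta(\IMM{n}{k-1})$. For this I would argue: $S$ is a limit of tensors $T_i = (A_1^{(i)} \ot \cdots \ot A_k^{(i)})\IMM{n}{k}$, and since $\rank$ of the $k$-th flattening of $S$ is $1$ while rank cannot jump up in a limit, for large $i$ the $k$-th flattening of $T_i$ has rank $\leq$ something — but this needs care. The cleaner route: the set of tensors whose $k$-th flattening has rank $\leq 1$ is closed, and on this set the "remove the last factor" map $S'\ot w \mapsto S'$ is continuous on the locus where $w \neq 0$; combining with $\IMM{n}{k}$ restricting (via rank-1 $A_k$) to things of the form $S \ot w$ with $\MM_{n}^{k-1} \geq S$ (\cref{lemma:rank 1 restriction of MM restriction}), one shows $\overline{\GL\cdot\IMM{n}{k}} \cap \{S : \rank(\text{$k$-th flattening}) \leq 1,\ \text{last factor} \neq 0\}$ maps into $\overline{\GL \cdot \IMM{n}{k-1}}$ after stripping the last factor. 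Then $q' \in \Delta(\IMM{n}{k-1})$ follows.

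\medskip

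\noindent\textbf{Main obstacle.} The delicate point is the ``$\subseteq$'' direction: knowing $S \in \overline{\GL\cdot\IMM{n}{k}}$ has a rank-one $k$-th flattening does \emph{not} immediately say $S$ is a limit of restrictions $(A_1 \ot \cdots \ot A_k)\IMM{n}{k}$ with $\rank(A_k)=1$, because the rank-one structure could emerge only in the limit. The fix is to work with the $\GL$-orbit closure directly: write $S = \lim_i (A_1^{(i)}\ot\cdots\ot A_k^{(i)})\IMM{n}{k}$, factor $A_k^{(i)} = w^{(i)} (u^{(i)})^* + (\text{lower order terms})$ via SVD and normalize, then use a compactness argument on the normalized singular vectors to extract, in the limit, the map $U$ of \cref{lemma:rank 1 restriction of MM restriction} acting on a convergent sequence of order-$(k-1)$ restrictions of $\MM_n^{k-1}$ — plus a rank-one factor — whose limit is $S$. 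I expect this compactness/normalization bookkeeping, together with checking that the resulting $U$ can be absorbed into $\GL$ so that one genuinely lands in $\overline{\GL\cdot\IMM{n}{k-1}}$ rather than merely $\overline{\GL\cdot\MM_n^{k-1}}$ after a fixed restriction, to be where essentially all the work lies; once it is in place, the marginal computation $\mu(S'\ot e_1) = (\mu(S'), e_1e_1^*)$ is routine.
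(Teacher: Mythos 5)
Your decomposition of the problem into two inclusions and your identification of \cref{lemma:rank 1 restriction of MM restriction} as the key tool match the paper, but both directions as written have real problems.

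\textbf{The ``$\supseteq$'' direction is wrong as stated.} You take $A_k = e_1 e_1^*$ and assert that the resulting map $U$ in \cref{lemma:rank 1 restriction of MM restriction} is invertible, even equal to the identity. This is false. Writing $A_k = w u^*$, the map $U$ acts on $\C^{n}\otimes\C^{n} \cong \C^{n^2}$ as $I_n \otimes \hat u$, where $\hat u$ is the $n\times n$ reshape of $u$. With $u = e_1 = e_{1,1}$ you get $\hat u = e_1 e_1^\intercal$, of rank $1$, so $U = I_n \otimes e_1 e_1^\intercal$ has rank $n$, not $n^2$; it is not invertible, and $(I\ot\cdots\ot I\ot e_1 e_1^*)\IMM{n}{k}$ is \emph{not} $\IMM{n}{k-1}\ot e_1$ up to $\GL$ (for $k=3$ it is $\unit{n}\ot e_1$ rather than $\IMM{n}{2}\ot e_1$, already of the wrong rank). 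Consequently the ``adjust $A_{k-1}^{(i)}$ to absorb $U$'' step does not make sense. The paper instead takes $A_k$ to be the map sending $e_{i,i}\mapsto \tfrac1n e_1$ for all $i$ and $e_{i,j}\mapsto 0$ for $i\ne j$; then $u = \tfrac1n\sum_i e_{i,i}$, $\hat u = \tfrac1n I_n$, and $U = \tfrac1n I_{n^2}$ is invertible, so this genuinely restricts to (a scalar multiple of) $\IMM{n}{k-1}\ot e_1$.

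\textbf{The ``$\subseteq$'' direction is incomplete, and your proposed fix is not the paper's.} You correctly observe that $q_k=(1,0,\ldots,0)$ forces $T = S'\ot w$ with $w\ne 0$, and you correctly flag the obstacle: the rank-one structure of $T$ need not be reflected by rank-one $A_k^{(i)}$ in an approximating sequence. But you then aim for the stronger statement $S'\in\overline{\GL\cdot\IMM{n}{k-1}}$ via an SVD/normalization/compactness argument that you do not carry out and that would require controlling the ``lower-order'' singular pieces of $A_k^{(i)}$ — this is where your proof actually stops. The paper's argument is different and avoids that strong claim altogether: given $T_i = (A_1^{(i)}\ot\cdots\ot A_k^{(i)})\IMM{n}{k}\to T = S'\ot w$, post-compose with the \emph{fixed} rank-one map $I\ot\cdots\ot I\ot ww^*$. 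Then $(I\ot\cdots\ot I\ot ww^*)T_i$ still converges to $\norm{w}^2 T$, and $ww^* A_k^{(i)}$ has rank $\le 1$, so \cref{lemma:rank 1 restriction of MM restriction} applies to each term and gives $S_i\ot w_i$ with $S_i\le\IMM{n}{k-1}$. Now $\spec\mu(S_i\ot w_i) = (\spec\mu(S_i)\sep(1,0,\ldots,0))\in \Delta(\IMM{n}{k-1})\times\{(1,0,\ldots,0)\}$ for every $i$, and since this set is closed and $\spec\circ\mu$ is continuous away from zero, the limit $q$ lies in it as well. Note this only needs $q'\in\Delta(\IMM{n}{k-1})$, not $S'\in\overline{\GL\cdot\IMM{n}{k-1}}$, which is precisely what makes the closedness trick sufficient and your SVD machinery unnecessary.
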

\begin{proof}
    $\subseteq$. Let $q \in \Delta(\MM_{n}^{\smash{k}})$ be such that $q_k = (1,0,\ldots,0)$. Let $T \in \overline{\G \cdot \IMM{n}{k}}$ be such that $\spec \mu(T) = q$.
    Then the $k$-th flattening of $T$ has rank 1, so $T = S \ot w$ for some $(k-1)$-tensor $S$ and $w \in \C^{\smash{n^2}}$.
    Let $T_i = (A_1^{\smash{(i)}} \ot \ldots \ot A_{k}^{\smash{(i)}}) \IMM{n}{k}$ be restrictions such that $\lim_{i\to\infty} T_i = T$. 
    Then also $(I \ot \cdots \ot I \ot ww^*) T_i$ converges to $T$.
    Because $ww^* A_{k}^{\smash{(i)}}$ has rank 1, we may apply \cref{lemma:rank 1 restriction of MM restriction} for each $i$.
    We find that $T_i = S_i \otimes w_i$ with $S_i \leq \IMM{n}{k-1}$ and $w_i \in \C^{\smash{n^2}}$. 
    Therefore, 
    \begin{align*}
        \spec \mu(T_i) 
        = \bigl(\spec \mu(S_i)
        ,  (1,0,\ldots,0)\bigr) 
        \ \in\ \Delta(\IMM{n}{k-1}) \times \{(1,0,\ldots,0)\}
    \end{align*}
    for all $i$.
    Then $q = \lim_{i \to \infty} \spec \mu(T_i) \in \Delta(\IMM{n}{k-1}) \times \{(1,0,\ldots,0)\}$ because of closedness.

    $\supseteq$. %
    The tensor $\IMM{n}{k}$ restricts to $\IMM{n}{k-1} \ot e_1$ by applying to the $k$th factor of $\IMM{n}{k}$ the linear map that maps $e_{i,i}$ to~$\frac1n e_1$ for all~$i \in [n]$, and~$e_{i,j}$ to~$0$ if~$i \neq j$.
    Thus $\Delta(\IMM{n}{k}) \supseteq \Delta(\IMM{n}{k-1} \ot e_1)$.
    Note that $\Delta(\IMM{n}{k-1} \ot e_1) = \Delta(\IMM{n}{k-1}) \times \{(1,0,\ldots,0)\}$.
\end{proof}

\begin{proof}[Proof of \cref{cor:it-mat}]
    By \cref{theorem:matrix multiplication}, $p_c = ( u_2 \sep u_{c-1} \sep u_c ) \in \Delta(\unitk{c}{3})$ and $p_c \not\in\Delta(\IMM{n}{3})$.
    The tensor $\unitk{c}{k}$ restricts to $\unitk{c}{3} \otimes e_1^{\otimes k-3}$.
    Therefore, $( u_2 \sep u_{c-1} \sep u_c \sep u_1 \sep \cdots \sep u_1) \in \Delta(\unitk{c}{k})$.
    Suppose that $( u_2 \sep u_{c-1} \sep u_c \sep u_1 \sep \cdots \sep u_1) \in \Delta(\IMM{n}{k})$. Then $( u_2 \sep u_{c-1} \sep u_c) \in \Delta(\IMM{n}{3})$ by~\cref{lem:IMM polytope reduction}, which is a contradiction.
\end{proof}

\subsection{Border subrank of matrix multiplication}\label{subsec:border-subrank}

We now revisit the proof techniques of \cref{theorem:matrix multiplication} and push them to prove the border subrank upper bound $\bordersubrank\big(\MM_{n}\big) \leq \lceil \tfrac34 n^2\rceil$, \cref{theorem:subrank bound}.
The core idea is to show that the matrix multiplication tensor cannot degenerate to polynomial multiplication tensors $\polmul_{a,b}$ (\Cref{definition:polynomial multiplication tensors}) for certain choices of $a$ and $b$.
Since $\unit{a+b-1} \degengeq \polmul_{a,b}$, this prevents $\MM_{n} \degengeq \unit{a+b-1}$ by transitivity of degeneration. Hence $\bordersubrank\big(\MM_{n}\big) < a + b-1$.
For the optimal choices of $a$ and $b$ we obtain $a+b-1 = \lceil \tfrac34 n^2\rceil + 1$. %

\begin{proof}[Proof of \cref{theorem:subrank bound}
(using polynomial multiplication tensors)]

Let $a,b \in \N$.
Suppose that we have $b > n(n-\lfloor\sqrt{a-1}\rfloor)$.
Then we claim that $\MM_{n} \not\degengeq \unit{a+b-1}$.
Since $\unit{a+b-1} \geq \polmul_{a,b}$ (\cref{lem:ranksab}), it suffices to prove that $\MM_{n} \not\degengeq \polmul_{a,b}$.
Suppose~that $\MM_{n} \degengeq \polmul_{a,b}$.
Then by \cref{lem:ingr-1} we have  %
$\minrank(\polmul_{a,b}) \leq n (n- \floor{\sqrt{a-1}})$.
By \cref{lem:ingr-2},
$\minrank(\polmul_{a,b}) = b$,
so we find that $b \leq n(n-\lfloor\sqrt{a-1}\rfloor)$, which is a contradiction.

It remains to find $a,b\in \N$ such that we have $b > n(n-\lfloor\sqrt{a-1}\rfloor)$ and $a + b - 1 = \lceil \tfrac34n^2 \rceil+1$. Indeed, then $\MM_{n} \not\degengeq \unit{\lceil (3/4) n^2 \rceil+1}$, so $\bordersubrank(\MM_{n}) \leq \lceil \tfrac34 n^2 \rceil$.

Suppose that $n$ is even. Let $a = \tfrac14 n^2 + 1$ and $b = \tfrac12 n^2+1$.
Then $\sqrt{a-1} = \tfrac12 n = \lfloor \tfrac12 n \rfloor$, and we indeed have $b > \tfrac12 n^2 = n(n-\lfloor\sqrt{a-1}\rfloor)$
and $a+b-1-1 = \tfrac34n^2 = \lceil \tfrac34n^2 \rceil$.

Suppose that $n$ is odd. Let $a = \tfrac14 (n-1)^2 + 1$ and $b = \tfrac12 n(n+1)+1$.
Then $\sqrt{a-1} = \tfrac12 (n-1) = \lfloor \tfrac12 (n-1)\rfloor$, and we indeed have $b > \tfrac12 n^2 + \frac12 n = n(n-\lfloor\sqrt{a-1}\rfloor)$
and
$a+b-1-1 = \tfrac34n^2 + \tfrac14 = \lceil \tfrac34n^2 \rceil$. %
\end{proof}

\begin{remark}
We note that the tensor $\polmul_{a,b}$ is $\Sl$-semistable (extending \cref{corollary:S_c all semistable}).
This can be deduced using the \emph{semistability test} as given in \cite[Lemma 23]{christandl2023weightedSliceRank},
as adapted from \cite[Corollary~5.1]{kempf1978instability}.
It states that for any group $\Gamma$ and irreducible $\Gamma$-representations $V_1, V_2$ and $V_3$, whenever
 $S \in V_1 \ot V_2 \ot V_3$ is non-zero and invariant under the diagonal action of $\Gamma$, then $S$ is $\Sl$-semistable.
We consider the group $\Gamma = \SL_2$.
Then $\Gamma$ acts by basis transformations on the space $\C[x,y]_d$ of homogeneous degree $d$ polynomials in variables $x$ and $y$. This is an irreducible representation.
We map from $\C[x,y]_d$ to the univariate polymomials of degree at most $d$ by setting $y = 1$. In the other direction we homogenize the univariate polynomials in $x$ using $y$.
This identifies $\C[x,y]_d \cong \C^{d+1}$. Using this identification, $\polmul_{a,b}$ describes the bilinear multiplication map $\C[x,y]_{a-1} \times \C[x,y]_{b-1} \to \C[x,y]_{a+b-2}$.
This map is $\Gamma$-equivariant.
Hence $\polmul_{a,b} \in \C[x,y]_{a-1} \ot \C[x,y]_{b-1} \ot \C[x,y]_{a+b-2}^*$ is $\Gamma$-invariant. The semistability test then implies~$\polmul_{a,b}$ is $\Sl$-semistable.

However, unlike for $\polmul_c$ (\cref{corollary:S_c all semistable}), the tensor $\polmul_{a,b}$ for $a \geq 3$ is not necessarily the only $\Sl$-semistable tensor of its shape.
Indeed, using tensor scaling \cite{burgisser2018tensorScaling}, we observe numerically for~$n \in \{4, \ldots, 13\}$ and the optimal choices of~$a,b$ such that~$\MM_{n} \not\degengeq\polmul_{a,b}$ from the proof of \cref{theorem:subrank bound}, that $(u_a,u_b,u_{a+b-1})$ is in fact an element of $\Delta(\MM_{n})$.
By \cref{lemma:tensor semistability}, this means~$\MM_{n}$ must restrict to some semistable tensor of shape $a \times b \times (a+b-1)$ that is not equivalent to $\polmul_{a,b}$.
Thus this analysis does not directly lead to an extension of \cref{theorem:matrix multiplication}.
\end{remark}

\subsection{Asymptotic restriction does not imply moment polytope inclusion}\label{subsec:asymp-restr}
\label{subsection:asymptotic degeneration monotone}
We prove \cref{thm:moment-polytope-not-asymptotic-restriction-monotone} by giving two examples, \cref{example:asymp-restrict-but-no-polytope-inclusion-mm2-vs-4} and \cref{example:asymp-restrict-but-no-polytope-inclusion-Tdet-vs-3}.
The first example uses matrix multiplication tensors.
\begin{example}
    \label{example:asymp-restrict-but-no-polytope-inclusion-mm2-vs-4}
    We take $S = \unit{m^2}$ and $T = \MM_m$.
    Strassen~\cite[Equation 4.6]{strassen1988asymptotic} proved that $\unit{m^2} \asympleq \MM_m$.
    On the other hand, our \cref{theorem:matrix multiplication} gives
    $\Delta(\unit{m^2}) \not\subseteq \Delta(\MM_m)$.
\end{example}

The second example will use the unique non-zero skew-symmetric tensor in $\Lambda^3(\C^3) \subseteq \C^3 \ot \C^3 \ot \C^3$. We will need some ingredients to discuss it.
\begin{proposition}
    \label{prop:mm-ab1-res-deg}
    Let~$a,b \geq 1$.
    The orbit $\GL\cdot\MM_{a,1,b}$ is dense in~$\C^a \ot \C^b \ot \C^{ab}$ and equals the subset of $\Sl$-semistable tensors in $\C^a \ot \C^b \ot \C^{ab}$.
\end{proposition}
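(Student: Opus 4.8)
The plan is to repeat, essentially verbatim, the three-step argument used for \cref{corollary:S_c all semistable}, with $\MM_{a,1,b}$ playing the role that $\polmul_c$ played there. \emph{Step one}: show that the $\GL$-orbit of $\MM_{a,1,b}$ is dense in $\C^a \ot \C^b \ot \C^{ab}$. \emph{Step two}: show that the $\SL$-orbit of $\MM_{a,1,b}$ is closed, so that $\MM_{a,1,b}$ is $\Sl$-polystable and, being nonzero, $\Sl$-semistable. \emph{Step three}: invoke \cref{lemma:orbit border} to conclude that every tensor of $\overline{\GL\cdot\MM_{a,1,b}} \setminus (\GL\cdot\MM_{a,1,b})$ is $\Sl$-unstable. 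By step one this boundary is precisely the complement of the orbit, so every $\Sl$-semistable tensor of format $a\times b\times ab$ lies in $\GL\cdot\MM_{a,1,b}$. Conversely, every element of $\GL\cdot\MM_{a,1,b}$ is a nonzero scalar multiple of an element of $\SL\cdot\MM_{a,1,b}$ and hence $\Sl$-semistable, exactly as in the last lines of the proof of \cref{corollary:S_c all semistable}. Combining the two inclusions gives the proposition.

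For step one, observe that the third flattening $(\MM_{a,1,b})_3\colon \C^{ab}\to\C^a\ot\C^b$ is the identity map, once we identify $\C^{ab}$ with $\C^a\ot\C^b$ via the standard bases. Acting by $C\in\GL_{ab}$ on the third tensor factor transforms the third flattening $T'_3$ of a tensor $T'\in\C^a\ot\C^b\ot\C^{ab}$ by right-multiplication with $C^{\intercal}$ (in particular by an invertible matrix), and a tensor is determined by any one of its flattenings; hence, whenever $T'_3$ is invertible, acting by $\bigl((T'_3)^{-1}\bigr)^{\intercal}$ on the third factor (and the identity on the first two) turns $T'$ into $\MM_{a,1,b}$, so $T'\in\GL\cdot\MM_{a,1,b}$. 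The set $\{T' : \det(T'_3)\neq 0\}$ is a nonempty Zariski-open, hence dense, subset of the irreducible space $\C^a\ot\C^b\ot\C^{ab}$ and is contained in $\GL\cdot\MM_{a,1,b}$, so $\overline{\GL\cdot\MM_{a,1,b}} = \C^a\ot\C^b\ot\C^{ab}$. (Alternatively, a dimension count works: the stabilizer of $\MM_{a,1,b}$ in $\GL$ is $\{(A,B,(A^{-1})^{\intercal}\ot(B^{-1})^{\intercal}) : A\in\GL_a,\ B\in\GL_b\}$, of dimension $a^2+b^2$, so $\dim(\GL\cdot\MM_{a,1,b}) = (a^2+b^2+a^2b^2)-(a^2+b^2) = \dim(\C^a\ot\C^b\ot\C^{ab})$, forcing density.)

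For step two, set $T = \MM_{a,1,b} = \sum_{i\in[a]}\sum_{k\in[b]} e_i\ot e_k\ot(e_i\ot e_k)$ and compute the moment map directly from \cref{definition:moment map marginal}. A short computation gives $T_1^* T_1 = b\,I_a$, $T_2^* T_2 = a\,I_b$ and $T_3^* T_3 = I_{ab}$, so that $\norm{T}^2 = ab$ and $\mu(T) = (I_a/a,\ I_b/b,\ I_{ab}/(ab))$. By the Kempf--Ness theorem (\cref{theorem:tensor kempf-ness}), $T$ is $\Sl$-polystable, so $\SL\cdot T$ is closed; since $T\neq 0$ it is also $\Sl$-semistable. (Equivalently, $(u_a\sep u_b\sep u_{ab})$ is already the spectrum of $\mu(T)$, so semistability by itself would also follow from \cref{corollary:uniform marginals}; but step three needs the closed-orbit statement, not merely semistability.)

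The argument has no real obstacle, since it mirrors \cref{corollary:S_c all semistable} step for step. The only points needing a little care are getting the normalization of $\mu$ right in the evaluation of $\mu(\MM_{a,1,b})$ in step two, and checking in step one that the locus $\{\det(T'_3)\neq 0\}$ is genuinely dense — that is, that $\det(T'_3)$ is not identically zero, which it is not since it is nonzero at $T' = \MM_{a,1,b}$.
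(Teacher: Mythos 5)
Your proposal is correct and follows the same three-step structure as the paper's proof: density of the $\GL$-orbit, closedness of the $\SL$-orbit via the direct computation $\mu(\MM_{a,1,b}) = (I_a/a,\, I_b/b,\, I_{ab}/ab)$ and the Kempf--Ness theorem, and finally \cref{lemma:orbit border} to handle the boundary. The only cosmetic difference is in the density step, where you use the dense open locus of tensors with invertible third flattening, while the paper instead observes directly that every tensor of format $a\times b\times ab$ can be written as $(I_a\ot I_b\ot M)\MM_{a,1,b}$ for a suitable (not necessarily invertible) $M$, hence is a restriction and in particular a degeneration of $\MM_{a,1,b}$.
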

\begin{proof}
    Let $S \in \C^a \ot \C^b \ot \C^{ab}$.
    We may write $S = \sum_{i,k=1}^a \sum_{j,\ell=1}^b S_{i,j,(k,\ell)} e_i \ot e_j \ot e_{k,\ell}$ for some coefficients $S_{i,j,(k, \ell)}$.
    Let $M$ be the $ab \times ab$ matrix with coefficients $M_{(k,\ell), (i,j)} = S_{i,j,(k,\ell)}$. %
    Then
    $S = \sum_{i,j} e_i \ot e_j \ot M e_{i,j} = (I_a \ot I_b \ot M)\MM_{a,1,b}$.
    Therefore, $S$ is a restriction of~$\MM_{a,1,b}$. Thus~$S$ is also a degeneration of~$\MM_{a,1,b}$. We conclude that~$\overline{\G \cdot \MM_{a,1,b}} = \C^a \ot \C^b \ot \C^{ab}$.

    A straightforward computation shows that $\mu(\MM_{a,1,b}) = (I_a/a,I_b/b,I_{ab}/ab)$.
    By the Kempf--Ness theorem (\cref{theorem:tensor kempf-ness}), this means $\MM_{a,1,b}$ is semistable and has closed $\Sl$ orbit.
    In particular, all tensors in $\G \cdot \MM_{a,1,b}$ are semistable.
    Moreover, we may apply \cref{lemma:orbit border} to obtain that all tensors in the border of $\G \cdot \MM_{a,1,b}$ are unstable. Because this orbit is dense, these are all tensors outside the orbit, and we are done.
\end{proof}
\begin{corollary}
    $\Delta(\MM_{a,1,b}) = \kronpol{a}{b}{ab}$.
\end{corollary}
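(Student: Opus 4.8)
The plan is to read the claim off directly from \cref{prop:mm-ab1-res-deg} together with the definition of the moment polytope of a tensor. First I would record that $\MM_{a,1,b}$ lives in $\C^{a\cdot 1}\ot\C^{1\cdot b}\ot\C^{b\cdot a}=\C^a\ot\C^b\ot\C^{ab}$, so that the ambient space agrees with that of $\kronpol{a}{b}{ab}$. By \cref{prop:mm-ab1-res-deg} the orbit $\GL\cdot\MM_{a,1,b}$ is dense in $\C^a\ot\C^b\ot\C^{ab}$, and hence its (Euclidean, equivalently Zariski) closure is the whole space: $\overline{\GL\cdot\MM_{a,1,b}}=\C^a\ot\C^b\ot\C^{ab}$.

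It then only remains to unwind definitions: $\Delta(\MM_{a,1,b})=\Delta\bigl(\overline{\GL\cdot\MM_{a,1,b}}\bigr)=\Delta(\C^a\ot\C^b\ot\C^{ab})$, and the right-hand side is by definition the Kronecker polytope $\kronpol{a}{b}{ab}$. The one point worth a sentence is that \cref{definition:moment polytope} applies to $\C^a\ot\C^b\ot\C^{ab}$: it is a vector space, hence an irreducible algebraic variety, and it is visibly closed under $\GL$, so $\Delta(\C^a\ot\C^b\ot\C^{ab})$ is indeed a polytope by \cref{theorem:tensor moment polytopes nm}. I do not expect any obstacle here; all of the content sits in \cref{prop:mm-ab1-res-deg}, whose proof realizes every tensor of format $(a,b,ab)$ as a restriction of $\MM_{a,1,b}$ through a single matrix acting on the third leg.
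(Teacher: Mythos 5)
Your proof is correct and follows the paper's argument exactly: density of the orbit from \cref{prop:mm-ab1-res-deg} gives $\overline{\GL\cdot\MM_{a,1,b}}=\C^a\ot\C^b\ot\C^{ab}$, and then the definition of the moment polytope immediately yields the claim. The extra sentence verifying that the whole space satisfies the hypotheses of \cref{definition:moment polytope} is a harmless refinement, not a different route.
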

\begin{proof}
    \Cref{prop:mm-ab1-res-deg} implies that $\overline{\G \cdot \MM_{a,1,b}} = \C^a\ot\C^b\ot\C^{ab}$, after which the definition of $\Delta$ (\cref{definition:moment polytope}) directly implies the result.
\end{proof}

\begin{corollary}
    \label{cor:mm-ab1-point-inclusion}
    Let $a,b \geq 1$.
    Then for any tensor $T$
    the following are equivalent:
    \begin{enumerate}[label=\upshape(\arabic*)]
    \item $\MM_{a,1,b} \leq T$,
    \item $\MM_{a,1,b} \degenleq T$,
    \item $\Delta(\MM_{a,1,b}) \subseteq \Delta(T)$,
    \item $(\,u_a \sep u_b \sep u_{ab}\,) \in \Delta(T)$.
    \end{enumerate}
\end{corollary}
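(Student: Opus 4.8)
The plan is to prove the cycle of implications $(1) \Rightarrow (2) \Rightarrow (3) \Rightarrow (4) \Rightarrow (1)$, leaning on \cref{prop:mm-ab1-res-deg} and \cref{lemma:tensor semistability}. The first two steps are immediate: a restriction is in particular a degeneration, so $(1) \Rightarrow (2)$, and $\MM_{a,1,b} \degenleq T$ gives $\Delta(\MM_{a,1,b}) \subseteq \Delta(T)$ by \cref{proposition:degeneration monotone}, so $(2) \Rightarrow (3)$. For $(3) \Rightarrow (4)$ I would use the computation $\mu(\MM_{a,1,b}) = (I_a/a, I_b/b, I_{ab}/(ab))$ already carried out in the proof of \cref{prop:mm-ab1-res-deg}: it shows $\spec\mu(\MM_{a,1,b}) = (u_a \sep u_b \sep u_{ab})$, so this point lies in $\Delta(\MM_{a,1,b})$, and the inclusion in $(3)$ then places it in $\Delta(T)$.

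The real content is $(4) \Rightarrow (1)$, and it is the same move as in the proof of \cref{theorem:matrix multiplication}. Assuming $(u_a \sep u_b \sep u_{ab}) \in \Delta(T)$, apply \cref{lemma:tensor semistability} with target shape $(a,b,ab)$ to obtain an $\Sl$-semistable tensor $S \in \C^a \ot \C^b \ot \C^{ab}$ with $T \geq S$. By \cref{prop:mm-ab1-res-deg}, the $\Sl$-semistable tensors of shape $a \times b \times ab$ are exactly the orbit $\G \cdot \MM_{a,1,b}$, so $S \sim \MM_{a,1,b}$; in particular $S \geq \MM_{a,1,b}$, and transitivity of restriction yields $T \geq \MM_{a,1,b}$, which is statement $(1)$.

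I do not expect a genuine obstacle here, since all the hard work is packaged in the cited results; the only thing to keep track of is that $T$ may live in a larger space than $\C^a \ot \C^b \ot \C^{ab}$, so the relations $\geq$, $\degenleq$ and the membership $(u_a \sep u_b \sep u_{ab}) \in \Delta(T)$ are all to be read up to padding with zeros as in \cref{remark:tensor moment polytope embedding}. This is precisely the generality in which \cref{lemma:tensor semistability} is already stated (with $a \leq a'$, $b \leq b'$, $ab \leq c'$), so the argument goes through without modification.
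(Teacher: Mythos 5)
Your proof is correct and takes essentially the same approach as the paper, which simply cites \cref{prop:mm-ab1-res-deg} and \cref{lemma:tensor semistability} for the whole chain; you have just unpacked what that combination means, with the nontrivial step $(4)\Rightarrow(1)$ handled exactly as the paper intends (semistability + uniqueness of the semistable orbit in $\C^a\ot\C^b\ot\C^{ab}$). Your remark about reading everything up to padding with zeros via \cref{remark:tensor moment polytope embedding} is the right thing to keep in mind and matches the generality in which \cref{lemma:tensor semistability} is stated.
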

\begin{proof}
  This follows from combining~\cref{prop:mm-ab1-res-deg} with \cref{lemma:tensor semistability}.
\end{proof}

\begin{example}
    \label{example:asymp-restrict-but-no-polytope-inclusion-Tdet-vs-3}
    Let $S = \MM_{1,1,3}$ and $T = e_1 \wedge e_2 \wedge e_3 = e_{1,2,3} - e_{1,3,2} + e_{2,3,1} - e_{2,1,3} + e_{3,1,2} - e_{3,2,1}$, where we write $e_{i,j,k} = e_i \ot e_j \ot e_k$.

    We claim $\Delta(\MM_{1,1,3}) \not\subseteq \Delta(T)$.
    Let $\maxrank(T)$ be the largest rank of any element in the matrix subspace $\{(\beta \otimes I \otimes I) \cdot T \mid \beta \in \C^{1\times n}\}$. Then $\maxrank(T)$ equals the largest $r$ such that $T \geq \MM_{1,1,r}$.
    We prove $\maxrank(T) = 2$.
    Indeed, %
    viewed as a matrix,
    \begin{equation*}
        (\beta \ot I \ot I) \cdot T
        = \begin{bmatrix}
            0        & \beta_3   & -\beta_2 \\
            -\beta_3 & 0         & \beta_1 \\
            \beta_2 & - \beta_1 & 0
        \end{bmatrix}.
    \end{equation*}
    The determinant of this matrix equals 0 for every $\beta$.
    Therefore, $\maxrank(T) \leq 2$.
    It follows that $T \not\geq \MM_{1,1,3}$ and thus $\Delta(\MM_{1,1,3}) \not\subseteq \Delta(T)$ by \cref{cor:mm-ab1-point-inclusion}.

    We claim $T \gtrsim \unit{3}$. Then $T \gtrsim \MM_{1,1,3}$ since $\unit{3} \geq \MM_{1,1,3}$.
    The inequality $T \gtrsim \unit{3}$ follows from the characterization of asymptotic subrank for tight tensors of Strassen \cite[Lemma~5.1, Proposition~5.4]{strassen1991} (see also
    \cite[Theorem~4.4, Corollary~4.5]{christandlUniversalPointsAsymptotic2021}).
    For any tensor $S \in \C^{n_1} \ot \C^{n_2} \ot \C^{n_3}$ the support $\supp(S)$
    is called tight if there exist injective maps $f_\ell \colon [n_\ell] \to \Z$ such that $f_1(i) + f_2(j) + f_3(k) = 0$ for every $(i,j,k) \in \supp(S)$.
    The characterization states: Suppose $\supp(S)$ is tight and let $r = \max_p \min\{ 2^{H(p_1)}, 2^{H(p_2)}, 2^{H(p_3)}\}$, where $p$ goes over all probability distributions on $\supp(T)$, and $p_1, p_2, p_3$ denote its marginal distributions. Then $S^{\boxtimes n} \geq \unit{r^{n - o(n)}}$.
    Note that $\supp(T)$ is tight by taking $f_\ell(1) = f_\ell(2) = 1$ and $f_\ell(3) = -2$.
    Choosing $p(i,j,k) = 1/6$ for all $(i,j,k) \in \supp(T)$ gives $r \geq 3$.
    We conclude $T \gtrsim \<3>$.
\end{example}

\paragraph*{Acknowledgements}

MvdB, VL, and MW acknowledge support by the European Research Council (ERC Grant Agreement No.~101040907).
MvdB also acknowledges financial support by the Dutch National Growth Fund (NGF), as part of the Quantum Delta NL visitor programme.
MC and HN acknowledge financial support from the European Research Council (ERC Grant Agreement No.~818761), VILLUM FONDEN via the QMATH Centre of Excellence (Grant No.~10059) and the Novo Nordisk Foundation (grant NNF20OC0059939 `Quantum for Life'). MC also thanks the National Center for Competence in Research SwissMAP of the Swiss National Science Foundation and the Section of Mathematics at the University of Geneva for their hospitality. Part of this work was completed while MC was Turing Chair for Quantum Software, associated to the QuSoft research center in Amsterdam, acknowledging financial support by the Dutch National Growth Fund (NGF), as part of the Quantum Delta NL visitor programme.
HN also acknowledges support by the European Union via a ERC grant (QInteract, Grant No.~101078107).
MW also acknowledges the Deutsche Forschungsgemeinschaft (DFG, German Research Foundation) under Germany's Excellence Strategy - EXC\ 2092\ CASA - 390781972, the BMBF (QuBRA, 13N16135; QuSol, 13N17173) and the Dutch Research Council (NWO grant OCENW.KLEIN.267).
JZ was supported by NWO Veni grant VI.Veni.212.284.
Views and opinions expressed are those of the author(s) only and do not necessarily reflect those of the European Union or the European Research Council Executive Agency. Neither the European Union nor the granting authority can be held responsible for them.

\bibliographystyle{alphaurl}
\bibliography{references}

\end{document}